\newcommand{\triplet}{R} %{\tau}
\title{Constructing a Consensus Phylogeny from a Leaf-Removal Distance}
\author{Cedric Chauve\inst{1}, Mark Jones\inst{2}, Manuel Lafond\inst{3}, C\'eline Scornavacca \inst{4}, Mathias Weller\inst{5}}
\institute{
  Department of Mathematics, Simon Fraser University, Burnaby, Canada. 
  \email{cedric.chauve@sfu.ca}
  \and
  Delft Institute of Applied Mathematics,
  Delft University of Technology,
  P.O. Box 5, 2600 AA, Delft, the Netherlands.
 \email{M.E.L.Jones@tudelft.nl}
    \and Department of Mathematics and Statistics, University of Ottawa, Ottawa, Canada.  
  \email{mlafond2@uOttawa.ca}
  \and
  Institut des Sciences de l'Evolution, 
 Universit\'{e} de Montpellier, CNRS, IRD, EPHE, Montpellier - France.
 \email{Celine.Scornavacca@umontpellier.fr}
 \and 
  Laboratoire d'Informatique, de Robotique et de
 Micro\'{e}lectronique de Montpellier, IBC, Montpellier - France.
 \email{mathias.weller@lirmm.fr}
  }
\newcommand{\mastrl}{\texttt{AST-LR}}
\newcommand{\mastec}{\texttt{AST-EC}}
\newcommand{\mastrld}{\texttt{AST-LR-d}}
\newcommand{\lrsupertree}{\texttt{LR-Consensus}}
\newcommand{\mast}{\texttt{MAST}}
\newcommand{\minrti}{\texttt{MinRTI}}
\newcommand{\mastrlval}[1]{AST_{LR}(#1)}
\newcommand{\minrtival}[1]{MINRTI(#1)}
\newcommand{\lblset}{\mathcal{X}}
\newcommand{\none}{\perp}
\renewcommand{\l}{\ell}
\newcommand{\T}{\mathcal{T}}
\newcommand{\C}{\mathcal{C}}
\newcommand{\R}{\mathcal{R}}
\renewcommand{\L}{\mathcal{L}}
\newcommand{\lca}{\textsc{lca}}
\begin{document}

%% -------------------------------------------------
\maketitle

\begin{abstract}
Understanding the evolution of a set of genes or species is a fundamental problem in evolutionary biology. The problem we study here takes as input a set of trees describing {possibly discordant} evolutionary scenarios for a given set of genes or species, and aims at finding a single tree that minimizes the leaf-removal distance to the input trees.
% that agrees most with the input trees, by removing as few leaves as \blue{possible} from them. In other words, we aim to find the minimum number of leaf-removals required to modify the input trees until the resulting trees all agree. 
This problem is a specific instance of the general consensus/supertree problem, widely used  to combine or summarize discordant evolutionary trees. The problem we introduce is specifically tailored to address the case of discrepancies between the input trees due to the misplacement of individual taxa. Most supertree or consensus tree problems are computationally intractable, and we show that the problem we introduce is also NP-hard. We provide tractability results in form of a 2-approximation algorithm.
%and a parameterized algorithm with respect to the number of removed leaves. 
We also introduce a variant that minimizes the maximum number $d$ of leaves that are removed from any input tree, and provide a parameterized algorithm for this problem with parameter $d$. 
%Finally, we describe preliminary experiments that show that this algorithm compares well with other existing supertree methods.  %The implementation of the algorithm is available at \url{https://github.com/igel-kun/LR\_consensus}.
\end{abstract}

\setcounter{footnote}{0}

%% -------------------------------------------------

\section{Introduction}\label{sec:introduction}

% Paragraph1 1: Generic problem, one tree from many trees
In the present paper, we consider a very generic computational biology problem: given a collection of trees representing, possibly discordant, evolutionary scenarios for a set of biological entities (genes or species -- also called \emph{taxa} in the following), we want to compute a single tree that agrees as much as possible with the input trees. Several questions in computational biology can be phrased in this generic framework. For example, % following advances in probabilistic phylogenetic inference methods, %~\cite{10.1080/01621459.2015.1054487}, it is now possible, 
for a given set of homologous gene sequences that have been aligned, one can sample \emph{evolutionary trees} for this gene family according to a well defined posterior distribution and then ask how this collection of trees can be combined into a single gene tree, a problem known as \emph{tree amalgamation}~\cite{DBLP:journals/bioinformatics/ScornavaccaJS15}. In phylogenomics, one aims at \emph{inferring a species tree} from a collection of input trees obtained from whole-genome sequence data. A first approach considers gene families and proceeds by computing individual \emph{gene trees} from a large set of gene families, and then combining this collection of gene trees into a unique species tree for the given set of taxa; this requires handling the discordant signal observed in the gene trees due to evolutionary processes such as gene duplication and loss~\cite{10.1073/pnas.1412770112}, lateral gene transfer~\cite{10.1073/pnas.1202997109}, %hybridization and introgression~\cite{10.1126/science.1258524}, 
or incomplete lineage sorting~\cite{10.1093/sysbio/syw082}. Another approach concatenates the sequence data into a single large multiple sequence alignment, that is then partitioned into overlapping subsets of taxa for which partial evolutionary trees are computed, and a unique species tree is then inferred by combining the resulting collection of partial trees~\cite{10.1126/science.1253451}. 

% Paragraph 2: various flavours of the many to one trees problem
%Algorithmic questions related to the problems discussed above can be characterized in terms of (1) the combinatorial properties of the collection of input trees\footnote{All trees we consider here are uniquely leaf-labeled, rooted (\textit{i.e.} are out-trees) and binary; see next section for formal definitions.}, (2) the combinatorial properties of the computed (output) tree, (3) the optimization criterion that defines a good output tree. 
For example, the  Maximum Agreement Subtree (MAST) problem considers a collection of input trees\footnote{All trees we consider here are uniquely leaf-labeled, rooted (\textit{i.e.} are out-trees) and binary; see next section for formal definitions.}, all having the same leaf labels and looks for a tree of maximum size (number of leaves), which agrees with %-- \textit{i.e.} it is displayed as a subtree by -- 
each of the input trees. This problem is tractable for trees with bounded degree but NP-hard generally~\cite{Amir1997}. The MAST problem is a \emph{consensus problem}, because the input trees share the same leaf labels set, and the output tree is called a   \emph{consensus} tree. In the \textit{supertree framework}, the input trees might not all have identical label sets, but the output is a tree on the whole label set, called a \emph{supertree}. For example, in the Robinson-Foulds (RF) supertree problem, the goal is to find a supertree that minimizes the sum of the RF-distances to the individual input trees~\cite{10.1093/bioinformatics/btw600}. %%% Another approach to species tree inference, Gene Tree Parsimony (GTP) considers a collection of gene trees (potentially with duplicated leaf-labels) and seeks a species tree that minimizes the sum of the reconciliation costs to the input trees~\cite{10.1093/sysbio/syu128}. 
% NEXT TWO SENTENCES CUT FOR SPACE
%The previous problems aim at inferring species trees. 
%An example of a supertree problem motivated by tree amalgamation is the Minimum SuperGeneTree problem that takes a collection of partial gene trees (whose label sets are not necessarily identical) that are assumed to all agree with  at least one supertree, and seeks such a supertree that minimizes the ``reconciliation cost'' with a known species tree~\cite{journals/bmcbioinformatics/Lafond15}. 
One way to compute consensus trees and supertrees that is closely related to our work is to modify the collection of input trees minimally in such a way that the resulting modified trees all agree. For example, in the MAST problem, modifications of the input trees consist in removing a minimum number of taxa from the whole label set,  while in the Agreement Supertree by Edge Contraction (AST-EC) problem, one is asked to contract a minimum number of edges of the input trees such that the resulting (possibly non-binary) trees all agree with at least one supertree~\cite{DBLP:journals/siamcomp/Fernandez-BacaG15}; in the case where the input trees are all triplets (rooted trees on three leaves), this supertree problem is known as the Minimum Rooted Triplets Inconsistency problem~\cite{journals/dam/byrka2010}. The SPR Supertree problem considers a similar problem where the input trees can be modified with the Subtree-Prune-and-Regraft (SPR) operator~\cite{whidden2014supertrees}.

% Paragraph 3: Introducing our problem
\looseness=-1
In the present work, we introduce a new consensus problem, called \lrsupertree{}. Given a collection of input trees having the same leaf labels set, we want to remove a minimum number of leaves -- an operation called a Leaf-Removal (LR) -- from the input trees such that the resulting pruned trees all agree. Alternatively, this can be stated as finding a consensus tree that minimizes the cumulated \textit{leaf-removal distance} to the collection of input trees. This problem also applies to tree amalgamation  and to species tree inference from one-to-one orthologous gene families, where the LR operation aims at correcting the misplacement of a single taxon in an input tree.

%Compared to the SPR Supertree problem, an LR is in fact an SPR operation with a subtree of size one. However, computing the SPR distance between two trees is NP-hard, while computing the LR distance is  equivalent to computing the MAST cost and is thus tractable. This suggests that the \lrsupertree{} problem might be easier to solve than the SPR Supertree problem, and we provide indeed several tractability results.  
%Last, the LR operation offers an interesting alternative to the Edge Contraction operation used in the AST-EC problem\cite{DBLP:journals/siamcomp/Fernandez-BacaG15}. Indeed, if a single leaf is wrongly placed in one of the input trees, the number of edge contractions to place it properly is linear  in the length of the path to its correct location, while the LR cost of correcting this is unitary. So if discrepancies in the collection of input trees are mostly due to single-taxon misplacements, instead of clades misplacements, the problem introduced in this paper is appropriate to reconstruct the true tree. From this point of view, \lrsupertree{} is more flexible than MAST as correcting a misplaced leaf does not require to remove this leave from all input trees.

% Paragraph 4: Overview of results
In the next section, we formally define the problems we consider, and how they relate to other supertree problems. Next we show that the \lrsupertree{} problem is  NP-hard and that in some instances, a large number of leaves need to be removed to lead to a consensus tree. We then provide a 2-approximation algorithm\footnote{In a previous version of this work, we claimed a fixed-parameter tractable result for \lrsupertree, which turned out to be inaccurate.  See Section~\ref{sec:fpt} for details.}.
%, and show that the problem is fixed-parameter tractable (FPT) when parameterized by the total number of LR. 
%However, these FPT algorithms have impractical time complexity, and thus, to answer the need for practical algorithms, 
We then introduce a variant of the \lrsupertree{} problem, where we ask if a consensus tree can be obtained by removing at most $d$ leaves from each input tree, and describe a fixed-parameter tractable (FPT) algorithm with parameter $d$.
%The implementation of the algorithm is available at \url{https://github.com/igel-kun/LR\_consensus}.
%and present preliminary experiments on simulated data to compare this algorithm with existing supertree algorithms. 

%% ----------------------------------------------------------------------------------
%% --------------------------------------------------------------------------------------
\section{Preliminary notions and problems statement}

%% --------------------------------------------------------------------------------------
\paragraph{Trees.} All trees in the rest of the document are assumed to be rooted and binary.
If $T$ is a tree, we denote its root by $r(T)$ and its  leaf set by $\L(T)$.  Each leaf is labeled by a distinct element 
from a \emph{label set} $\lblset$, and we denote by $\lblset(T)$ the set of labels of the leaves of $T$.
We may sometimes use $\L(T)$ and $\lblset(T)$ interchangeably.
For some $X \subseteq \lblset$, we denote by $lca_T(X)$ the \emph{least common ancestor} of $X$ in $T$. %, \textit{i.e.} the node that is an ancestor of every leaf labeled by an element of $X$ and is the farthest from the root among such nodes.
The subtree rooted at a node $u \in V(T)$
is denoted $T_u$ and we may write $\L_T(u)$ for $\L(T_u)$.
If $T_1$ and $T_2$ are two trees and $e$ is an edge 
of $T_1$, grafting $T_2$ on $e$ consists in subdividing $e$ and letting the resulting degree $2$ node become the parent 
of $r(T_2)$.  Grafting $T_2$ above $T_1$ consists in creating a new node $r$, then letting $r$ become the parent of $r(T_1)$ and $r(T_2)$.  Grafting $T_2$ on $T_1$ means grafting $T_2$ either on an edge of $T_1$ or above $T_1$.

%% --------------------------------------------------------------------------------------
\paragraph{The leaf removal operation.}
\looseness=-1
For a subset $L \subseteq \lblset$, we denote by $T - L$ the tree obtained from $T$ by removing every leaf labeled by $L$, contracting the resulting non-root vertices of degree two, and repeatedly deleting the resulting root vertex while it has degree one.
The \emph{restriction} $T|_L$ of $T$ to $L$ is the tree $T - (\lblset \setminus L)$, \textit{i.e.} the tree obtained by removing every leaf \emph{not} in $L$.
A \emph{triplet} is a rooted tree on $3$ leaves.  We denote a triplet $\triplet$ with leaf set $\{a,b,c\}$
by $ab|c$ if $c$ is the leaf that is a direct child of the root (the parent of $a$ and $b$ being its other child).
We say $\triplet = ab|c$ is a triplet of a tree $T$ if $T|_{\{a,b,c\}} = \triplet$.
We denote $tr(T) = \{ab|c : ab|c$ is a triplet of $T\}$.

We define a \emph{distance function} $d_{LR}$ between 
two trees $T_1$ and $T_2$ on the same label set $\lblset$ consisting in the minimum number of labels to remove from $\lblset$ so that the two trees are equal.  
That is, 
$$d_{LR}(T_1, T_2) = \min \{ |X| : X \subseteq \lblset \mbox{ and } T_1 - X = T_2 - X \}$$
Note that $d_{LR}$ is closely related to the Maximum Agreement Subtree (MAST) between two trees on the same label set $\lblset$, which consists in a subset $X' \subseteq \lblset$ of maximum size such that 
$T_1|_{X'} = T_2|_{X'}$:  $d_{LR}(T_1, T_2) = |\lblset| - |X'|$.  The MAST of two binary trees on the same label set can be computed in 
time $O(n \log n)$, where $n = |\lblset|$~\cite{DBLP:journals/siamcomp/ColeFHPT00}, and so $d_{LR}$ can be found within the {same} time complexity.

%% --------------------------------------------------------------------------------------
\paragraph{Problem statements.}
In this paper, we are interested in finding a tree $T$ on $\lblset$  
minimizing the sum of $d_{LR}$ distances to a given set of input trees.

\vspace{1mm}

\noindent \lrsupertree{} \\
\noindent {\bf Given}: a set of trees $\T = \{T_1, \ldots, T_t\}$ with  $\lblset(T_1) = \ldots = \lblset(T_t) = \lblset$. \\
\noindent {\bf Find}:  a tree $T$ on  label set $\lblset$ that minimizes $\sum_{T_i \in \T} d_{LR}(T, T_i)$.\\[-2mm]

We can reformulate the \lrsupertree{} problem as the problem 
of removing a minimum number of leaves from the input trees
so that they are \emph{compatible}.  Although the equivalence between both formulations is obvious, the later formulation will often be more convenient.  We need to introduce more definitions in order to establish this equivalence.  

A set of trees $\T = \{T_1, \ldots, T_t\}$ is called \emph{compatible} if there is a tree
$T$ such that $\lblset(T) = \bigcup_{T_i \in \T}\lblset(T_i)$ and $T|_{\lblset(T_i)} = T_i$ for every $i \in [t]$.  In this case, we say that $T$ \emph{displays} $\T$.
A list $\C = (\lblset_1, \ldots, \lblset_t)$ of subsets of $\lblset$ is a
\emph{leaf-disagreement} for $\T$ 
if $\{T_1 - \lblset_1, \ldots, T_t - \lblset_t\}$ is compatible.  The \emph{size} of $\C$ is $\sum_{i \in  [t]}|\lblset_i|$.
We denote by $\mastrlval{\T}$ the minimum size of a leaf-disagreement for $\T$, 
and may sometimes write $\mastrlval{T_1, \ldots, T_t}$ instead of $\mastrlval{\T}$.
A subset $\lblset' \subseteq \lblset$ of labels is a \emph{label-disagreement}
for $\T$ if $\{T_1 - \lblset', \ldots, T_t - \lblset'\}$ is compatible.
%, and its size is $|\lblset'|$.
Note that, if $\T = \{T_1, T_2\}$, then the minimum
size of a label-disagreement for $\T$ is $d_{LR}(T_1, T_2)$.
We may now define the \mastrl{} problem
{(see Figure~\ref{fig:example} for an example)}.

\vspace{2mm}

\noindent \texttt{Agreement Subtrees by Leaf-Removals} (\mastrl{}) \\
\noindent {\bf Given}: a set of trees $\T = \{T_1, \ldots, T_t\}$ with  $\lblset(T_1) = \ldots = \lblset(T_t) = \lblset$. \\
\noindent {\bf Find}:  a leaf-disagreement $\C$ for $\T$ of minimum size.\\

% We now show that the \lrsupertree{} and \mastrl{} problems are equivalent,
% in the sense that given the same set of trees $\T$, a solution to \lrsupertree{}  can be transformed into a solution to \mastrl{} in polynomial time, and vice-versa.  

\vspace*{-5mm}

\begin{figure*}[h]
\centering
\includegraphics[width=\textwidth]
{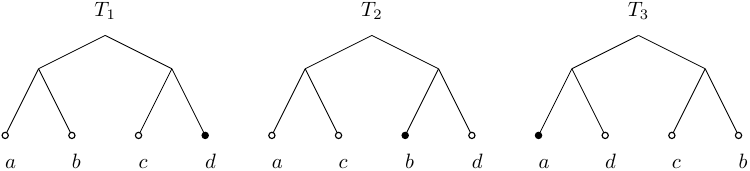}
% \vspace*{-3mm}
\caption{Example instance $\T = \{T_1, T_2, T_3\}$ of \mastrl{} with label set $\lblset = \{a,b,c,d\}$. The list $(\lblset_1 = \{d\}, \lblset_2 = \{b\}, \lblset_3 = \{a\})$  is a leaf-disagreement for $\T$ of size $3$.}
\label{fig:example}
\end{figure*}

\vspace*{-7mm}

\begin{lemma}\label{lem:equiv-problems}
Let $\T = \{T_1, \ldots, T_t\}$ be a set of trees on the same  label set {$\lblset$}, with $n = |\lblset|$.
Given a supertree $T$ such that $v := \sum_{T_i \in \T}d_{LR}(T, T_i)$, one can compute in time $O(t n \log(n))$ a leaf-disagreement $\C$ of size at most $v$.
Conversely, given a leaf-disagreement $\C$ for $\T$ of size $v$, one can compute in time $O(t n \log^2 (tn))$ a supertree $T$ such that 
$\sum_{T_i \in \T}d_{LR}(T, T_i) \leq v$.
\end{lemma}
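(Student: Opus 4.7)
The plan is to prove the two directions separately, each by a direct construction that inherits its running time from a known algorithmic primitive (MAST in one direction, compatibility-testing/BUILD in the other).

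For the forward direction, suppose a supertree $T$ with $v = \sum_i d_{LR}(T,T_i)$ is given. For every input tree $T_i$, I would invoke the $O(n\log n)$ MAST algorithm of Cole et al.\ on the pair $(T,T_i)$; this yields a label set $\lblset_i \subseteq \lblset$ of size exactly $d_{LR}(T,T_i)$ such that $T - \lblset_i = T_i - \lblset_i$. The candidate leaf-disagreement is $\C = (\lblset_1,\ldots,\lblset_t)$, with size $\sum_i |\lblset_i| = v$. To verify that $\C$ is indeed a leaf-disagreement, let $\lblset^* = \bigcup_i (\lblset \setminus \lblset_i)$ and take $T^* := T|_{\lblset^*}$; then for every $i$, using $\lblset \setminus \lblset_i \subseteq \lblset^*$, we have $T^*|_{\lblset \setminus \lblset_i} = T|_{\lblset \setminus \lblset_i} = T - \lblset_i = T_i - \lblset_i$, so $T^*$ displays $\{T_1-\lblset_1,\ldots,T_t-\lblset_t\}$. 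The running time is $t$ MAST computations of cost $O(n \log n)$ each, giving $O(tn\log n)$ overall.

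For the converse, suppose a leaf-disagreement $\C = (\lblset_1,\ldots,\lblset_t)$ of size $v$ is given. By definition, $\T' := \{T_1 - \lblset_1, \ldots, T_t - \lblset_t\}$ is compatible. I would run an efficient supertree compatibility algorithm (a BUILD-style procedure implementing Aho--Sagiv--Szymanski--Ullman with the appropriate data-structural speed-ups) on $\T'$ to obtain a tree $T^*$ on label set $\bigcup_i (\lblset \setminus \lblset_i) = \lblset \setminus \bigcap_i \lblset_i$ that displays $\T'$; this is where the $O(tn\log^2(tn))$ factor comes from. To turn $T^*$ into a tree $T$ on all of $\lblset$, I graft the leaves corresponding to the residual labels in $\bigcap_i \lblset_i$ arbitrarily onto $T^*$ (e.g.\ one after another, each grafted above the current root). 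Since grafting a label not in $\lblset \setminus \lblset_i$ does not affect the restriction to $\lblset \setminus \lblset_i$, we have $T|_{\lblset \setminus \lblset_i} = T^*|_{\lblset \setminus \lblset_i} = T_i - \lblset_i$, i.e.\ $T - \lblset_i = T_i - \lblset_i$, hence $d_{LR}(T,T_i) \le |\lblset_i|$. Summing over $i$ gives $\sum_i d_{LR}(T,T_i) \le v$, as required.

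The only genuinely non-trivial step is the second direction, and the difficulty is essentially packaging: one must be careful that grafting leaves from $\bigcap_i \lblset_i$ does not accidentally perturb the restrictions that witness displayability, and one must quote (rather than reprove) an $O(tn\log^2(tn))$-time compatibility algorithm that actually outputs a compatible supertree, not merely decides its existence. Both points are straightforward given the literature cited, so I would present them as brief observations rather than a full development.
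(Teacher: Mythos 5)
Your proof is correct and follows essentially the same route as the paper's: the forward direction applies the $O(n\log n)$ MAST computation to each pair $(T,T_i)$, and the converse invokes the $O(tn\log^2(tn))$ compatibility/supertree algorithm on the pruned trees. You are in fact slightly more careful than the paper on two points it glosses over --- exhibiting an explicit displaying tree $T|_{\lblset^*}$ to certify that $(\lblset_1,\ldots,\lblset_t)$ is a leaf-disagreement, and grafting the labels of $\bigcap_i \lblset_i$ back so that the output supertree lives on all of $\lblset$ --- but the argument is the same.
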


%\begin{proof}
%In the first direction, for each $T_i \in \T$, there is 
%a set $X_i \subseteq \lblset$ of size $d_{LR}(T, T_i)$ such that $T_i - X_i = T - X_i$.  Moreover, $X_i$ can be found in time $O(n \log n)$.  Thus $(X_1, \ldots, X_t)$
%is a leaf-disagreement of the desired size and can be found in time $O(t n \log n)$.
%Conversely, let $\C = (X_1, \ldots, X_t)$ be a leaf-disagreement of size $v$.  
%As $\T' = \{T_1 - X_1, \ldots, T_t - X_t\}$ is compatible, 
%there is a tree $T$ that displays $\T'$, and it is easy to see that the sum of distances between $T$ and $\T'$
%is at most the size of $\C$.  As for the complexity, 
%it is shown in~\cite{deng_et_al:LIPIcs:2016:6088} how to compute in time $O(tn \log^2 (tn))$, given a set of trees $\T'$, a tree 
%$T$ displaying $\T'$ if one exists.
%\qed
%\end{proof}

From Lemma~\ref{lem:equiv-problems}\footnote{All missing proofs are provided in Appendix. % and at
%\url{https://arxiv.org/abs/1705.05295}
}
% in \cite{CJLSW2017Arxiv}},  
both problems share the same optimality value, 
the NP-hardness of one implies the hardness of the other and 
approximating one problem within a factor $c$ implies that the other 
problem can be approximated within a factor $c$.
We conclude this subsection with the introduction of a parameterized  variant of the \mastrl{} problem.

\vspace{2mm}

\noindent \textbf{\textsc{\mastrld{}}} \\
\noindent {\bf Input}: a set of  
trees $\T = \{T_1, \ldots, T_t\}$ with 
$\L(T_1) = \ldots = \L(T_t) = \lblset$, and an integer $d$.\\
\noindent {\bf Question}: Are there $\lblset_1, \ldots, \lblset_t \subseteq \lblset$ such that $|\lblset_i| \leq d$ for each $i \in [t]$, and $\{T_1 - \lblset_1, \ldots, T_t - \lblset_t\}$ is compatible?\\

\vspace{-2mm}
\looseness=-1
%The $\mastrld{}$ problem is equivalent to asking whether there is a tree $T^*$ such that $d_{LR}(T_i, T^*) \leq d$ for each $i \in [t]$.  
We call a tree $T^*$ a \emph{solution} to the $\mastrld{}$ instance if $d_{LR}(T_i, T^*) \leq d$ for each $i \in [t]$.
%The $\mastrld{}$ problem is equivalent to asking whether there is a tree $T^*$ such that $d_{LR}(T_i, T^*) \leq d$ for each $i \in [t]$.  We call such a tree $T^*$ a \emph{solution} to the $\mastrld{}$ instance.

% We end this section by considering the case where 
%  $\T = \{T_1, T_2\}$ consists only of two trees.
%  %, which is not difficult to handle, as the next Lemma shows.  The proof is straightforward and is left in the Appendix. 

% \begin{lemma}\label{lem:two-trees}
% Let $T_1, T_2$ be two trees on the same  label set $\lblset$.
% Then $\mastrlval{T_1, T_2} = d_{LR}(T_1, T_2)$. 
% Moreover, every optimal leaf-disagreement $\C = (\lblset_1', \lblset_2')$ for $T_1$ and $T_2$ can be obtained 
% in the following manner: 
% for every label-disagreement $\lblset'$ of size $d_{LR}(T_1, T_2)$, partition $\lblset'$ into $\lblset'_1, \lblset'_2$.
% \end{lemma}

% Therefore, any optimal label-disagreement $\lblset'$ can be turned into an optimal leaf-disa\-greement, which is convenient as $\lblset'$ can be found in polynomial time.
% We will make heavy use of this property later on.
% Note that the same type of equivalence does not hold when $3$ or more trees are given, \textit{i.e.} computing a MAST of three trees does not 
% necessarily yield a leaf-disagreement of minimum size.
% \blue{TODO: Give an example.}

%% --------------------------------------------------------------------------------------

\vspace{-2mm}
\paragraph{Relation to other supertree/consensus tree problems.}
\looseness=-1
The most widely studied supertree problem based on modifying the input trees is the SPR Supertree problem, where arbitrarily large subtrees can be moved in the input trees to make them all agree (see~\cite{whidden2014supertrees} and references there).
The interest of this problem is that the SPR operation is very general, modelling lateral gene transfer and introgression. The LR operation we introduce is a limited SPR, where the displaced subtree is composed of a single leaf. An alternative to the SPR operation to move subtrees within a tree is the Edge Contraction (EC) operation, that contracts an edge of an input tree, thus increasing the degree of the parent node.
This operation allows correcting the local misplacement of a full subtree.
\mastec{} is NP-complete but can be solved in $O((2t)^ptn^2)$~time where $p$ is the number of required EC operations~\cite{DBLP:journals/siamcomp/Fernandez-BacaG15}.
%, which implies a high complexity if $t$ or $p$ is large.

Compared to the two problems described above, an LR models a very specific type of error in evolutionary trees, that is the misplacement of a single taxon (a single leaf) in one of the input trees. This error occurs frequently in reconstructing evolutionary trees, and can be  caused for example by some evolutionary process specific to the corresponding input tree (recent incomplete lineage sorting, or recent lateral transfer for example). Conversely, it is not well adapted to model errors, due for example to ancient evolutionary events that impacts large subtrees. However, an attractive feature of the LR operation is that computing the LR distance is  equivalent to computing the \mast{} cost and is thus tractable, unlike the SPR distance which is hard to compute. This suggests that the \lrsupertree{} problem might be easier to solve than the SPR Supertree problem, and we provide indeed several tractability results. Compared to the \mastec{} problem, the \mastrl{} problem is naturally more adapted to correct single taxa misplacements as the EC operation is very local and  the number of EC required to correct a taxon misplacement is linear  in the length of the path to its correct location, while the LR cost of correcting this is unitary. Last,  \lrsupertree{} is more flexible than the \mast{} problem as it relies on modifications of the input trees, while with the way \mast{} corrects  a misplaced leaf requires to remove this leaf from all input trees. This shows that the problems \mastrl{} and \mastrld{} complement well the existing corpus of gene trees correction models.

%% --------------------------------------------------------------------------------------
%% --------------------------------------------------------------------------------------

\section{Hardness and approximability of \mastrl{}}\label{sec:hardness}

In this section, we show that the $\mastrl{}$ problem is NP-hard, from which the \lrsupertree{} hardness follows.  We then describe a simple factor $2$ approximation algorithm.  The algorithm turns out to be useful for analyzing the 
worst case scenario for $\mastrl{}$ in terms of the required number of leaves to remove, as we show that there are $\mastrl{}$ instances that require removing about $n - \sqrt{n}$ leaves in each input tree.

\subsection*{NP-hardness of \mastrl{}}

We assume here that we are considering the decision version of $\mastrl{}$, \textit{i.e.} deciding whether there is a leaf-disagreement of size at most $\l$ for a given $\l$. We use a reduction from the \minrti{} problem: given a set $\R$ of rooted triplets, find a subset $\R' \subset \R$ of minimum cardinality such that $\R \setminus \R'$ is compatible.  The \minrti{} problem is NP-Hard~\cite{journals/dam/byrka2010} (even $W[2]$-hard and  hard to 
approximate within a $O(\log n)$ factor). Denote by $\minrtival{\R}$ the minimum number of triplets to remove from $\R$ to attain compatibility. 
We describe the reduction here.
%and give the idea of the hardness proof, which is deferred to the Appendix.

Let $\R = \{R_1, \ldots, R_t\}$ be an instance of $\minrti$,  with the label set $L := \bigcup_{i = 1}^t \lblset(R_i)$.  For a given integer $m$, we construct an \mastrl{} instance 
$\T = \{T_1, \ldots, T_t\}$ which is such that $MINRTI(\R) \leq m$ if and only if 
$\mastrlval{\T} \leq t(|L| - 3) + m$.

We first construct a tree $Z$ with additional labels which will serve as our main gadget.
Let $\{L_i\}_{1 \leq i \leq t}$ be a collection of $t$  new  label sets, each of size $(|L|t)^{10}$, all disjoint from each other and all disjoint from $L$.  Each tree in our $\mastrl{}$ instance will be on label set $\lblset = L \cup L_1 \cup \ldots \cup L_t$. 
For each $i \in [t]$, let $X_i$ be any tree with  label set $L_i$.
Obtain $Z$ by taking any tree on $t$ leaves $l_1, \ldots, l_t$, then replacing each leaf $l_i$
by the $X_i$ tree (\textit{i.e.} $l_i$ is replaced by $r(X_i)$).  Denote by $r_Z(X_i)$ the root of the $X_i$ subtree in $Z$.

Then for each $i \in [t]$, we construct $T_i$ from $R_i$ as follows.
Let $L' = L \setminus \lblset(R_i)$ be the set of labels not appearing in $R_i$, noting that $|L'| = |L| - 3$.
Let $T_{L'}$ be any tree with  label set   $L'$, and obtain the tree $Z_i$ by grafting 
$T_{L'}$ on the edge between $r_Z(X_i)$ and its parent.
Finally, $T_i$ is obtained by grafting $R_i$ above $Z_i$.
{See Figure~\ref{fig:hardness} for an example.}
Note that each tree $T_i$ has  label set $\lblset$ as desired.
Also, it is not difficult to see that this reduction can be carried out in polynomial time.  
This construction can now be used to show the following.

\begin{figure*}[t]
\centering
\includegraphics[width=0.35\textwidth]{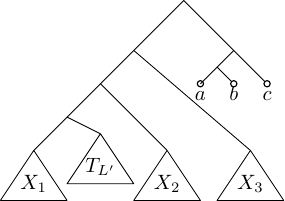}
% \vspace*{-3mm}
\caption{Construction of the tree $T_1$ for an instance $\R = \{R_1, R_2, R_3\}$ of $\minrti$ in which $R_1 = ab|c$.}
\label{fig:hardness}
\end{figure*}

\begin{theorem}\label{thm:np-hard}
The \mastrl{} and \lrsupertree{} problems are NP-hard.
\end{theorem}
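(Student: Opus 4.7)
The plan is to reduce from \minrti{} (NP-hard by~\cite{journals/dam/byrka2010}) to the decision version of \mastrl{} using the construction $\T = \{T_1, \ldots, T_t\}$ given just before the theorem. I will show that $\minrtival{\R} \le m$ if and only if $\mastrlval{\T} \le t(|L|-3) + m$. Since the reduction is polynomial and Lemma~\ref{lem:equiv-problems} equates the optima of \mastrl{} and \lrsupertree{}, this transfers NP-hardness to both problems.

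For the forward direction, assume $\R' \subseteq \R$ of size $m$ makes $\R \setminus \R'$ compatible, displayed by some supertree $S$ on $L$. Let $T$ be the tree obtained by grafting $S$ above the gadget $Z$. Define $\lblset_i := L \setminus \lblset(R_i)$ when $R_i \notin \R'$ (size $|L|-3$), and $\lblset_i := (L \setminus \lblset(R_i)) \cup \{x_i\}$ for any $x_i \in \lblset(R_i)$ otherwise (size $|L|-2$). A direct check shows that each $T_i - \lblset_i$ consists of the untouched $Z$-gadget topped either by $R_i$ (which $S$ displays) or by a two-leaf subtree (trivially displayed), so $T$ is a common supertree. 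The total size of the leaf-disagreement is $t(|L|-3) + m$.

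For the reverse direction, let $\C = (\lblset_1, \ldots, \lblset_t)$ be a leaf-disagreement of size $v \le t(|L|-3) + m$ and let $T^*$ be a supertree of $\{T_i - \lblset_i\}$. Since $|L_j| = (|L|t)^{10} \gg v$, each $L_j$ retains an overwhelming majority of its leaves in every $T_i - \lblset_i$, forcing $T^*|_{L_1 \cup \ldots \cup L_t} = Z$; that is, the $Z$-gadget survives intact inside $T^*$. My plan is then to argue, via a cleanup/exchange argument, that one may assume $T^*$ takes the canonical form ``a binary tree $S$ on $L$ grafted above $Z$'': any label $l \in L$ grafted on an edge of $Z$ in $T^*$ must fail to match the position of $l$ in $T_i$ for at least $t-1$ indices $i$, so relocating $l$ to above $Z$ (adjusting the $\lblset_i$ accordingly) does not increase $v$.

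Once $T^*$ has this canonical form, the bookkeeping is direct. For every $i$, the set $L \setminus \lblset(R_i)$ must be contained in $\lblset_i$ (since $T^*$ has each such label above $Z$ whereas $T_i$ places it at the $X_i$-slot), giving $|\lblset_i \cap L| \ge |L|-3$; moreover $|\lblset_i \cap L| \ge |L|-2$ whenever $S$ fails to display $R_i$, since then at least one label of $\lblset(R_i)$ must also be removed from $T_i$. Setting $\R^* := \{R_i : S \text{ displays } R_i\}$, which is compatible as $S$ is a common supertree of $\R^*$, we conclude $v \ge t(|L|-3) + |\R \setminus \R^*|$, so $\minrtival{\R} \le |\R \setminus \R^*| \le v - t(|L|-3) \le m$, completing the reduction. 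The main obstacle is the cleanup step: ruling out the use of the interior edges of $Z$ by $L$-labels in $T^*$ requires a careful case analysis of the triplet constraints linking $L$-labels with the $Z$-gadget.
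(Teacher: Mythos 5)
Your reduction, the target equivalence $\minrtival{\R}\le m \iff \mastrlval{\T}\le t(|L|-3)+m$, and the forward direction all match the paper's proof, and the forward direction is complete and correct. The gap is in the reverse direction's ``cleanup'' step, which you explicitly defer and which is in fact the technical heart of the paper's argument. Your stated justification --- that a label $l\in L$ sitting inside the $Z$-gadget of $T^*$ mismatches at least $t-1$ input trees, ``so relocating $l$ to above $Z$ does not increase $v$'' --- does not close as a pure counting argument: placing $l$ at an arbitrary position above $Z$ can force $l\in\lblset_i$ for all $t$ trees (it must be deleted from every $T_i$ with $l\notin\lblset(R_i)$, and may also conflict with every $R_i$ containing $l$ if the new position disagrees with those triplets), so the relocation could a priori cost $t$ versus the $t-1$ you start from.

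The missing ingredient is the paper's key claim: if $l\in L\setminus\lblset(R_i)$ is \emph{retained} in $T_i-\lblset_i$, then $l$ is retained in no other $T_j-\lblset_j$. This is proved by exhibiting, for any pair of surviving gadget leaves $x_i\in L_i$ and $x_j\in L_j$, the conflicting triplets $\l x_i|x_j$ in the pruned $T_i$ versus $\l x_j|x_i$ (or $x_ix_j|\l$) in the pruned $T_j$; avoiding all such conflicts would require deleting one of the blocks $L_i,L_j$ of size $(|L|t)^{10}$, exceeding the budget. With that claim, the exchange is: delete $l$ from the unique tree retaining it and re-insert it into some $T_k$ with $l\in\lblset(R_k)$ at its original position; since $l$ then occurs in only one pruned tree, it cannot create any incompatibility, and the total size is unchanged. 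Without this ``survives in at most one tree'' claim, your per-label relocation is not justified, and the subsequent bookkeeping ($|\lblset_i\cap L|\ge|L|-3$, etc.) rests on an unestablished normal form. I would also note that one does not actually need the full canonical-form statement about $T^*$ you aim for; the paper normalizes the leaf-disagreement $\C$ directly, label by label, which avoids having to reason about where $T^*$ places $l$ relative to the other $L$-labels.
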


{The idea of the proof is to show that in the constructed \mastrl{} instance, we are "forced" to solve the corresponding \minrti{} instance. In more detail, we show that $MINRTI(\R) \leq m$ if and only if 
$\mastrlval{\T} \leq t(|L| - 3) + m$.
In one direction, given a set $\R'$ of size $m$ such that $\R\setminus\R'$ is compatible, one can show that the following leaf removals from $\T$ make it compatible: 
remove, from each $T_i$, the leaves $L' = L \setminus \lblset(R_i)$ that were inserted into the $Z$ subtree, then for each $R_i \in \R'$, remove a single leaf in $\lblset(R_i)$ from $T_i$. 
This sums up to $t(|L| - 3) + m$ leaf removals.
Conversely, it can be shown that there always exists an optimal solution for $\T$ that removes, for each $T_i$, all the leaves $L' = L \setminus \lblset(R_i)$ inserted in the $Z$ subtree, plus an additional single leaf $l$ from $m$ trees $T_{i_1}, \dots, T_{i_m}$ such that $l \in L$.
The corresponding triplets $R_{i_1},\dots, R_{i_m}$ can be removed from $\R$ so that it becomes compatible.}

% The idea of the proof of Theorem~\ref{thm:np-hard} is to show that in the constructed \mastrl{} instance, we are ``forced'' to solve the corresponding \minrti{} instance.
% In more detail, we show that $MINRTI(\R) \leq m$ if and only if $\mastrlval{\T} \leq t(|L| - 3) + m$.
% In one direction, given a set $\R'$ of size $m$ such that $\R \setminus \R'$ is compatible, one can show that the following leaf removals from $\T$ makes it compatible: remove, in each $T_i$, the leaves $L' = L \setminus \lblset(R_i)$ that were inserted in the $Z$ subtree, then for each $R_i \in \R'$, remove a single leaf in $\lblset(R_i)$ from $T_i$.  This
% sums up to $t(|L| - 3) + m$ leaf removals.
% Conversely, it can be shown that there always exists an optimal solution for $\T$ that removes, 
% for each $T_i$, all the leaves $L' = L \setminus \lblset(R_i)$ inserted in the $Z$ subtree, plus an additional single leaf $\l$ from $m$ trees $T_{i_1}, \ldots, T_{i_m}$ such that $\l \in L$.
% The corresponding triplets $R_{i_1}, \ldots, R_{i_m}$ can be removed from $\R$ so that it becomes
% compatible.

\subsection*{Approximating \mastrl{} and bounding worst-case scenarios}

Given the above result, it is natural to turn to approximation algorithms in order to solve \mastrl{} or \lrsupertree{} instances.
It turns out that there is a simple factor $2$ approximation for \lrsupertree{} which is achieved by interpreting the problem 
as finding a median in a metric space.
Indeed, it is not hard to see that $d_{LR}$ is a metric (over the space of trees on the same  label set $\lblset$).
%Clearly, $d_{LR}(T_1, T_2) \geq 0$, 
%$d_{LR}(T_1, T_2) = 0$ iff $T_1 = T_2$, and $d_{LR}$ is symmetric.
%Moreover, $d_{LR}$ satisfies the triangle inequality.
%Indeed, if $T_1 - \lblset' = T_2 - \lblset'$ and $T_2 - \lblset'' = T_3 - \lblset''$, then $T_1 - (\lblset' \cup \lblset'') = T_3 - (\lblset' \cup \lblset'')$, 
%implying
%$d_{LR}(T_1, T_3) \leq d_{LR}(T_1, T_2) + d_{LR}(T_2, T_3)$.
A direct consequence, using an argument akin to the one in~\cite[p.351]{books/gusfield1997}, is the following.

\begin{theorem}\label{lem:2-approx}
The following is a factor $2$ approximation algorithm for \lrsupertree: return the tree $T \in \T$ that minimizes 
$\sum_{T_i \in \T}d_{LR}(T, T_i)$.
\end{theorem}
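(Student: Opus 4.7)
My plan is to prove the 2-approximation bound by invoking the classical ``best-input-point is a 2-approximation to the metric median'' argument, once I have verified that $d_{LR}$ is a genuine metric on the space of binary trees over $\lblset$.

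First, I would check the metric axioms for $d_{LR}$. Non-negativity and the identity of indiscernibles are immediate from the definition ($T_1-X = T_2-X$ with $X = \emptyset$ forces $T_1 = T_2$), and symmetry is built into the definition. The only substantive axiom is the triangle inequality $d_{LR}(T_1,T_3) \le d_{LR}(T_1,T_2) + d_{LR}(T_2,T_3)$. For this I would take optimal label-sets $X_{12}$ and $X_{23}$ realising the two right-hand distances and set $X := X_{12} \cup X_{23}$. Since $T_1 - X_{12} = T_2 - X_{12}$, further pruning the labels in $X_{23}\setminus X_{12}$ from both sides preserves equality, giving $T_1 - X = T_2 - X$; symmetrically $T_2 - X = T_3 - X$. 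Hence $X$ is a feasible witness for $d_{LR}(T_1,T_3)$, and $|X| \le |X_{12}| + |X_{23}|$ yields the inequality.

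Second, with the metric in hand, I would run the standard median argument. Let $T^\ast$ be an optimum \lrsupertree{} solution with cost $\mathrm{OPT} = \sum_i d_{LR}(T^\ast,T_i)$, and let $T_k \in \T$ be arbitrary. By the triangle inequality applied termwise,
\[
\sum_{i=1}^{t} d_{LR}(T_k,T_i) \;\le\; \sum_{i=1}^{t} \bigl(d_{LR}(T_k,T^\ast) + d_{LR}(T^\ast,T_i)\bigr) \;=\; t\cdot d_{LR}(T_k,T^\ast) + \mathrm{OPT}.
\]
Summing this inequality over $k \in [t]$ and using $\sum_k d_{LR}(T_k,T^\ast) = \mathrm{OPT}$, the double sum is at most $2t\cdot \mathrm{OPT}$, so some $T_k\in \T$ satisfies $\sum_i d_{LR}(T_k,T_i) \le 2\cdot \mathrm{OPT}$, i.e. the minimum over $\T$ is already within a factor $2$ of optimum. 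Returning the minimising $T_k$ therefore gives the claimed approximation. Note that one must be mildly careful that $T^\ast$ can be taken on label set $\lblset$ (otherwise $d_{LR}(T_k,T^\ast)$ is not defined); Lemma~\ref{lem:equiv-problems} allows us to assume this without loss.

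The only potential obstacle is the metric verification, and even there the subtle point is only that ``further pruning preserves equality'' on restricted trees---a straightforward consequence of the fact that $(T-A)-B = T-(A\cup B)$ up to the contraction/root-deletion normalisation used in the definition of $-$. Once this is noted, the remainder is a mechanical application of the median argument, and no tree-specific combinatorics is required.
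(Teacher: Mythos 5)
Your proof is correct and follows essentially the same route as the paper: both are the classical ``nearest input point to the median is a 2-approximation'' argument via the triangle inequality for $d_{LR}$ (the paper picks the input tree $T'$ minimizing $d_{LR}(T',T^\ast)$ and bounds each term by $d_{LR}(T^\ast,T_i)$, whereas you average over all $k\in[t]$, a cosmetic difference). Your explicit verification that $d_{LR}$ satisfies the triangle inequality via $X := X_{12}\cup X_{23}$ is a welcome addition, since the paper only asserts that $d_{LR}$ is a metric without proof.
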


Theorem~\ref{lem:2-approx} can be used to lower-bound the 
`worst' possible instance of \mastrl{}.
We show that in some cases, we can only keep 
about $\sqrt{|\lblset|}$ leaves per tree.  That is, there are instances for which  
$\mastrlval{\T} = \Omega(t(n - \sqrt{n}))$, where $t$ is the number of trees and $n = |\lblset|$.  The argument is based on a probabilistic argument, for which we will make use of the following result~\cite[Theorem 4.3.iv]{bryant2003size}.
%\footnote{Note that the result is stated in terms of $\mathbb{E}[M_n] \leq c\sqrt{n}$, where $M_n$ is the size of a maximum subtree between two trees on $n$ leaves.  Here we state the result in terms of $d_{LR}$ as $M_n = n - d_{LR}$.}.

\begin{theorem}[\cite{bryant2003size}]\label{thm:expectedmast}
For any constant $c > e/\sqrt{2}$, there is some $n_0$ such that for all $n \geq n_0$, the following holds:
if $T_1$ and $T_2$ are two binary trees on $n$ leaves chosen randomly, uniformly and independently, then $\mathbb{E}[d_{LR}(T_1, T_2)] \geq n - c\sqrt{n}$.  
\end{theorem}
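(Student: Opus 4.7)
The plan is to use the first-moment method on the size of the maximum agreement subtree. Let $M(T_1, T_2) := n - d_{LR}(T_1, T_2)$ denote the size of the largest subset $X \subseteq \lblset$ with $T_1|_X = T_2|_X$. Since $M \leq n$ deterministically, it suffices to show that $\mathbb{E}[M] \leq c\sqrt{n} + o(\sqrt{n})$, and this will follow if we can prove $\Pr[M \geq k] = o(1/n)$ for $k = \lceil c\sqrt{n}\rceil$.

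The core step is the following ``restriction is uniform'' lemma: if $T$ is a uniformly random binary tree on $n$ labelled leaves, then for every fixed $X \subseteq \lblset$ of size $k$, $T|_X$ is uniformly distributed over the $(2k-3)!!$ binary trees on $X$. I would prove this by showing that for any fixed tree $T^*$ on $X$, the number of binary trees on $\lblset$ whose restriction to $X$ equals $T^*$ coincides with the number of ways to successively insert the labels of $\lblset \setminus X$ into the edges of $T^*$, namely $(2k-1)(2k+1)\cdots(2n-3)$, which depends only on $k$ and $n$ and not on $T^*$. Applying this independently to $T_1$ and $T_2$ yields $\Pr[T_1|_X = T_2|_X] = 1/(2k-3)!!$, and a union bound over the $\binom{n}{k}$ candidate subsets gives
\[ \Pr[M \geq k] \;\leq\; \binom{n}{k}\big/ (2k-3)!!. \]

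To finish, I would estimate the right-hand side with Stirling's formula. Writing $\binom{n}{k} \leq n^k/k!$ and $(2k-3)!! = (2k-2)!/(2^{k-1}(k-1)!)$ and simplifying, one obtains for $k = o(n)$
\[ \binom{n}{k}\big/(2k-3)!! \;=\; O\!\left(\tfrac{1}{\sqrt{k}}\,\Bigl(\tfrac{n\,e^2}{2k^2}\Bigr)^{\!k}\right). \]
For $k = \lceil c\sqrt{n}\rceil$ the base $ne^2/(2k^2)$ tends to $e^2/(2c^2)$, which is strictly less than $1$ precisely when $c > e/\sqrt{2}$. Hence the bound decays super-polynomially in $n$, so that $\Pr[M \geq k] = o(1/n)$, and
\[ \mathbb{E}[M] \;\leq\; k + n\cdot \Pr[M \geq k] \;=\; c\sqrt{n} + o(1). \]
Rearranging yields $\mathbb{E}[d_{LR}(T_1, T_2)] \geq n - c\sqrt{n}$ for all sufficiently large $n$, where any rounding and any $o(1)$ slack can be absorbed by the strict inequality $c > e/\sqrt{2}$ (replacing $c$ by a slightly smaller constant that still exceeds the threshold).

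The main obstacle is establishing the restriction-uniformity lemma cleanly: it is a well-known fact in phylogenetics, but one needs a careful bijective or double-counting argument to justify it, since it is precisely what makes the union bound tight enough to extract the sharp constant. The Stirling estimate itself is routine, but must be tracked carefully enough to recover the threshold $e/\sqrt{2}$ rather than a weaker bound.
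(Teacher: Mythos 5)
The paper does not prove this statement: it is imported verbatim from \cite{bryant2003size} (their Theorem~4.3), so there is no in-paper proof to compare against. Your argument is correct and is essentially the standard first-moment proof from that reference: the restriction-uniformity lemma (via the leaf-insertion count $(2k-1)(2k+1)\cdots(2n-3)$, which is independent of the target shape) gives $\Pr[T_1|_X = T_2|_X] = 1/(2k-3)!!$, the union bound over $\binom{n}{k}$ subsets combined with Stirling yields the threshold $e/\sqrt{2}$, and your remark about absorbing the rounding and $o(1)$ slack by shrinking $c$ slightly is the right way to close the gap.
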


\begin{corollary}\label{thm:worst-case}
There are instances of \mastrl{} in which $\Omega(t(n - \sqrt{n}))$
leaves need to be deleted.
\end{corollary}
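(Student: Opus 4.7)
The plan is to construct an instance of \mastrl{} probabilistically. I would sample $t$ uniformly random binary trees $T_1, \ldots, T_t$ on a common label set $\lblset$ of size $n$, independently, and show that the expected value of the optimum is $\Omega(t(n - \sqrt{n}))$; the probabilistic method then yields a deterministic instance of at least this size.

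The key observation is that $d_{LR}$ is a metric on binary trees with label set $\lblset$ (as noted just before Theorem~\ref{lem:2-approx}), so the triangle inequality holds. For any tree $T^*$ on $\lblset$ and any $k \in \{1, \ldots, \lfloor t/2 \rfloor\}$,
$$d_{LR}(T_{2k-1}, T_{2k}) \leq d_{LR}(T^*, T_{2k-1}) + d_{LR}(T^*, T_{2k}).$$
Summing over $k$ and specialising $T^*$ to be the optimal supertree for $\T$, Lemma~\ref{lem:equiv-problems} gives
$$\mastrlval{\T} = \sum_{i=1}^{t} d_{LR}(T^*, T_i) \geq \sum_{k=1}^{\lfloor t/2 \rfloor} d_{LR}(T_{2k-1}, T_{2k}).$$
Since the triangle inequality applies pointwise to every $T^*$, it is legitimate for $T^*$ to depend on the random outcome of $\T$.

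The rest is routine. The pairs $(T_{2k-1}, T_{2k})$ are independent uniform random pairs of binary trees, so Theorem~\ref{thm:expectedmast} gives $\mathbb{E}[d_{LR}(T_{2k-1}, T_{2k})] \geq n - c\sqrt{n}$ for some fixed $c > e/\sqrt{2}$ and all sufficiently large $n$. Linearity of expectation then yields $\mathbb{E}[\mastrlval{\T}] \geq \lfloor t/2 \rfloor (n - c\sqrt{n}) = \Omega(t(n-\sqrt{n}))$, so some realisation of $\T$ attains this value, proving the corollary. I do not anticipate a serious obstacle; the only subtle point is the legitimacy of choosing an adaptive $T^*$ in the triangle inequality step, which is immediate since the inequality is valid for every $T^*$ before the optimisation. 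One could alternatively invoke Theorem~\ref{lem:2-approx} directly, lower-bounding $\min_j \sum_i d_{LR}(T_j,T_i)$ via a concentration and union-bound argument, but the metric/triangle-inequality route is cleaner and avoids any tail estimates.
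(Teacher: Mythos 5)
Your proof is correct, but it takes a different route from the paper's. The paper frames the corollary as a consequence of Theorem~\ref{lem:2-approx}: it sums the triangle inequality over \emph{all} $\binom{t}{2}$ pairs with the approximate median $T' \in \T$ as the centre, obtaining $\sum_{i<j} d_{LR}(T_i,T_j) \leq (t-1)\sum_i d_{LR}(T',T_i) \leq 2(t-1)S$, and then divides through; the factor-$2$ loss from the approximation gives $\mathbb{E}[S] \geq \tfrac{t}{4}(n - c\sqrt{n})$. You instead bypass the approximation result entirely: by pairing the trees into $\lfloor t/2\rfloor$ disjoint couples and applying the triangle inequality once per couple with the (adaptive, but pointwise-valid) optimum $T^*$ as centre, you get $\mastrlval{\T} \geq \sum_k d_{LR}(T_{2k-1},T_{2k})$ directly, and linearity of expectation finishes the job. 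Your decomposition is slightly cleaner and even saves a factor of $2$ in the constant ($\lfloor t/2\rfloor$ versus $t/4$), since you never pay for the approximation ratio; what the paper's version buys is a narrative link to the $2$-approximation algorithm, which is how the authors motivate the corollary in the main text. Your handling of the one subtle point --- that $T^*$ may depend on the random instance but the triangle inequality holds for every fixed tree before minimisation --- is exactly right, and mirrors the reason the paper routes through the (also random) tree $T'$ rather than applying Theorem~\ref{thm:expectedmast} to $T^*$ directly.
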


\looseness=-1
The above is shown by demonstrating that, by picking a set $\T$ of $t$ random trees, the expected optimal sum of distances $\min_T \sum_{T_i \in \T} d_{LR}(T, T_i)$ is $\Omega(t(n - \sqrt{n})$.  This is not direct though, since the tree $T^*$ that minimizes this sum is not itself random, and so we cannot apply Theorem~\ref{thm:expectedmast} directly on $T^*$.
We can however, show that the tree $T' \in \T$ obtained using the 2-approximation, which is random, has expected sum of distances $\Omega(t(n - \sqrt{n}))$.  Since $T^*$ requires, at best, half the leaf deletions of $T'$, the result follows. Note that finding a non-trivial upper bound on $\mastrlval{\T}$ is open.

\section{Fixed-parameter tractability of \mastrl{} and \mastrld{}.}\label{sec:fpt}

An alternative way to deal with computational hardness is parameterized complexity.
%\blue{
The most natural parameter to study is $q := \mastrlval{\T}$, the question being whether there exists an algorithm for \mastrl{} that runs in time $O(f(q) poly(n))$ for some function $f$ depending only on $q$.  In a previous version of this work, we proposed an FPT algorithm that turned out to be inaccurate.  This was observed by Chen et al.~\cite{chen2019computing}, who proposed a fix through an alternate proof of FPT membership.  We invite the interested reader to consult the second arXiv version [V2] of this paper in Section 4 for the details of the inaccurate algorithm.
%}

%In this section, we first show that \mastrl{} is fixed-parameter-tractable with respect to $q := \mastrlval{\T}$.
%More precisely, we show that \mastrl{} can be solved in $O(12^q n^3)$~time, where $n := |\lblset|$.
We consider an alternative parameter $d$, and show that finding a tree $T^*$, if it exists, such that $d_{LR}(T_i, T^*) \leq d$ for every input tree $T_i$, can be done in $O(c^d d^{3d}(n^3 + tn \log n))$~time for some constant $c$.

\subsection{Parameterization by maximum distance $d$}

We now describe an algorithm for the \mastrld{} problem, running in time $O(c^d d^{3d}(n^3 + tn \log n))$ that, if it exists, finds a solution (where here $c$ is a constant not depending on $d$ nor $n$).

We employ the following branch-and-bound strategy, keeping a candidate solution at each step. Initially, the candidate solution is the input tree~$T_1$ and, if $T_1$ is indeed a solution, we return it.
Otherwise (in particular if $d_{LR}(T_1,T_i) > d$ for some input tree $T_i$), we branch on a set of ``leaf-prune-and-regraft'' operations on $T_1$. In such an operation, we prune one leaf from $T_1$ and regraft it somewhere else. If we have not produced a solution after $d$ such operations, then we halt this branch of the algorithm (as any solution must be reachable from $T_1$ by at most $d$ operations).
The resulting search tree has depth at most $d$. In order to bound the running time of the algorithm, we need to bound the number of  ``leaf-prune-and-regraft'' operations to try at each branching step. There are two steps to this: first, we bound the set of candidate leaves to prune, second, given a leaf, we bound the number of places where to regraft it.
To bound the candidate set of leaves to prune, let us call a leaf~$x$ \emph{interesting} if there is a solution $T^*$, and minimal sets $X_1,X_i \subseteq \lblset$ of size at most $d$, such that
\begin{inparaenum}[(a)]
\item $T_1 - X_1 = T^* - X_1$,
\item $T_i - X_i = T^* - X_i$, and
\item $x \in X_1 \setminus X_i$,
\end{inparaenum}
where $T_i$ is an arbitrary input tree for which $d_{LR}(T_1,T_i) > d$.
It can be shown that an interesting leaf~$x$ must exist if there is a solution.
Moreover, though we cannot identify $x$ before we know $T^*$, we can nevertheless construct a set~$S$ of size~$O(d^2)$ containing all interesting leaves.
Thus, in our branching step, it suffices to consider leaves in $S$.

{Assuming we have chosen the correct $x$, we then bound the number of places to try regrafting $x$.
Because of the way we chose $x$, we may assume there is a solution $T^*$ and $X_i \subseteq \lblset$ such that $|X_i| \leq d$, $T_i - X_i = T^* - X_i$ and $x \notin X_i$. Thus we may treat $T_i$ as a ``guide'' on where to regraft $x$. Due to the differences between $T_1$, $T_i$ and $T^*$, this guide does not give us an exact location in $T_1$ to regraft $x$. Nevertheless, we can show that the number of candidate locations to regraft $x$ can be bounded by $O(d)$.
Thus, in total we have $O(d^3)$ branches at each step in our search tree of depth $d$, and therefore have to consider $O((O(3^d))^{d}) = O(c^dd^{3d})$ subproblems.}

{
\begin{theorem}\label{thm:fpt-in-d}
\mastrld{} can be solved in time $O(c^d d^{3d}(n^3 + tn \log n))$, where  $c$ is a constant not depending on $d$ or $n$.
\end{theorem}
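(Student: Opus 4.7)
The plan is to turn the branch-and-bound strategy outlined above into a precise algorithm and analyze it. First I would formalize a \emph{leaf-prune-and-regraft} (LPR) operation that detaches a single leaf and reattaches it on some edge of the remaining tree (or above the root). A basic lemma is that if $d_{LR}(T,T^*)\le k$, certified by a set $X\subseteq\lblset$ of size at most $k$ with $T-X=T^*-X$, then $T^*$ can be obtained from $T$ by performing one LPR per leaf of $X$, regrafting each at its $T^*$-position. Consequently, any solution to the \mastrld{} instance is reachable from $T_1$ by at most $d$ LPRs, so a search tree branching on LPR steps need only have depth $d$. At each node of the search we maintain the current candidate $T$ and perform the solution test by computing $d_{LR}(T,T_i)$ for every $i$ in $O(tn\log n)$ total time via pairwise MAST.

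If the test fails, I would pick any input tree $T_i$ with $d_{LR}(T,T_i)>d$ and adapt the notion of \emph{interesting leaf} to the current candidate $T$. The existence of such a leaf whenever a solution exists follows from a short argument: if the witnessing pair $(X_1,X_i)$ satisfied $X_1\subseteq X_i$, then $T-X_i=T^*-X_i=T_i-X_i$, yielding $d_{LR}(T,T_i)\le |X_i|\le d$, contradicting our choice of $T_i$. Hence $X_1\setminus X_i\neq\emptyset$ and any element of this set is an interesting leaf.

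The main obstacle is to construct, without knowing $T^*$, a set $S\subseteq\lblset$ of size $O(d^2)$ guaranteed to contain some interesting leaf. My intended route is to compute a MAST between $T$ and $T_i$; let $Y$ be the set of leaves outside this MAST, so $|Y|=d_{LR}(T,T_i)\le d_{LR}(T,T^*)+d_{LR}(T^*,T_i)\le 2d$ by the triangle inequality (recall that $d_{LR}$ is a metric). A structural analysis should show that every interesting leaf either lies in $Y$ itself or attaches, in $T$, within a bounded-radius neighborhood of an edge incident to a leaf of $Y$, bounding $|S|$ by $O(d\cdot|Y|)=O(d^2)$. Once $x\in S$ is fixed, the candidate edges of $T$ on which to regraft $x$ are capped at $O(d)$: by the definition of an interesting leaf there is a witness with $x\notin X_i$, so the position of $x$ in $T_i$ matches its position in $T^*$, and transcribing this position into $T$ is obstructed only by the $O(d)$ disagreements recorded in $Y$, leaving $O(d)$ possible regraft edges.

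Combining the bounds yields a search tree of depth $d$ and branching factor $|S|\cdot O(d)=O(d^3)$, hence $O((cd^3)^d)=O(c^d d^{3d})$ leaves for an absolute constant $c$ absorbing the hidden constants in the branching factor. Each node of the search spends $O(n^3+tn\log n)$ time, the $n^3$ accommodating the construction of $S$ and the candidate regraft edges, and the $tn\log n$ the solution test against all $t$ inputs. Multiplying the two contributions gives the claimed $O(c^d d^{3d}(n^3+tn\log n))$ bound. I expect the construction of $S$ to be the most delicate step, as it requires a three-tree exchange argument showing that the interesting leaves, although defined via the unknown $T^*$, are always captured by a structurally small region that can be identified from $T$ and $T_i$ alone.
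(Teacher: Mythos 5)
Your overall architecture matches the paper's: a depth-$d$ search tree over leaf-prune-and-regraft moves, a branching factor of $O(d^2)\cdot O(d)$ obtained by first bounding the set of leaves worth moving and then the places to regraft a chosen leaf, and a per-node cost dominated by the solution test. Your argument for the existence of a leaf in $X_1\setminus X_i$ (if $X_1\subseteq X_i$ then $T-X_i=T^*-X_i=T_i-X_i$, contradicting $d_{LR}(T,T_i)>d$) is correct, and is even a little cleaner than the paper's version, which instead extracts a minimal disagreement from $X_1\cup X_i$ and uses a cardinality count. The final complexity bookkeeping is also fine.

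The gap is that the two lemmas carrying all of the technical weight are asserted rather than proved, and the sketch you give for the first does not describe a construction that works. For the $O(d^2)$ set $S$, the paper does not take a ``bounded-radius neighborhood of edges incident to leaves of $Y$'': it builds a join tree $T_J$ containing two disjoint copies of the MAST complement $X$ (one placed where $T$ wants it, one where $T_i$ wants it), discards every label not below $\lca_{T_J}$ of these copies, partitions the rest by the $O(d)$ edges of $T_J$ restricted to the copies into sets $s(uv)$, and further splits each $s(uv)$ into ``dangling clades''. The key structural fact (Observation~3 in the paper's proof of Lemma~\ref{lem:disagreementKernel2}) is that a minimal disagreement meets each $s(uv)$ either not at all, entirely, or in everything except a single dangling clade; this is what licenses adding at most $2d$ labels per edge to $S$. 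Those labels are near-complements of whole sets $s(uv)$ and can lie arbitrarily far, in tree distance, from any leaf of $Y$, so a locality/radius argument cannot recover them; the per-edge bound comes from cardinality constraints of the form $|s(uv)\setminus C|\le d$, not from proximity to $Y$. Likewise, the $O(d)$ bound on regraft positions is not merely a matter of ``transcribing'' the $T_i$-position of $x$ into $T$ modulo $Y$: since $T_i$ may disagree with $T^*$ about leaves outside the common subtree, the position $T_i$ suggests can be genuinely wrong, and the paper needs a counting argument --- regrafting $x$ more than $d+d'$ leaves away from the suggested spot creates $d+d'$ conflicting triplets whose only common label is $x$, which the at most $d+d'-1$ remaining deletions cannot all hit --- to confine $x$ to a region spanned by $O(d)$ edges (the nodes $y$, $Z_1$, $Z_2$ of Lemma~\ref{lem:dont-check-too-many-trees}). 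Without proofs of these two bounds the branching factor, and hence the theorem, is not established.
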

}

\vspace{-3mm}

\section{Conclusion}

\vspace{-1mm}

To conclude, we introduced a new supertree/consensus problem, based on a simple combinatorial operator acting on trees, the Leaf-Removal. We showed that, although this supertree problem is NP-hard, it admits interesting tractability results, that compare well with existing algorithms. Future research should explore if various simple combinatorial operators, that individually define relatively tractable supertree problems (for example LR and EC) can be combined into a unified supertree problem while maintaining approximability and fixed-parameter tractability.

\vspace{1mm}

{\small {\bf Acknowledgement:}
MJ was partially supported by Labex NUMEV (ANR-10-LABX-20) and Vidi grant 639.072.602 from The Netherlands Organization for Scientific Research (NWO).
CC was supported by NSERC Discovery Grant 249834.
CS was partially supported by the French Agence Nationale de la Recherche Investissements d’Avenir/Bioinformatique (ANR-10-BINF-01-01, ANR-10-BINF-01-02, Ancestrome).
ML was supported by NSERC PDF Grant.
MW was supported by the Institut de Biologie Computationnelle.
}

\bibliographystyle{plain}
\bibliography{mast}

\vfill\pagebreak
\appendix 
\section{Omitted proofs}

{Here we give proofs for  several results whose  proofs were omitted in the main paper. Note that the proof of Theorem~\ref{thm:fpt-in-d} is deferred to its own section.}

%\subsection*{Proof of Lemma~\ref{lem:equiv-problems}}

% %\begin{proof}
% In the first direction, for each $T_i \in \T$, there is 
% a set $X_i \subseteq \lblset$ of size $d_{LR}(T, T_i)$ such that $T_i - X_i = T - X_i$.  Moreover, $X_i$ can be found in time $O(n \log n)$.  Thus $(X_1, \ldots, X_t)$
% is a leaf-disagreement of the desired size and can be found in time $O(t n \log n)$.
% Conversely, let $\C = (X_1, \ldots, X_t)$ be a leaf-disagreement of size $v$.  
% As $\T' = \{T_1 - X_1, \ldots, T_t - X_t\}$ is compatible, 
% there is a tree $T$ that displays $\T'$, and it is easy to see that the sum of distances between $T$ and $\T'$
% is at most the size of $\C$.  As for the complexity, 
% it is shown in~\cite{deng_et_al:LIPIcs:2016:6088} how to compute in time $O(tn \log^2 (tn))$, given a set of trees $\T'$, a tree 
% $T$ displaying $\T'$ if one exists.
% %\qed
% %\end{proof}

% \noindent
% \textbf{Lemma~\ref{lem:disagreementKernel}}
% \emph{
% (restated).
% Suppose that $d_{LR}(T_1, T_2) \leq d$ for some integer $d$.
% Then, there is some $S \subseteq \lblset$ such that $|S| \leq d^3$, and for any minimal disagreement $X$ between $T_1$ and $T_2$ with $|X| \leq d$, $X \subseteq S$.
% Moreover $S$ can be found in time $O(n^3)$.
% }

\medskip
 \noindent
 \textbf{Lemma~\ref{lem:equiv-problems}}
 \emph{
 (restated).
Let $\T = \{T_1, \ldots, T_t\}$ be a set of trees on the same  label set $\lblset$.
Then, given a supertree $T$ such that $v := \sum_{T_i \in \T}d_{LR}(T, T_i)$, one can compute in time $O(t n \log n)$ a leaf-disagreement $\C$ of size at most $v$, where $n = |\lblset|$.
Conversely, given a leaf-disagreement $\C$ for $\T$ of size $v$, one can compute in time $O(t n \log^2 (tn))$ a supertree $T$ such that 
$\sum_{T_i \in \T}d_{LR}(T, T_i) \leq v$.
}

\begin{proof}
In the first direction, for each $T_i \in \T$, there is 
a set $X_i \subseteq \lblset$ of size $d_{LR}(T, T_i)$ such that $T_i - X_i = T - X_i$.  Moreover, $X_i$ can be found in time $O(n \log n)$.  Thus $(X_1, \ldots, X_t)$
is a leaf-disagreement of the desired size and can be found in time $O(t n \log n)$.
Conversely, let $\C = (X_1, \ldots, X_t)$ be a leaf-disagreement of size $v$.  
As $\T' = \{T_1 - X_1, \ldots, T_t - X_t\}$ is compatible, 
there is a tree $T$ that displays $\T'$, and it is easy to see that the sum of distances between $T$ and $\T'$
is at most the size of $\C$.  As for the complexity, 
it is shown in~\cite{deng_et_al:LIPIcs:2016:6088} how to compute in time $O(tn \log^2 (tn))$, given a set of trees $\T'$, a tree 
$T$ displaying $\T'$ if one exists.
\qed
\end{proof}

We next consider the case where $\T$ consists only of two trees.

%\medskip
% \noindent
% \textbf{Lemma 2}
% \emph{
% %(restated).
\begin{lemma}\label{lem:two-trees}
Let $T_1, T_2$ be two trees on the same  label set $\lblset$.
Then $\mastrlval{T_1, T_2} = d_{LR}(T_1, T_2)$. 
Moreover, every optimal leaf-disagreement $\C = (\lblset_1', \lblset_2')$ for $T_1$ and $T_2$ can be obtained 
in the following manner: 
for every label-disagreement $\lblset'$ of size $d_{LR}(T_1, T_2)$, partition $\lblset'$ into $\lblset'_1, \lblset'_2$.
\end{lemma}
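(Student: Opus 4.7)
The plan is to establish two matching inequalities between label-disagreements (subsets $X \subseteq \lblset$ with $T_1 - X = T_2 - X$) and leaf-disagreements for $\T = \{T_1, T_2\}$, and then read off the characterization of optimal leaf-disagreements from the equality case.

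First, I would show $\mastrlval{T_1, T_2} \leq d_{LR}(T_1, T_2)$. Let $X$ be a minimum label-disagreement, so $T_1 - X = T_2 - X$. I claim $(X, \emptyset)$ is a leaf-disagreement for $\{T_1, T_2\}$: the label sets of $T_1 - X$ and $T_2$ are $\lblset \setminus X$ and $\lblset$, whose union is $\lblset$, and the tree $T = T_2$ itself displays both, since $T_2|_{\lblset} = T_2$ and $T_2|_{\lblset \setminus X} = T_2 - X = T_1 - X$. Thus $(X, \emptyset)$ has size $|X| = d_{LR}(T_1, T_2)$.

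Second, I would show the reverse inequality and, at the same time, extract the information needed for the second part of the lemma. Let $\C = (\lblset_1', \lblset_2')$ be an arbitrary leaf-disagreement and let $T$ be a supertree that displays $\{T_1 - \lblset_1', T_2 - \lblset_2'\}$. Set $\lblset' := \lblset_1' \cup \lblset_2'$. Restricting $T$ further to $\lblset \setminus \lblset'$ yields $T_1 - \lblset' = (T_1 - \lblset_1') - \lblset_2' = T|_{\lblset \setminus \lblset'}$, and symmetrically $T_2 - \lblset' = T|_{\lblset \setminus \lblset'}$, so $T_1 - \lblset' = T_2 - \lblset'$ and $\lblset'$ is a label-disagreement. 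Hence $d_{LR}(T_1, T_2) \leq |\lblset'| \leq |\lblset_1'| + |\lblset_2'|$, which combined with the previous paragraph gives $\mastrlval{T_1, T_2} = d_{LR}(T_1, T_2)$.

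For the second part, suppose $(\lblset_1', \lblset_2')$ is an \emph{optimal} leaf-disagreement, so $|\lblset_1'| + |\lblset_2'| = d_{LR}(T_1, T_2)$. The chain $d_{LR}(T_1, T_2) \leq |\lblset_1' \cup \lblset_2'| \leq |\lblset_1'| + |\lblset_2'| = d_{LR}(T_1, T_2)$ must collapse to equalities, which forces $\lblset_1' \cap \lblset_2' = \emptyset$ and makes $\lblset' := \lblset_1' \cup \lblset_2'$ a label-disagreement of size exactly $d_{LR}(T_1, T_2)$. This exhibits $(\lblset_1', \lblset_2')$ as a partition of a minimum label-disagreement, as required. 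There is no real obstacle here beyond unpacking the definitions; the only point worth highlighting is that sub-additivity of set union is what ties the two notions of size together, and its tightness is exactly what pins down the optimal leaf-disagreements as partitions of minimum label-disagreements.
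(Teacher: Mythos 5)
Your proof is correct and follows essentially the same elementary route as the paper: both directions come from unpacking the definitions of leaf- and label-disagreement, and disjointness of $\lblset'_1$ and $\lblset'_2$ in the optimal case is the key observation (you obtain it from the collapse of the inequality chain, the paper from an exchange argument that reinserts a shared label into one of the two trees; these are interchangeable). The one substantive difference is in the ``$\leq$'' direction: the paper proves the stronger fact that \emph{every} bipartition of a minimum label-disagreement $\lblset'$ yields a compatible pair $\{T_1 - \lblset'_1, T_2 - \lblset'_2\}$, so that the recipe in the ``moreover'' clause always produces optimal leaf-disagreements, whereas you only exhibit the trivial partition $(X,\emptyset)$ together with the explicit supertree $T_2$. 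Your version suffices for the equality $\mastrlval{T_1,T_2} = d_{LR}(T_1,T_2)$ and for the literal claim that every optimal leaf-disagreement is a partition of a minimum label-disagreement; but if the characterization is read as two-way (which the paper's subsequent use of the lemma suggests), you should add one sentence: for an arbitrary partition $(\lblset'_1,\lblset'_2)$ of $\lblset'$, the trees $T_1 - \lblset'_1$ and $T_2 - \lblset'_2$ agree on their common label set $\lblset \setminus \lblset'$, and two trees that agree on the intersection of their label sets are always compatible, since the leaves present in only one of them impose no conflicting triplet constraints.
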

%}
\begin{proof}
Let $\lblset' \subset \lblset$ such that  $|\lblset'| = d_{LR}(T_1, T_2)$
and $T_1 - \lblset' = T_2 - \lblset'$.
Then clearly, for any bipartition $(\lblset'_1, \lblset'_2)$ of $\lblset'$, 
$T_1' := T_1 - \lblset'_1$ and $T_2' := T_2 - \lblset'_2$ are compatible, since the leaves that $T_1'$ and $T_2'$ have in common yield the same subtree, and leaves that appear in only one tree cannot create incompatibility.  In particular, $\mastrlval{T_1, T_2} \leq d_{LR}(T_1, T_2)$.

Conversely, let $\C = (\lblset'_1, \lblset'_2)$ be a minimum 
leaf-disagreement.  
We have $\lblset'_1 \cap \lblset'_2 = \emptyset$, for if there is some $\l \in  \lblset'_1 \cap \lblset'_2$, then $\l$ could be reinserted into one of the two trees without creating incompatibility.
Thus $\C$ is a bipartition of $\lblset' = \lblset'_1 \cup \lblset'_2$.
Moreover, we must have $T_1 - \lblset' = T_2 - \lblset'$, implying 
$|\lblset'| \geq d_{LR}(T_1, T_2)$.  Combined with the above inequality, 
$|\lblset'| = d_{LR}(T_1, T_2)$, and the Lemma follows.
\end{proof}

It follows from Lemma~\ref{lem:two-trees} that any optimal label-disagreement $\lblset'$ can be turned into an optimal leaf-disa\-greement, which is convenient as $\lblset'$ can be found in polynomial time.
 We will make heavy use of this property later on.
 
 Note that the same type of equivalence does not hold when $3$ or more trees are given, \textit{i.e.} computing a MAST of three trees does not 
 necessarily yield a leaf-disagreement of minimum size.
 Consider for example the instance $\T = \{T_1,T_2,T_3\}$ in Figure~\ref{fig:example}. An optimal leaf-disagreement for $\T$ has size $2$ and consists of any pair of distinct leaves. On the other hand, an optimal leaf-disagreement for $\T$ has size $3$, and moreover each leaf corresponds to a different label.

% We end this section by considering the case where 
%  $\T = \{T_1, T_2\}$ consists only of two trees.
%  %, which is not difficult to handle, as the next Lemma shows.  The proof is straightforward and is left in the Appendix. 

% \begin{lemma}\label{lem:two-trees}
% Let $T_1, T_2$ be two trees on the same  label set $\lblset$.
% Then $\mastrlval{T_1, T_2} = d_{LR}(T_1, T_2)$. 
% Moreover, every optimal leaf-disagreement $\C = (\lblset_1', \lblset_2')$ for $T_1$ and $T_2$ can be obtained 
% in the following manner: 
% for every label-disagreement $\lblset'$ of size $d_{LR}(T_1, T_2)$, partition $\lblset'$ into $\lblset'_1, \lblset'_2$.
% \end{lemma}

% Therefore, any optimal label-disagreement $\lblset'$ can be turned into an optimal leaf-disa\-greement, which is convenient as $\lblset'$ can be found in polynomial time.
% We will make heavy use of this property later on.
% Note that the same type of equivalence does not hold when $3$ or more trees are given, \textit{i.e.} computing a MAST of three trees does not 
% necessarily yield a leaf-disagreement of minimum size.
% \blue{TODO: Give an example.}

\medskip
 \noindent
 \textbf{Theorem~\ref{thm:np-hard}}
 \emph{
 (restated).
The \mastrl{} and \lrsupertree{} problems are NP-hard.
}

\begin{proof}
We begin by restating the reduction from $\minrti$ to \mastrl.

Let $\R = \{R_1, \ldots, R_t\}$ be an instance of $\minrti$,  with the label set $L := \bigcup_{i = 1}^t \lblset(R_i)$.  For a given integer $m$, we construct an \mastrl{} instance 
$\T = \{T_1, \ldots, T_t\}$ which is such that $MINRTI(\R) \leq m$ if and only if 
$\mastrlval{\T} \leq t(|L| - 3) + m$.

We first construct a tree $Z$ with additional labels which will serve as our main gadget.
Let $\{L_i\}_{1 \leq i \leq t}$ be a collection of $t$  new  label sets, each of size $(|L|t)^{10}$, all disjoint from each other and all disjoint from $L$.  Each tree in our $\mastrl{}$ instance will be on label set $\lblset = L \cup L_1 \cup \ldots \cup L_t$. 
For each $i \in [t]$, let $X_i$ be any tree with  label set $L_i$.
Obtain $Z$ by taking any tree on $t$ leaves $l_1, \ldots, l_t$, then replacing each leaf $l_i$
by the $X_i$ tree (\textit{i.e.} $l_i$ is replaced by $r(X_i)$).  Denote by $r_Z(X_i)$ the root of the $X_i$ subtree in $Z$.

Then for each $i \in [t]$, we construct $T_i$ from $R_i$ as follows.
Let $L' = L \setminus \lblset(R_i)$ be the set of labels not appearing in $R_i$, noting that $|L'| = |L| - 3$.
Let $T_{L'}$ be any tree with  label set   $L'$, and obtain the tree $Z_i$ by grafting 
$T_{L'}$ on the edge between $r_Z(X_i)$ and its parent.
Finally, $T_i$ is obtained by grafting $R_i$ above $Z_i$.
{See Figure~\ref{fig:hardness} for an example.}
Note that each tree $T_i$ has  label set $\lblset$ as desired.
Also, it is not difficult to see that this reduction can be carried out in polynomial time.

We now show that $MINRTI(\R) \leq m$ if and only if 
$\mastrlval{\T} \leq t(|L| - 3) + m$.

%\begin{lemma}
($\Rightarrow$) 
%$MINRTI(\R) \leq m \Rightarrow \mastrlval{\T} \leq t(|L| - 3) + m$
%\end{lemma}
%\begin{proof}
Let $\R' \subset \R$ such that $|\R'| \leq m$ and $\R^* := \R \setminus \R'$ is compatible, and 
let $T(\R^*)$ be a tree displaying $\R^*$. Note that $|\R^*| \geq t - m$.
We obtain a \mastrl{} solution by first deleting, in each $T_i \in \T$, all the leaves labeled by $L \setminus \lblset(R_i)$ (thus $T_i$ becomes the tree obtained by grafting $R_i$ above $Z$). 
Then for each deleted triplet $R_i \in \R'$, we remove any single leaf of $T_i$ labeled by some element in $\lblset(R_i)$.  In this manner, no more than $t(|L| - 3) + m$ leaves get deleted.
Moreover, grafting $T(\R^*)$ above $Z$ yields a tree displaying the modified set of trees, showing that they are compatible.
%\qed
%\end{proof}

%\begin{lemma}
($\Leftarrow$) 
%$\mastrlval{\T} \leq t(|L| - 3) + m \Rightarrow MINRTI(\R) \leq m$
%\end{lemma}
%\begin{proof}
We first argue that if $\T$ admits a leaf-disagreement $\C = (\lblset_1, \ldots, \lblset_t)$ of size at most $t(|L| - 3) + m$, 
then there is a better or equal solution that removes, in each $T_i$, all the leaves labeled 
by $L \setminus \lblset(R_i)$ (\textit{i.e.} those grafted in the $Z_i$ tree). For each $i \in [t]$, let $T_i' = T_i - \lblset_i$, and denote $\T' = \{T_1', \ldots, T_t'\}$.
Suppose that there is some $i \in [t]$ and some $\l \in L \setminus \lblset(R_i)$ such that $\l \in \lblset(T_i')$.  

We claim that $\l \notin \lblset(T_j')$ for every $i \neq j \in [t]$.  Suppose otherwise that $\l \in \lblset(T_j')$ for some $j \neq i$.  
Consider first the case where $\l \notin \lblset(R_j)$.
Note that by the construction of $Z_i$ and $Z_j$, for every $x_i \in \lblset(X_i) \cap \lblset(T_i') \cap \lblset(T_j')$ and every $x_j \in \lblset(X_j) \cap \lblset(T_i')  \cap \lblset(T_j')$, 
$T_i'$ contains the $\l x_i | x_j$ triplet whereas $T_j'$ contains the $\l x_j | x_i$ triplet.  Since these triplets are conflicting, no supertree can contain both and so no such $x_i, x_j$ pair can exist, as we are assuming that a supertree for $T'_i$ and $T'_j$ exists.  This implies that one of $\lblset(X_i) \cap \lblset(T_i') \cap \lblset(T_j')$ or $\lblset(X_j) \cap \lblset(T_i')  \cap \lblset(T_j')$ must be empty.  Suppose without loss of generality that the former is empty.  Then each $x_i \in X_i$ must have been deleted in at least one of $T_i$ or $T_j$.  As $|\lblset(X_i)| = (|L|t)^{10} > t(|L| - 3) + m$, this contradicts the size of the 
solution $\C$.  In the second case, we have $\l \in \lblset(R_j)$.  But this time, if there are $x_i \in \lblset(X_i) \cap \lblset(T_i') \cap \lblset(T_j')$ and $x_j \in \lblset(X_j) \cap \lblset(T_i') \cap \lblset(T_j')$, then $T_j'$ contains the $x_ix_j|\l$ triplet, again conflicting with the $\l x_i | x_j$ triplet found in $T_i$.  As before, we run into a contradiction since too many $X_i$ or $X_j$ leaves need to be deleted. This proves our claim. 

We thus assume that $\l$ only appears in $T_i'$.
Let $R_j \in \R$ such that $\l \in \lblset(R_j)$, noting that
$\l$ does not appear in $T'_j$. 
Consider the solution $\T''$ obtained from $\T'$ by removing $\l$ from $T_i'$, 
and placing it back in $T_j'$ where it originally was in $T_j$.  Formally this is achieved by replacing, in the leaf-disagreement $\C$, 
$\lblset_i$ by $\lblset_i \cup \{\l\}$
and $\lblset_j$ by 
$\lblset_j \setminus \{\l\}$. 
Since $\l$ still appears only in one tree, no conflict is created and we obtain another solution 
of equal size.
By repeating this process for every such leaf $\l$, we obtain a solution in which 
every leaf labeled by $L \setminus \lblset(R_i)$ is removed from $T_i'$.
We now assume that the solution $\T'$ has this form.

Consider the subset $\R' = \{R_i \in \R : |\lblset(T_i') \cap \lblset(R_i)| < 3\}$, 
that is those triplets $R_i$ for which the corresponding tree $T_i$ had a leaf removed
outside of the $Z_i$ tree.
By the form of the $\T'$ solution, at least $t(|L| - 3)$ removals are done in the $Z_i$
trees, and as only $m$ removals remain, $\R'$ has size at most $m$.  We show that $\R \setminus \R'$ is a compatible set of triplets.  Since $\T'$ is
compatible, there is a tree $T$ that displays each $T_i' \in \T'$, and since 
each triplet of $\R \setminus \R'$ belongs to some $T_i'$, $T$ also displays
$\R \setminus \R'$.  This concludes the proof.
\qed
\end{proof}

\medskip
 \noindent
 \textbf{Theorem~\ref{lem:2-approx}}
 \emph{
 (restated).
The following is a factor $2$ approximation algorithm for \lrsupertree: return the tree $T \in \T$ that minimizes 
$\sum_{T_i \in \T}d_{LR}(T, T_i)$.
}
\begin{proof}
Let $T^*$ be an optimal solution for \lrsupertree{}, \textit{i.e.} $T^*$ is a tree minimizing $\sum_{T_i \in \T}d_{LR}(T_i, T^*)$, and let $T$ be chosen as described in the theorem statement.
Moreover let $T'$ be the tree of $\T$ minimizing $d_{LR}(T', T^*)$.
By the triangle inequality, 
$$
\sum_{T_i \in \T}d_{LR}(T', T_i) \leq \sum_{T_i \in \T}\left(  d_{LR}(T', T^*) + d_{LR}(T^*, T_i) \right) \leq 2 \sum_{T_i \in \T}d_{LR}(T^*, T_i)
$$
where the last inequality is due to the fact that $d_{LR}(T', T^*) \leq d_{LR}(T^*, T_i)$ for all $i$, by our choice of $T'$.
Our choice of $T$ implies
$\sum_{T_i \in \T}d_{LR}(T, T_i) \leq \sum_{T_i \in \T}d_{LR}(T', T_i) \leq 
2 \sum_{T_i \in \T}d_{LR}(T_i, T^*)$.
\qed
\end{proof}

\medskip
\noindent
 \textbf{Corollary~\ref{thm:worst-case}}
 \emph{
 (restated).
There are instances of \mastrl{} in which $\Omega(t(n - \sqrt{n}))$
leaves need to be deleted.
}
\begin{proof}
Let $\T = \{T_1, \ldots, T_t\}$ be a random set of $t$ trees chosen uniformly and independently.
For large enough $n$, the expected sum of distances between each pair of trees is 
$$\mathbb{E}\left[\sum_{1 \leq i < j \leq t}d_{LR}(T_i, T_j)\right] = 
\sum_{1 \leq i < j \leq t} \mathbb{E}[d_{LR}(T_i, T_j)] \geq {t \choose 2} (n - c\sqrt{n})$$
for some constant $c$, by Theorem~\ref{thm:expectedmast}.
Let $S := \min_{T} \sum_{i = 1}^t d_{LR}(T, T_i)$ be the random variable corresponding to the minimum sum of distances.
By Theorem~\ref{lem:2-approx}, there is a tree $T' \in \T$ such that $\sum_{i = 1}^t d_{LR}(T', T_i) \leq 2S$.
We have 

\begin{align*}
\sum_{1 \leq i < j \leq t}d_{LR}(T_i, T_j) &\leq \sum_{1 \leq i < j \leq t} d_{LR}(T_i, T') + d_{LR}(T', T_j) \\
&= (t - 1) \sum_{i = 1}^t d_{LR}(T_i, T') \\
&\leq (t - 1)2S
\end{align*}

Since, in general for two random variables $X$ and $Y$, 
always having $X \leq Y$ implies $\mathbb{E}[X] \leq \mathbb{E}[Y]$, we get 
$${t \choose 2}(n - c\sqrt{n}) \leq \mathbb{E}\left[\sum_{1 \leq i < j \leq t}d_{LR}(T_i, T_j)\right] \leq 
\mathbb{E}[(t - 1)2S] = 2(t - 1)\mathbb{E}[S]$$

yielding $\mathbb{E}[S] \geq t/4 (n - c\sqrt{n}) = \Omega(t(n - \sqrt{n}))$, and so there 
must exist an instance $\T$ satisfying the statement.
\end{proof}

\section{Leaf Prune-and-Regraft Moves}

Here we introduce the notion of leaf prune-and-regraft (LPR) moves, which will be used in the proof of Theorem~\ref{thm:fpt-in-d}, and which may be of independent interest.
In an LPR move, we prune a leaf from a tree and then regraft it another location (formal definitions below).
LPR moves provide an alternate way of characterizing the distance function $d_{LR}$ - indeed, we will show that $d_{LR}(T_1, T_2) \leq k$ if and only if there is a sequence of at most $k$ LPR moves transforming $T_1$ into $T_2$.

%Here we show that the algorithm for \mastrld{} is correct and attains the complexity claimed.
%In order to make this section self-contained, we may restate some of the definitions and Lemmas of the main text.
%We start with leaf prune-and-regraft (LPR) moves that will be used by the algorithm.

\begin{definition}
Let $T$ be a tree on  label set $\lblset$.
A LPR move on $T$ is a pair $(\l, e)$ where $\l \in \lblset$ and $e \in \{E(T - \{\l\}), \none \}$.
Applying $(\l, e)$ consists in grafting $\l$
on the $e$ edge of $T - \{\l\}$ if $e \neq \none$, and above the root of $T - \{\l\}$ if $e = \none$.

An \emph{LPR sequence} $L = ((\l_1, e_1), \ldots, (\l_k, e_k))$ is an ordered tuple of LPR moves, where for each $i \in [k]$, $(\l_i, e_i)$ is an LPR move on the tree obtained after applying the first $i - 1$ LPR moves of $L$.
We may write $L = (\l_1, \ldots, \l_k)$ if the location at which the grafting takes place 
does not need to be specified.
We say that $L$ turns $T_1$ into $T_2$ if, by applying each LPR move of $L$ in order on $T_1$,
we obtain $T_2$.
\end{definition}

See Figure~\ref{fig:LPRsequence} for an example of an LPR sequence.

\begin{figure}%
\centering
\subfloat{\includegraphics[width = 0.4\textwidth]{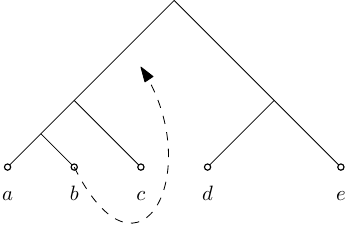}}\qquad
\subfloat{\includegraphics[width = 0.4\textwidth]{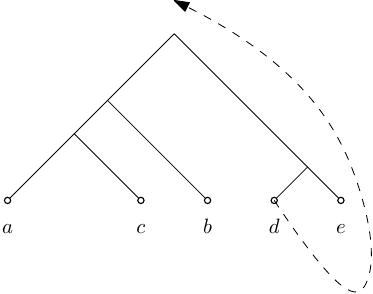}}\qquad
\subfloat{\includegraphics[width = 0.4\textwidth]{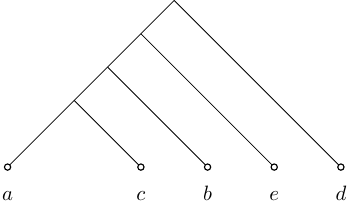}}\qquad
\caption{Sequence of trees showing the LPR sequence $L = ((b,f), (d, \none))$, where $f$ is the edge between the root and the least common ancestor of $a$ and $c$ in the first tree.}
\label{fig:LPRsequence} 
\end{figure}

In the following statements, we assume that $T_1$ and $T_2$ are two trees on  label set $\lblset$.
We exhibit an equivalence between leaf removals and LPR sequences, 
then show that the order of LPR moves in a sequence do not matter in terms of turning one tree into another - in particular any
leaf can be displaced first.

\begin{lemma}\label{lem:lpr-seq-equiv}
There is a subset $X \subseteq \lblset$ such that $T_1 - X = T_2 - X$
if and only if 
there is an LPR sequence $(x_1, x_2, \ldots, x_k)$ turning $T_1$ into $T_2$ such that $X = \{x_1, \ldots, x_k\}$.
\end{lemma}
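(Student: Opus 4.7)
The proof naturally splits into two directions; I would handle $(\Leftarrow)$ first as it is routine, then tackle $(\Rightarrow)$, which needs a bit more care.

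For $(\Leftarrow)$, the plan is to exploit the fact that a single LPR move $(\ell,e)$ preserves $T-\{\ell\}$: the move consists in producing $T-\{\ell\}$ (pruning $\ell$) and then regrafting $\ell$ somewhere, so pruning $\ell$ from the resulting tree returns $T-\{\ell\}$. A straightforward induction on $i$ then shows that after the first $i$ moves of the LPR sequence applied to $T_1$, the intermediate tree $T^{(i)}$ satisfies $T^{(i)}-X=T_1-X$, because $T^{(i)}-\{x_i\}=T^{(i-1)}-\{x_i\}$, and removing $X\setminus\{x_i\}$ afterwards commutes with this equality. Setting $i=k$ and using $T^{(k)}=T_2$ yields $T_2-X=T_1-X$.

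For $(\Rightarrow)$, the plan is induction on $k=|X|$. The case $k=0$ is trivial with the empty sequence. For the inductive step, I would pick any $x\in X$, set $X':=X\setminus\{x\}$, and aim to design a single LPR move $(x,e)$ applied to $T_1$ producing an intermediate tree $T_1'$ with $T_1'-X'=T_2-X'$. By the inductive hypothesis on the smaller set $X'$, there exists an LPR sequence on $X'$ turning $T_1'$ into $T_2$, and prepending $(x,e)$ yields the desired sequence on $X$. The crux is choosing $e$. Let $e^*$ denote the position (either an edge of $T_2-X$ or the symbol $\none$, meaning above the root) at which $x$ attaches in $T_2-X'$. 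Since $T_1-X=T_2-X$ is obtained from $T_1-\{x\}$ by removing $X'$ and contracting degree-two vertices (plus deleting degree-one roots), each edge $e^*$ of $T_2-X$ corresponds to a non-empty path in $T_1-\{x\}$ whose interior vertices are those that get contracted away. Choosing $e$ to be any edge of this path and grafting $x$ at $e$ introduces a new degree-three vertex that survives the subsequent removal of $X'$, placing $x$ precisely at position $e^*$. The case $e^*=\none$ is handled analogously by grafting $x$ above the root of $T_1-\{x\}$ (or above any ancestor of the surviving root).

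The main obstacle is book-keeping around contractions and the possible deletion of degree-one roots when $X'$ is removed from $T_1-\{x\}$: one must verify that the chosen grafting position $e$ in $T_1-\{x\}$ maps to the intended position $e^*$ in $T_1-X=T_2-X$, and that no additional contraction spuriously moves the regrafted copy of $x$. This is conceptually captured by the path-contraction correspondence sketched above, and the remaining verification amounts to a careful case analysis on whether the endpoints of $e$ survive the removal of $X'$, plus the analogous root-level case.
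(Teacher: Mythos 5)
Your proposal is correct and follows essentially the same route as the paper: induction on $|X|$, where the key step lifts the grafting position of the chosen leaf $x$ from the common tree $T_1-X=T_2-X$ (as dictated by $T_2-(X\setminus\{x\})$) back to an edge of $T_1-\{x\}$ via the node/path correspondence, yielding an intermediate tree $T_1'$ with $T_1'-\{x\}=T_1-\{x\}$ and $T_1'-(X\setminus\{x\})=T_2-(X\setminus\{x\})$. The paper picks the specific edge above the corresponding node $v'$ while you allow any edge of the contracted path, and your reverse direction is a direct induction along the sequence rather than folded into the induction on $|X|$, but these are cosmetic differences.
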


\begin{proof}
If $T_1 = T_2$ then the proof is trivial, so we will assume this is not the case. 
We prove the lemma by induction on $|X|$.

For the base case, 
 suppose that $X = \{x\}$.  
If $T_1 - X = T_2 - X$, 
then let $T_m = T_1 - X = T_2 - X$.
We find an LPR move $(x, e)$ with $e \in E(T_m) \cup \{\none \}$ turning $T_1$ into $T_2$.
Observe that $T_2$ can be obtained by grafting $x$ on $T_m$, either on an edge $uv$, in which case we set $e = uv$, or above the root, in which case we set $e = \none$.
Since $T_m = T_1 - \{x\}$, it follows that $(x, e)$ is an LPR move turning $T_1$ into $T_2$.
% Observe that there is some edge $uv$ in $T_m$ such that $T_1$ is derived from $T_m$  by subdividing $uv$ with a node $w$ and adding $x$ as a child of $w$ (\textit{i.e.} grafting $x$ onto $uv$).
%  Similarly, there is an edge $u'v'$ such that $T_2$ is derived from $T_m$ by subdividing $u'v'$ with a node $w'$ and adding $x$ as a child of $w'$.
%  If $uv = u'v'$ (\textit{i.e.} $T_1 = T_2$) then $w=w'$ is a node in both $T_1$ and $T_2$, and $(x,uw)$ is an LPR move that turns $T_1$ into $T_2$.
%  Otherwise, $u'v'$ is an edge  in $T_1$, and $(x,u'v')$ is an LPR move that turns $T_1$ into $T_2$.
In the other direction, assume there is an LPR move $(x,e)$ turning $T_1$ into $T_2$. Observe that for any tree $T'$ derived from $T_1$ by an LPR move using $x$, $T' - \{x\} = T_1 - \{x\}$.
 In particular, $T_2 - \{x\} = T_1 - \{x\}$ and we are done.
 
 For the induction step, assume that $|X|> 1$ and that the claim holds for any $X'$ such that $|X'| < |X|$.
 If $T_1 - X = T_2 - X$, then define $T_m = T_1 - X$, and let $x$ be an arbitrary element of $X$. 
 We will first construct a tree $T_1'$ such that $T_1 - \{x\} = T_1' - \{x\}$ and $T_1' - (X\setminus \{x\}) = T_2 -  (X\setminus\{x\})$.

 Observe that $T_2 - (X\setminus\{x\})$ can be obtained by grafting $x$ in $T_m$.  Let $e = uv$ if this grafting takes place on an edge of $T_m$ with $v$ being the child of $u$, or $e = \none$ if $x$ is grafted above $T_m$, and in this case let $v = r(T_m)$.  
 Let $v' = v_{T_1 - \{x\}}$ be the node in $T_1 - \{x\}$ corresponding to $v$. 
 %\textit{i.e.} $v'$ is the least common ancestor in $T_1 - \{x\}$ of $Y$, where $Y$ is the set of all labels descended from $v$ in $T_m$.  Formally, $v' = lca_{T_1 - \{x\}}(\L_{T_m}(v))$. 

%  Observe that $T_2 - (X\setminus\{x\})$ can be derived from $T_m$ by grafting $x$ onto an edge $uv$ in $T_m$.
%  Let $v'$ be the node in $T_1 - \{x\}$ corresponding to $v$, \textit{i.e.} $v'$ is the least common ancestor in $T_1 - \{x\}$ of $Y$, where $Y$ is the set of all labels descended from $v$ in $T_m$.
%  [TODO thus $v' = v_{T_2- \{x\}}$, using the notation introduced for Lemma~\ref{lem:dont-check-too-many-trees}.]
 Let $T_1'$ be derived from $T_1 - \{x\}$ by grafting $x$ onto the edge between $v'$ and its parent if $v'$ is non-root, and grafting above $v'$ otherwise.
 It is clear that  $T_1 - \{x\} = T_1' - \{x\}$.
 Furthermore,  by our choice of $v'$ we have that 
 $T_1' - (X\setminus \{x\}) = T_2 -  (X\setminus\{x\})$.
 %$T_1' - (X\setminus \{x\})$ can be derived from $T_2 - \{x\} - (X\setminus \{x\}) = T_m$  by grafting $x$ onto an $uv$, \textit{i.e.}  $T_1' - (X\setminus \{x\}) = T_2 -  (X\setminus\{x\})$. [Is this clear enough?]
 
 Now that we have $T_1 - \{x\} = T_1' - \{x\}$ and $T_1' - (X\setminus \{x\}) = T_2 -  (X\setminus\{x\})$, by the inductive hypothesis there is 
 an LPR sequence turning $T_1$ into $T_1'$ consisting of a single move $(x,e)$,
and an LPR sequence  $(x_1, x_2, \ldots, x_{k-1})$ turning $T_1'$ into $T_2$
such that $\{x_1, \ldots, x_{k'}\} = (X\setminus \{x\})$.
Then by concatenating these two sequences, we have an LPR sequence $(x_1, x_2, \ldots, x_k)$ turning $T_1$ into $T_2$
such that $X = \{x_1, \ldots, x_k\}$.

For the converse, suppose that there is an LPR sequence $(x_1, x_2, \ldots, x_k)$ turning $T_1$ into $T_2$ such that $X = \{x_1, \ldots, x_k\}$. Let $T_1'$ be the tree derived from $T_1$ by applying the first move in this sequence. That is, there is an LPR move $(x_1,e)$ turning $T_1$ into $T_1'$, and there is an LPR sequence $(x_2, \ldots, x_k)$ turning $T_1'$ into $T_2$.
% such that $\{x_2, \ldots, x_k\} \subseteq X$.
Then by the inductive hypothesis $T_1 - \{x_1\} = T_1' - \{x_1\}$ and $T_1' - \{x_2, \ldots, x_k\} = T_2 - \{x_2, \ldots, x_k\}$. Thus, $T_1- X = T_1' - X = T_2 - X$, as required.
\qed
\end{proof}

\begin{lemma}\label{lem:lpr-order}
If there is an LPR sequence $L = (x_1, \ldots, x_k)$ turning $T_1$ into $T_2$, 
then for any $i \in [k]$, there is an LPR sequence $L' = (x'_1, \ldots, x'_k)$
turning $T_1$ into $T_2$ such that $x'_1 = x_i$ and $\{x_1, \ldots, x_k\} = \{x'_1, \ldots, x'_k\}$.
\end{lemma}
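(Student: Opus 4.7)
The plan is to derive Lemma~\ref{lem:lpr-order} from a simple \emph{adjacent swap property}: given any two-move LPR sequence $((\ell_1, e_1), (\ell_2, e_2))$ turning a tree $T$ into a tree $T''$, there exist $e_1', e_2'$ such that $((\ell_2, e_2'), (\ell_1, e_1'))$ also turns $T$ into $T''$. Once this is in hand, I would bring $x_i$ to the front of $L$ by $i - 1$ adjacent swaps: first swap positions $i - 1$ and $i$, then positions $i - 2$ and $i - 1$, and so on. Each swap preserves the sequence length $k$, the multiset of leaves it touches, and the property of turning $T_1$ into $T_2$, since only the intermediate tree ``between'' the two swapped moves changes while the tree before and the tree after the swapped pair are identical by the swap property applied locally. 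After $i - 1$ such swaps I obtain the required $L' = (x'_1, \ldots, x'_k)$ with $x'_1 = x_i$ and $\{x'_1, \ldots, x'_k\} = \{x_1, \ldots, x_k\}$.

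For the swap property itself, I would invoke Lemma~\ref{lem:lpr-seq-equiv} twice. First, the ``if'' direction applied to $((\ell_1, e_1), (\ell_2, e_2))$ yields $T - \{\ell_1, \ell_2\} = T'' - \{\ell_1, \ell_2\}$. Second, I would re-invoke the ``only if'' direction with $X := \{\ell_1, \ell_2\}$, but reading its inductive proof rather than using it as a black box: the induction step begins by letting $x$ be \emph{an arbitrary element} of $X$ and constructs an LPR sequence starting with a move on $x$, with the remainder of the argument not depending on the specific choice. Taking $x := \ell_2$ produces a length-two LPR sequence starting with $\ell_2$ that turns $T$ into $T''$, which is exactly the swap. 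The degenerate case $\ell_1 = \ell_2$ is trivial.

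The main (and in fact only) subtlety is ensuring that the inductive construction in the proof of Lemma~\ref{lem:lpr-seq-equiv} is genuinely parametric in the choice of first element. Since that proof treats $x$ as a free parameter and its subsequent case analysis never uses any property distinguishing $x$ from the other elements of $X$, this verification is immediate. Should one prefer not to reopen the proof of Lemma~\ref{lem:lpr-seq-equiv}, an alternative is to prove the adjacent swap property directly by a short case analysis on where $\ell_1$ and $\ell_2$ sit in $T$ relative to the grafting edges $e_1$ and $e_2$; but the route above is cleaner and keeps the proof essentially a corollary of Lemma~\ref{lem:lpr-seq-equiv}.
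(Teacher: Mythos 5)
Your proof is correct, and it ultimately rests on exactly the same observation as the paper's: the inductive construction in the proof of Lemma~\ref{lem:lpr-seq-equiv} begins by choosing an \emph{arbitrary} element $x \in X$ to displace first, and nothing downstream in that proof depends on which element was chosen. The difference is in how you deploy that observation. The paper applies it once, to the full set $X = \{x_1, \ldots, x_k\}$: the forward direction of Lemma~\ref{lem:lpr-seq-equiv} gives $T_1 - X = T_2 - X$, and re-running the construction with $x := x_i$ chosen first yields the desired $L'$ in a single step. You instead restrict the observation to two-element sets to extract an adjacent-swap property, and then bubble $x_i$ to the front with $i-1$ transpositions. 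Both routes are valid; yours buys slightly more (it preserves the relative order of the other $k-1$ moves, and the swap property is a reusable local statement), but at the cost of extra machinery --- the chaining argument, the check that moves before and after the swapped pair remain valid because the bracketing trees are unchanged, and the degenerate case $\ell_1 = \ell_2$ --- none of which the lemma actually requires. Once you have convinced yourself that the construction in Lemma~\ref{lem:lpr-seq-equiv} is genuinely parametric in its first element (which is the one real piece of verification in either argument), you may as well invoke it on all of $X$ at once rather than only on pairs; that is precisely what the paper does.
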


\begin{proof}
Consider again the proof  that if $T_1 - X = T_2 - X$ then there is an LPR sequence $(x_1, \dots x_k)$ turning $T_1$ into $T_2$ such that $X = \{x_1, \dots, x_k\}$ (given in the proof of Lemma~\ref{lem:lpr-seq-equiv}).
When $|X| > 1$, we construct this sequence by concatenating the LPR move $(x,e)$ with an LPR sequence of length $|X|-1$, where $x$ is an arbitrary element of $X$.
As we could have chosen any element of $X$ to be $x$, we have the following: 
If $T_1 - X = T_2 - X$ then for each $x\in X$, there is an LPR sequence $(x_1, \dots, x_k)$ turning $T_1$ into $T_2$ such that $X = \{x_1, \dots, x_k\}$ and $x_1 = x$.

Thus our proof is as follows: Given an LPR sequence $L = (x_1, \ldots, x_k)$ 
turning $T_1$ into $T_2$ and some $i \in [k]$, Lemma~\ref{lem:lpr-seq-equiv} implies that $T_1 - \{x_1, \ldots, x_k\}  = T_2 - \{x_1, \ldots, x_k\}$.
By the observation above, this implies that there is an LPR sequence $(x'_1, \dots, x'_k)$ turning $T_1$ into $T_2$ such that $\{x_1, \ldots, x_k\} = \{x'_1, \ldots, x'_k\}$ and $x'_1 = x$.
\qed
\end{proof}

\section{Proof of Theorem~\ref{thm:fpt-in-d}}

This section makes use of the concept of LPR moves, which are introduced in the previous section.
As discussed in the main paper, we employ a branch-and-bound style algorithm, in which at each step we alter a candidate solution by pruning and regrafting a leaf.
That is, we apply an LPR move.

The technically challenging part is bound the number of possible LPR moves to try. To do this, we will prove Lemma~\ref{lem:disagreementKernel2}, which provides a bound on the number of leaves to consider, and Lemma~\ref{lem:dont-check-too-many-trees}, which bounds the number of places a leaf may be regrafted to.

%Before we can proceed to the algorithm, we need some more tools that will be of use in 
%bounding the number of relevant LPR moves.
%We call $X \subseteq \lblset$ a \emph{minimal disagreement} between $T_1$ and $T_2$
%if $T_1 - X = T_2 - X$ and for any $X' \subset X$, $T_1 - X' \neq T_2 - X'$.
%In the next two Lemmas, we give an upper bound of $O(d^2)$ on the size of the set of labels
%that may belong to a minimal disagreement of size at most $d$, which can be computed in time $O(n^2)$.

Denote by $tr(T)$ the set of rooted triplets of a tree $T$.
Two triplets $\triplet_1 \in tr(T_1)$ and $\triplet_2 \in tr(T_2)$ are \emph{conflicting}
if $\triplet_1 = ab|c$ and $\triplet_2 \in \{ac|b, bc|a\}$.
We denote by $conf(T_1, T_2)$ the set of triplets of $T_1$ for which 
there is a conflicting triplet in $T_2$.
That is, $conf(T_1, T_2) = \{ab|c \in tr(T_1) : ac|b \in tr(T_2)$ or $bc|a \in tr(T_2)\}$.
Finally we denote by $confset(T_1, T_2) = \{ \{a,b,c\} : ab|c \in conf(T_1, T_2) \}$,
\textit{i.e.} the collection of 3-label sets formed by conflicting triplets.
Given a collection $C = \{S_1, \ldots, S_{|C|}\}$ of sets, 
a \emph{hitting set} of $C$ is a set $S$ such that 
$S \cap S_i \neq \emptyset$ for each $S_i \in C$.

\begin{lemma}\label{lem:hit-triplets}
Let $X \subseteq \lblset$.  Then $T_1 - X =T_2 - X$ if and only if 
$X$ is a hitting set of $confset(T_1, T_2)$.
\end{lemma}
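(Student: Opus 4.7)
The plan is to reduce equality of pruned trees to agreement on triplets, and then translate triplet agreement into a hitting-set condition on $confset(T_1,T_2)$.

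First I would invoke the standard fact that two rooted binary trees on the same label set are equal if and only if they induce the same set of rooted triplets. Applied to $T_1-X$ and $T_2-X$, which are rooted binary trees on label set $\lblset\setminus X$, this gives $T_1-X=T_2-X$ if and only if $tr(T_1-X)=tr(T_2-X)$.

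Next, I would observe that for any $T$ and any $X\subseteq\lblset$, a triplet $ab|c$ belongs to $tr(T-X)$ precisely when $\{a,b,c\}\subseteq\lblset\setminus X$ and $ab|c\in tr(T)$; this is immediate from the definition of restriction since $T-X$ restricted to $\{a,b,c\}$ equals $T$ restricted to $\{a,b,c\}$ whenever $\{a,b,c\}$ is disjoint from $X$. Combining with the previous step, $T_1-X=T_2-X$ holds if and only if, for every $3$-element set $\{a,b,c\}\subseteq\lblset\setminus X$, the tree $T_1$ and the tree $T_2$ induce the same triplet on $\{a,b,c\}$.

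Finally I would translate this into the hitting-set condition. Since the trees are binary, for any $3$-element $\{a,b,c\}$, exactly one of the three triplets $ab|c$, $ac|b$, $bc|a$ appears in each $T_i$. Thus $T_1$ and $T_2$ disagree on $\{a,b,c\}$ if and only if $\{a,b,c\}\in confset(T_1,T_2)$. Therefore the condition ``for every $\{a,b,c\}\subseteq\lblset\setminus X$, $T_1$ and $T_2$ agree on $\{a,b,c\}$'' is equivalent to ``no element of $confset(T_1,T_2)$ is disjoint from $X$'', which is exactly the statement that $X$ hits every set in $confset(T_1,T_2)$. This chain of equivalences yields the lemma.

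There is essentially no obstacle here; the only point requiring a little care is the direction of the equivalence between $tr(T-X)$ and $tr(T)$ on label sets disjoint from $X$, which follows directly from the definition of the restriction operation.
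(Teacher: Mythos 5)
Your proof is correct and follows essentially the same route as the paper's: reduce tree equality to triplet-set equality via the standard characterization, observe that $tr(T-X)$ consists exactly of the triplets of $T$ avoiding $X$, and translate agreement on all remaining triples into the hitting-set condition on $confset(T_1,T_2)$. Your final step is slightly more explicit than the paper's about why disagreement on a triple is equivalent to membership in $confset(T_1,T_2)$, but the argument is the same.
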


\begin{proof}
It is known that for two rooted trees $T_1, T_2$ that are leaf-labelled and binary,  $T_1 = T_2$ if and only if $tr(T_1) = tr(T_2)$~\cite{bryant1997building}.
% \textcolor{orange}{Here I cite D.Bryant's thesis, because everyone refers to this for that fact.  I haven't checked myself that it does cover that (people cite DBryant's thesis for whatever, so ideally this should be checked...)}.
Note also that $tr(T - X) = \{ab|c \in tr(T_1): X\cap \{a,b,c\}= \emptyset\}$ for any tree $T$ and $X\subseteq \lblset$.

Therefore we have that  $T_1 - X = T_2 - X$ if and only if  $tr(T_1 - X) = tr(T_2 - X)$, which holds if and only if for every $a,b,c \in \lblset \setminus X$, if $ab|c \in tr(T_1)$ then $ab|c \in tr(T_2)$. This in turn occurs if and only if $X$ is a hitting set for $confset(T_1, T_2)$.
\qed
\end{proof}

In what follows, we call $X \subseteq \lblset$ a \emph{minimal disagreement} between $T_1$ and $T_2$
if $T_1 - X = T_2 - X$ and for any $X' \subset X$, $T_1 - X' \neq T_2 - X'$.

% THIS HAS BEEN MOVED
\begin{lemma}\label{lem:must-move-x}
Suppose that $d < d_{LR}(T_1, T_2) \leq d' + d$ with $d' \leq d$, and that there is a tree $T^*$ and subsets $X_1, X_2 \subseteq \lblset$ such that 
$T_1 - X_1 = T^* - X_1$, $T_2 - X_2 = T^* - X_2$
and $|X_1| \leq d',|X_2| \leq d$.
Then, there is a minimal disagreement $X$ between $T_1$ and $T_2$ of size at most $d + d'$ and $x \in X$ such that $x \in X_1 \setminus X_2$.
\end{lemma}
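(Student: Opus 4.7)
The plan is to exhibit $X_1 \cup X_2$ as a disagreement between $T_1$ and $T_2$ of size at most $d+d'$, then extract a minimal disagreement inside it, and finally argue that this minimal disagreement must meet $X_1 \setminus X_2$ by a pigeonhole-style argument using the assumption $d_{LR}(T_1,T_2) > d$.

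First I would show $X_1 \cup X_2$ is a disagreement. Using the hypothesis $T_i - X_i = T^* - X_i$ for $i \in \{1,2\}$ and the fact that leaf-removal is idempotent and commutative in the sense that $(T - A) - B = T - (A \cup B)$, I can write
\[
T_1 - (X_1 \cup X_2) = (T_1 - X_1) - X_2 = (T^* - X_1) - X_2 = T^*-(X_1\cup X_2),
\]
and the symmetric computation starting from $T_2$ gives the same tree. Hence $T_1 - (X_1\cup X_2) = T_2 - (X_1\cup X_2)$, so $X_1 \cup X_2$ is indeed a disagreement, of size at most $|X_1|+|X_2| \leq d'+d$.

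Next I would choose $X \subseteq X_1 \cup X_2$ to be an inclusion-minimal subset satisfying $T_1 - X = T_2 - X$. Such $X$ is a minimal disagreement in the sense of the paper's definition: any proper subset $X' \subset X$ would be a smaller subset of $X_1 \cup X_2$ that is still a disagreement, contradicting minimality. Clearly $|X| \leq |X_1 \cup X_2| \leq d+d'$.

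The main (and only delicate) step is to force $X \cap (X_1 \setminus X_2) \neq \emptyset$. I would argue by contradiction: if $X \cap (X_1 \setminus X_2) = \emptyset$, then $X \subseteq X_2$, so $|X| \leq |X_2| \leq d$. But $X$ is a disagreement, so this would give $d_{LR}(T_1,T_2) \leq |X| \leq d$, contradicting the hypothesis $d_{LR}(T_1,T_2) > d$. Therefore some $x \in X$ lies in $X_1 \setminus X_2$, completing the proof. I do not anticipate any technical obstacle here: all the work is packaged in the identity $(T-A)-B = T-(A\cup B)$, which follows directly from the definition of leaf-removal, together with the straightforward observation that any disagreement of size at most $d$ would violate the lower bound on $d_{LR}(T_1,T_2)$.
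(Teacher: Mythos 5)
Your proposal is correct and follows essentially the same route as the paper: take $X_1\cup X_2$ as a disagreement of size at most $d+d'$ (via $T_1-(X_1\cup X_2)=T^*-(X_1\cup X_2)=T_2-(X_1\cup X_2)$), extract an inclusion-minimal disagreement $X$ inside it, and observe that since any disagreement has size at least $d_{LR}(T_1,T_2)>d\geq |X_2|$, the set $X$ cannot be contained in $X_2$ and hence meets $X_1\setminus X_2$. The paper phrases this last step directly ($|X|>|X_2|$ forces $X\setminus X_2\neq\emptyset$) rather than by contradiction, but the content is identical.
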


\begin{proof}
Let $X' = X_1 \cup X_2$.
Observe that $T_1 - X' = T^* - X'= T_2 - X'$ and $|X'| \leq d+d'$.
 Letting $X$ be the minimal subset of $X'$ such that $T_1 - X = T_2 - X$, we have that $X$ is a minimal disagreement between $T_1$ and $T_2$ and $|X| \leq d+d'$.
 Furthermore as $|X| \geq  d_{LR}(T_1, T_2) > d$,
 $|X\setminus X_2| > 0$, and so there is some $x \in X$ with $x \in X\setminus X_2 = X_1 \setminus X_2$.
\qed
\end{proof}

%We may now prove Lemma~\ref{lem:disagreementKernel2}.
%First recall the Lemma statement.

We are now ready to state and prove Lemma~\ref{lem:disagreementKernel2}.

% \noindent
% \textbf{Lemma~\ref{lem:disagreementKernel}}
% \emph{
% (restated).
% Suppose that $d_{LR}(T_1, T_2) \leq d$ for some integer $d$.
% Then, there is some $S \subseteq \lblset$ such that $|S| \leq d^3$, and for any minimal disagreement $X$ between $T_1$ and $T_2$ with $|X| \leq d$, $X \subseteq S$.
% Moreover $S$ can be found in time $O(n^3)$.
% }

% \begin{proof}
% By Lemma~\ref{lem:hit-triplets}, any minimal disagreement $X$ of size at most $d$
% is a hitting set of $confset(T_1, T_2)$.
% It is shown in~\cite{van2014towards,bevern2012} that,
% given a collection $C$ of sets of size $k$ over universe $U$, 
% a subset $U' \subseteq U$ that contains every minimal hitting set of $C$ 
% of size at most $d$ 
% such that $|U'| \leq d^k$ can be found in time $O(|C| + |U|)$.
% The Lemma is obtained by letting $C = confset(T_1, T_2), U = \lblset$ and
% $S = U'$, and observing that $confset(T_1, T_2)$ has size at most $O(n^3)$.
% \end{proof}

%{\color{red} Imrpoved disagreement kernel:}

\begin{lemma}\label{lem:disagreementKernel2} 
%
%\textbf{Lemma~\ref{lem:disagreementKernel2}}
%\emph{
%(restated).
Suppose that $d_{LR}(T_1, T_2) \leq d$ for some integer $d$.  Then, there is some $S \subseteq \lblset$ such that $|S| \leq 8d^2$, and for any minimal disagreement $X$ between $T_1$ and $T_2$ with $|X| \leq d$, $X \subseteq S$.
Moreover $S$ can be found in time $O(n^2)$.
%}
\end{lemma}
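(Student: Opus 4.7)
By Lemma~\ref{lem:hit-triplets}, the minimal disagreements between $T_1$ and $T_2$ are exactly the minimal hitting sets of the $3$-uniform hypergraph $C := confset(T_1, T_2)$, so the task reduces to producing a set $S \subseteq \lblset$ of size at most $8d^2$ that contains every minimal hitting set of $C$ of size at most $d$. The plan is a two-level greedy construction in the style of standard $3$-hitting-set kernelizations.

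First, greedily build a maximal sub-collection $F \subseteq C$ of pairwise disjoint triples. If $|F| > d$ then no hitting set of size $\leq d$ exists and $S := \emptyset$ suffices vacuously, so assume $|F| \leq d$. Let $V := \bigcup F$, so $|V| \leq 3d$, and by maximality of $F$ every triple of $C$ meets $V$. Second, for each $v \in V$ form the \emph{link graph} $G_v$ on $\lblset \setminus \{v\}$ whose edges are the pairs $\{a,b\}$ with $\{v,a,b\} \in C$, and greedily compute a maximal matching $M_v$ of $G_v$. If $|M_v| > d$ then $v$ must lie in every hitting set of size $\leq d$ (otherwise the $>d$ disjoint matching edges, each inducing a distinct triple through $v$, would need to be hit by distinct elements outside $\{v\}$); in this case we truncate $M_v$ to $d$ edges. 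Setting $S := V \cup \bigcup_{v \in V} V(M_v)$ then yields $|S| \leq 3d + 3d\cdot 2d \leq 8d^2$ for $d \geq 2$, and the small cases can be checked by hand.

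The key step is to verify $X \subseteq S$ for every minimal hitting set $X$ of $C$ with $|X| \leq d$. For $y \in X$ with $y \in V$ we are done, so assume $y \notin V$. Minimality of $X$ yields a witness triple $A_y \in C$ with $A_y \cap X = \{y\}$, and maximality of $F$ provides some $u \in A_y \cap V$; write $A_y = \{u, y, w\}$. Then $\{y,w\}$ is an edge of $G_u$, and maximality of $M_u$ forces either $\{y,w\} \in M_u$ (placing $y \in V(M_u) \subseteq S$) or at least one of $y,w$ into $V(M_u)$. The easy subcase is $y \in V(M_u)$. The main obstacle is the residual subcase where only $w \in V(M_u)$: I plan to handle this by exploiting either a second $V$-element of $A_y$ (when $|A_y \cap V| = 2$, by symmetric reasoning in $G_w$) or, failing that, a different witness triple for $y$ (which must exist when $y$ touches at least two link graphs). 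For the remaining degenerate configurations, one slightly enlarges $S$ by including, for each matched vertex, a bounded number of additional ``second-neighbor'' elements, and a refined count still fits within the $8d^2$ budget; this case analysis is the technical heart of the proof.

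For the running time, observe that under the hypothesis $d_{LR}(T_1, T_2) \leq d$ every triple in $C$ contains one of the $\leq d$ leaves of any minimum disagreement, so $|C| = O(n^2)$; with standard LCA preprocessing of $T_1$ and $T_2$, the triples of $C$ through a given leaf can be enumerated in $O(n^2)$ time, and the greedy constructions of $F$ and of the matchings $M_v$ then fit into a total budget of $O(n^2)$.
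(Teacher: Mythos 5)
Your opening move---reducing via Lemma~\ref{lem:hit-triplets} to covering all minimal hitting sets of the $3$-uniform hypergraph $confset(T_1,T_2)$---matches the paper's first step exactly. But the construction you then give does not work, and the gap you defer (``the residual subcase where only $w \in V(M_u)$'') is not a technicality; it is where the statement actually lives. Concretely, take conflict triples $\{v,a,b_1\},\dots,\{v,a,b_m\}$ with $m \leq d$ (realizable from two trees where only $v$ is misplaced). Your phase one gives $F = \{\{v,a,b_1\}\}$, your maximal matchings in the link graphs are single edges, and $S = \{v,a,b_1\}$; yet $\{b_1,\dots,b_m\}$ is a minimal hitting set of size $\leq d$, so $b_2,\dots,b_m$ are missed. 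Worse, the patch you sketch (adding ``second neighbors'' of matched vertices) cannot be made to fit the $8d^2$ budget by any argument that treats $confset(T_1,T_2)$ as an arbitrary $3$-uniform hypergraph: for general such hypergraphs with a hitting set of size $\leq d$, the union of all minimal hitting sets of size $\leq d$ can have size $\Omega(d^3)$ (nested stars: triples $\{c_p, x_{pq}, y_{pqr}\}$ with $k+m+s \approx d$ give $kms \approx d^3/27$ leaves each lying in some minimal hitting set of size $\leq d$). So an $O(d^2)$ bound is simply false at the level of generality at which you are arguing.

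The paper escapes this by using the tree structure that your proof never touches. It builds a ``join tree'' $T_J$ from a MAST-difference set $X$ (with $|X|\leq d$, duplicated into copies $X_1,X_2$ placed as $T_1$ and $T_2$ respectively want them), shows that no conflict triple meets the part of $\lblset$ outside the subtree spanned by $X_1\cup X_2$, and partitions the remaining labels along the $O(d)$ edges of $T_J|_{X_1\cup X_2}$ into segments $s(uv)$, each subdivided into ``dangling clades.'' The key structural fact (Observation~3 there) is that a minimal disagreement intersects each segment in either nothing, everything, or everything except a single dangling clade---so each of the $O(d)$ segments contributes at most $2d$ labels to $S$, yielding $8d^2$. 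If you want to salvage your route, you would need to prove an analogue of that observation for your link-graph decomposition, i.e., show that the tree-induced conflict hypergraph cannot contain the nested-star configurations that make the generic bound $\Omega(d^3)$; as written, your proof has no mechanism for doing so. (Separately, your running-time paragraph enumerates $confset(T_1,T_2)$ explicitly, which already costs $\Omega(dn^2)$ rather than $O(n^2)$; the paper avoids materializing the conflict set at all.)
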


We will call $S$ as described in Lemma~\ref{lem:disagreementKernel2} a
\emph{$d$-disagreement kernel} between $T_1$ and $T_2$.
Thus Lemma~\ref{lem:must-move-x} essentially states that if $T_1$ isn't a solution and $d_{LR}(T_1, T_2) > d$, then for $T_1$ to get closer to a solution, there is a leaf $x$ in the $d_{LR}(T_1, T_2)$-disagreement kernel that needs to be removed and regrafted in a location
that $T_2$ `agrees with'.
Lemma~\ref{lem:disagreementKernel2} in turn gives us a set $S$ of size at most $8d^2$ such  that the desired $x$ must be contained in $S$.

\begin{proof}
By Lemma~\ref{lem:hit-triplets}, it is enough to find a set $S$ such that $S$ contains every minimal hitting set of $confset(T_1,T_2)$ of size at most $d$.

We construct $S$ as follows.

Let $X$ be a subset of $\lblset$ of size at most $d$ such that $T_1 - X = T_2 - X$. 
As previously noted, this can found in time $O(n \log n)$~\cite{DBLP:journals/siamcomp/ColeFHPT00}.

For notational convenience, for each $x \in X$  we let $x_1,x_2$ be two new labels, and set $X_1 = \{x_1: x \in X\}$, $X_2 = \{x_2: x \in X\}$.
Thus, $X_1,X_2$ are disjoint ``copies'' of $X$.
Let $T_1'$ be derived from $T_1$ by replacing every label from $X$ with the corresponding label in $X_1$, and similarly
let $T_2'$ be derived from $T_2$ by replacing every label from $X$ with the corresponding label in $X_2$.

Let $T_J$ be a tree with label set $(\lblset \setminus X) \cup X_1 \cup X_2$ such that $T_J - X_2 = T_1'$ and $T_J -X_1 =T_2'$.
The tree $T_J$ always exists and can be found in polynomial time.  Intuitively, we can start from $T_1'$, and graft the leaves of $X_2$ where $T_2$ ``wants'' them to be.
See Figure~\ref{fig:joinTree} for an example.
Algorithm~\ref{alg:join-trees} gives a method for constructing $T_J$, and takes $O(n^2)$ time.  
%It is possible that this could be improved to $O(n \log n)$ time.
%{\color{orange} Having the algorithm makes the proof more complete.  It's annoying, since it'd be nice to just say that we can build $T_J$ by starting from $T_1'$ and grafting the other leaves of $T_2'$ where they ``belong''.  But we can't, so I guess we can leave the algorithm there.}

\begin{figure}%
\centering
\subfloat{\includegraphics[width=\textwidth] {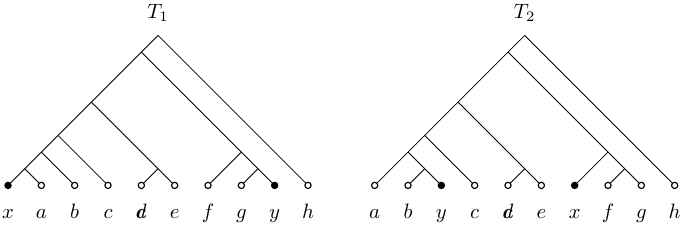}}\qquad
\subfloat{\includegraphics[width=0.5\textwidth] {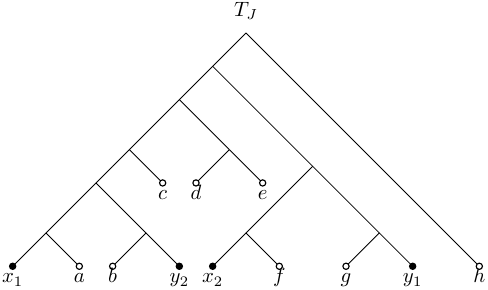}}\qquad
\caption{Construction of the tree $T_J$, given two trees $T_1, T_2$, with $X = \{x,y\}$ such that $T_1 - X = T_2 - X$.}\label{fig:LPRsequence}
\label{fig:joinTree}
\end{figure}

\begin{algorithm}
\caption{Algorithm to construct ``Join tree'' of $T_1',T_2'$}
\begin{algorithmic}[1]
\Procedure{join-trees}{$T_1', T_2',  L', X_1', X_2'$}
\Statex $T_1'$ is a tree on $L' \cup X_1'$, 
$T_2'$ is a tree on $L' \cup X_2'$,
$T_1'|_{L'} = T_2'|_{L'}$.
Output: A tree $T_J$ on $L' \cup X_1' \cup X_2'$
such that $T_J|_{L' \cup X_1'} = T_1'$
and $T_J|_{L' \cup X_2'} = T_2'$

\If{$L' \cup X_1'=\emptyset$}
  \State Return $T_2'$
\ElsIf{$L' \cup X_2'=\emptyset$}
  \State Return $T_1'$
\ElsIf{$X_1' \cup X_2' = \emptyset$}
  \State Return $T_1'$
\EndIf

\State Set $r_1 =$ root of $T_1'$, $u,v$ the children of $r_1$
\State Set $r_2 =$ root of $T_2'$, $w,z$ the children of $r_2$
\State Set $X_{1u} =$ descendants of $u$ in $X_1'$,  $L_{1u} =$ descendants of $u$ in $L'$
\State Set $X_{1v} =$ descendants of $v$ in $X_1'$,  $L_{1v} =$ descendants of $v$ in $L'$
\State Set $X_{2w} =$ descendants of $w$ in $X_2'$,  $L_{2w} =$ descendants of $w$ in $L'$
\State Set $X_{2z} =$ descendants of $z$ in $X_2'$,  $L_{2z} =$ descendants of $z$ in $L'$

\If{$L_{1u} = L_{2w}$ and $L_{1v} = L_{2z}$}
  \State Set $T_{left} = \textsc{join-trees}(T_1'|_{L_{1u} \cup X_{1u}}, T_2'|_{L_{1u} \cup X_{2w}}, L_{1u}, X_{1u}, X_{2w})$
  \State Set $T_{right} = \textsc{join-trees}(T_1'|_{L_{1v} \cup X_{1v}}, T_2'|_{L_{1v}\cup X_{2z}}, L_{1v}, X_{1v}, X_{2z})$
\ElsIf{$L_{1u} = L_{2z}$ and $L_{1v} = L_{2w}$}
  \State Set $T_{left} = \textsc{join-trees}(T_1'|_{L_{1u} \cup X_{1u}}, T_2'|_{L_{1u}\cup X_{2z}}, L_{1u}, X_{1u}, X_{2z})$
  \State Set $T_{right} = \textsc{join-trees}(T_1'|_{L_{1v} \cup X_{1v}}, T_2'|_{L_{1v} \cup X_{2w}}, L_{1v}, X_{1v}, X_{2w})$
\ElsIf{$L_{1u} = _\emptyset$}
  \State Set $T_{left} = \textsc{join-trees}(T_1'|_{X_{1u}}, T_2'|_\emptyset, \emptyset, X_{1u}, \emptyset)$
  \State Set $T_{right} = \textsc{join-trees}(T_1'|_{L' \cup X_{1v}}, T_2', L', X_{1v}, X_2')$
\ElsIf{$L_{1v} = \emptyset$}
  \State Set $T_{left} = \textsc{join-trees}(T_1'|_{L' \cup X_{1u}}, T_2', L', X_{1u}, X_2')$
  \State Set $T_{right} = \textsc{join-trees}(T_1'|_{X_{1v}}, T_2'|_\emptyset, \emptyset, X_{1v}, \emptyset)$
\ElsIf{$L_{2w} = \emptyset$}
  \State Set $T_{left} = \textsc{join-trees}(T_1'|_{\emptyset}, T_2'|_{X_{2w}}, \emptyset, \emptyset, X_{2w})$
  \State Set $T_{right} = \textsc{join-trees}(T_1', T_2'|_{L' \cup X_{2z}}, L', X_1', X_{2z})$
\ElsIf{$L_{2z} = \emptyset$}
  \State Set $T_{left} = \textsc{join-trees}(T_1', T_2'|_{L' \cup X_{2w}}, L', X_1', X_{2w})$
  \State Set $T_{right} = \textsc{join-trees}(T_1'|_{\emptyset}, T_2'|_{X_{2z}}, \emptyset, \emptyset, X_{2z})$
\EndIf
\Comment{If none of the above cases holds, then $T_1'|_{L'} \neq T_2'|_{L'}$, contradicting the requirements on the input}

\State Set $T_J =$ the tree on $L' \cup X_1' \cup X_2'$ whose root has $T_{left}$ and $T_{right}$ as children.
\State Return $T_J$.
\EndProcedure
\end{algorithmic}\label{alg:join-trees}
\end{algorithm}

In addition, let $L$ be the set of all labels in $\lblset\setminus X$ that are descended in $T_J$ from  $\lca_{T_J}(X_1 \cup X_2)$, and let $R = \lblset \setminus (L\cup X)$.
Thus, $L,X,R$ form a partition of $\lblset$, and $L,X_1,X_2,R$ form a partition of the labels of $T_J$.

For the rest of the proof, we call $\{x,y,z\}$ a \emph{conflict triple} if $\{x,y,z\} \in confset(T_1,T_2)$.

We first observe that no triple in $confset(T_1,T_2)$ contains a label in $R$. Indeed, consider a triple $\{x,y,z\}$. Any conflict triple must contain a label from $X$, so assume without loss of generality that $x \in X, z \in R$.
If $x \in X, y \in L, z \in R$, then we have that $T_J$ contains the triplets $x_1y|z, x_2y|z$, and so $T_1$ and $T_2$ both contain $xy|z$, and $\{x,y,z\}$ is not a conflict triple. Similarly if $x,y \in X, z \in R$, then $T_J$ contains the triplets $x_1y_1|z, x_2y_2|z$, and again $\{x,y,z\}$ is not a conflict triple.
If $x \in X$ and $y,z \in R$, then the triplet on $\{x_1,y,z\}$ in $T_J$ depends only on the relative positions in $T_J$ of $y,z$ and $\lca_{T_J}(X_1 \cup X_2)$. Thus we get the same triplet if we replace $x_1$ with $x_2$, and so $\{x,y,z\}$ is not a conflict triple.

This concludes the proof that no triple in $confset(T_1,T_2)$ contains a label in $R$. Having shown this, we may conclude that any minimal disagreement between $T_1$ and $T_2$ is disjoint from $R$, and so our returned set $S$ only needs to contain labels in $L \cup X$. 
% We therefore assume in what follows that $R = \emptyset$.

Now consider the tree $T^* = T_J|_{X_1 \cup X_2}$, \textit{i.e.} the subtree of $T_J$ restricted to the labels in $X_1 \cup X_2$.
Thus in the example of Figure~\ref{fig:joinTree}, $T^*$ is the subtree of $T_J$ spanned by $\{x_1,x_2,y_1,y_2\}$.
We will now use the edges of $T^*$ to form a partition of $L$, as follows.
For any edge $uv$ in $T^*$ with $u$ the parent of $v$, let $s(uv)$ denote the set of labels $y \in \lblset$ such that $y$ has an ancestor  which is an internal node on the path from $u$ to $v$ in $T_J$, but $y$ is not a descendant of $v$ itself.
For example in Figure~\ref{fig:joinTree}, if $u$ is the least common ancestor of $x_1,y_1$ and $v$ is the least common ancestor of $x_1,y_2$, then $uv$ is an edge in $T^*$ and $s(uv) = \{c,d,e\}$.

Observe that $\{s(uv):uv \in E(T^*)\}$ forms a partition of $L$. 
(Indeed, for any $l \in L$, 
let $u$ be the minimal element in $T^*$ on the path in $T_J$ between $l$ and $\lca_{T_J}(X_1 \cup X_2)$ (note that $u$ exists as $\lca_{T_J}(X_1 \cup X_2)$ itself is in $T^*$). As $u$ is in $T^*$, both of its children are on paths in $T_J$ between $u$ and a child of $u$ in $T^*$. In particular, the child of $u$ which is an ancestor of $l$ is an internal node on the path between $u$ and $v$ in $T_J$, for some child $v$ of $u$ in $T^*$, and $l$ is not descended from $v$ by construction.
% let $v$ be a maximal vertex in $T^*$ for which $l$ is descended in $T_J$ from an ancestor of $v$ but not from $v$ itself. Then letting $u$ be the parent of $v$ in $T^*$, we have $l \in s(uv)$.
It is clear by construction that all $s(uv)$ are disjoint.)

The main idea behind the construction of $S$ is that we will add $X$ to $S$, together with $O(d)$ labels from $s(uv)$ for each edge $uv$ in $T^*$.
As the number of edges in $T^*$ is $2(|X_1 \cup X_2| -1) = O(d)$, we have the required bound of $O(d^2)$ on $|S|$. 
% {\color{orange} Is it $O(d^2)$ on $|S|$ ?}

So now consider $s(uv)$ for some edge $uv$ in $T^*$.
In order to decide which labels to add to $S$, we need to further partition $s(uv)$.
Let $u = u_0u_1\dots u_t = v$ be the path in $T_J$ from $u$ to $v$.
For each $i \in [t-1]$ (note that this does not include $i=0$), we call the set of labels descended from $u_i$ but not $u_{i+1}$ a \emph{dangling clade}.
Observe that the dangling clades form a partition of $s(uv)$.
%{\color{red} [NOTE: This definition may need to be changed if we want to extend the results to non-binary trees.]}
Thus in the example of Figure~\ref{fig:joinTree}, if $u$ is the least common ancestor of $x_1,y_1$ and $v$ is the least common ancestor of $x_1,y_2$, then for the edge $uv$ the dangling clades are $\{c\}$ and $\{d,e\}$f.

We now make the following observations about the relation between $s(uv)$ and triples in $confset(T_1,T_2)$. 
% First, observe that 

{\bf Observation 1:}
if $\{x,y,z\}$ is a conflict triple and $x \in s(uv), y,z \notin s(uv)$, then $\{x',y,z\}$ is also a conflict triple for any $x' \in s(uv)$.
(The intuition behind this is that there are no labels appearing 'between' $x$ and $x'$ that are not in $s(uv)$.)

% Secondly, we claim that 
{\bf Observation 2:}
for any triple $\{x,y,z\}$ with $x,y \in s(uv)$, $\{x,y,z\}$ is a conflict triple if and only if $x,y$ are in different dangling clades and $z \in X$ with $z_i$ descended from $v$, $z_{3-i}$ not descended from $u_1$ for some $i \in [2]$ (recall that $z_1 \in X_1$ and $z_2 \in X_2$). 
To prove one direction, it is easy to see that if the conditions hold, then $T_i$ displays either $xz|y$ or $yz|x$ (depending on which dangling clade appears 'higher'), and $T_{3-i}$ displays $xy|z$.
For the converse, observe first that $z\in X$ as $X$ is a hitting set for $confset(T_1,T_2)$ and $x,y \notin X$. Then if $xy$ are in the same dangling clade, we have that both $T_1$ and $T_2$ display $xy|z$. So $x,y$ must be in different dangling clades.
Next observe that each of $z_1,z_2$ must either be  descended from $v$ or not descended from $u_1$, as otherwise $v$ would not be the child of $u$ in $T^*$.
If $z_1,z_2$ are both descended from $v$ or neither are descended from $u_1$, then $T_1$ and $T_2$ display the same triplet on $\{x,y,z\}$. 
So instead one must be descended from $v$ and one not descended from $u_1$, as required.
% (note that any label in $X_1' \cup X_2'$ must either be descended from $v$ or not descended from $u_1$, as otherwise $v$ would not be the child of $u$ in $T^*$).

% We now prove that for any minimal hitting set $X'$ of $confset(T_1,T_2)$, one of the following holds:
Using Observations 1 and 2, we now prove the following:

{\bf Observation 3:}
for any minimal disagreement $X'$ between $T_1$ and $T_2$, one of the following holds:
\begin{itemize}
 \item $X' \cap s(uv) = \emptyset$;
 \item $s(uv) \subseteq X'$;
 \item $s(uv) \setminus X'$ forms a single dangling clade.
\end{itemize}

To see this, let $X'$ be any minimal hitting set of $confset(T_1,T_2)$ with $s(uv) \cap X' \neq \emptyset$ and $s(uv) \setminus X' \neq \emptyset$.
As $X'$ is minimal, any $x \in s(uv) \cap X'$ must be in a conflict triple $\{x,y,z\}$ with $y,z \notin X'$. 
As $X$ is a hitting set for $confset(T_1,T_2)$, at least one of $y,z$ must be in $X$. 
If $y,z \notin s(uv)$, then by Observation 1 $\{x',y,z\}$ is also a conflict triple for any $x' \in s(uv) \setminus X'$. But this is a contradiction as $\{x',y,z\}$ has no elements in $X'$. Then one of $y,z$ must also be in $s(uv)$. Suppose without loss of generality that $y \in s(uv)$. We must also have that $z \in X$, as $X$ is a hitting set for $confset(T_1,T_2)$ and $x,y \notin X$. By Observation 2, we must have that one of $z_1,z_2$ is descended from $v$, and the other is not descended from $u_1$.
This in turn implies (again by Observation 2) that for any $x' \in s(uv) \setminus X'$, if $x'$ and $y$ are in different dangling clades then $\{x',y,z\}$ is a conflict triple. Again this is a contradiction as $\{x',y,z\}$ has no elements of $X'$, and so we may assume that all elements of $s(uv) \setminus X'$ are in the same dangling clade.

It remains to show that every element of this dangling clade is in $s(uv) \setminus X'$.
To see this, suppose there exists some $x \in X'$ in the same dangling clade as the elements of $s(uv) \setminus X'$.
Once again we have that $x$ is in some conflict triple $\{x,y,z\}$ with $y,z \notin X'$, and if $y,z \notin s(uv)$ then $\{x',y,z\}$ is also a conflict triple for any $x' \in s(uv) \setminus X'$, a contradiction.
So we may assume that one of $y,z$ is in $s(uv) \setminus X'$. 
But all elements of $s(uv) \setminus X'$ are in the same dangling clade as $x$, and so by Observation 2 $\{x,y,z\}$  cannot be a conflict triple, a contradiction.
So finally we have that all elements  of $s(uv) \setminus X'$ are in the same dangling clade and all elements of this clade are in $s(uv) \setminus X'$, as required.

With the proof of Observation 3 complete, we are now in a position to construct $S$.
For any minimal hitting set $X'$ of $confset(T_1,T_2)$ with size at most $d$, by Observation 3 either 
$X' \cap s(uv) = \emptyset$, or $s(uv) \subseteq X'$ (in which case $|s(uv)|\leq d$), or $s(uv) \setminus X'$ forms a single dangling clade $C$ (in which case  $|s(uv) \setminus C| \leq d$).

So add all elements of $X$ to $S$.
For all $uv \in E(T_J)$ and any dangling clade $C$ of labels in $s(uv)$, add $s(uv) \setminus C$ to $S$ if $|s(uv) \setminus C| \leq d$.
Observe that this construction adds at most $2d$ labels from $s(uv)$ to $S$.

Thus, in total, we have that the size of $S$ is at most $|X| + 2d|E(T_J)| \leq d+2d(2(|X_1 \cup X_2|-1)) \leq d + 2d(4d-2) = 8d^2-3d \leq 8d^2$.  %= O(d^2)$.

Algorithm~\ref{alg:disagreementKernel2} describes the full procedure formally. 
The construction of $T_J$ occurs once and as noted above takes $O(n^2)$ time.
As each other line in the algorithm is called at most $n$ times and takes $O(n)$ time, the overall running time of the algorithm $O(n^2)$.
%Also, should these $O(n)$ be replaced by $O(n^2)$?}
\qed
\end{proof}

\begin{algorithm}
\caption{Algorithm to construct a $d$-disagreement kernel between $T_1$ and $T_2$}
\begin{algorithmic}[1]
\Procedure{disagreement-kernel}{$d, T_1, T_2$}
\Statex $T_1$ and $T_2$ are trees on $\lblset$,$d$ an integer.

Output: A set $S \subseteq \lblset$ such that for every minimal disagreement $X$ between $T_1$ and $T_2$ with $|X| \leq d$, $X \subseteq S$.

\State Find $X$  such that $|X| \leq d$ and $T_1  - X  = T_2 - X$
%TODO: I guess we could just copie the X_m we get from the main algorithm that calls this one...
\State Set $S = X$
\State Let $X_1, X_2$ be copies of $X$ and replace $T_1,T_2$ with corresponding trees $T_1',T_2'$ on $(\lblset \setminus X)\cup X_1, (\lblset \setminus X)\cup X_2$.
\State Let $T_J = \textsc{join-trees}(T_1', T_2', (\lblset \setminus X), X_1, X_2)$
\State Let $T^*= T_J|_{X_1 \cup X_2}$
\For{$uv \in E(T^*)$}
  \State Let $u = u_0u_1 \dots u_t = v$ be the path in $T_J$ from $u$ to $v$
  \State Let $s(uv) = \{l \in \lblset \setminus X: l$ is descended from $u_1$ but not from $v\}$
  \State Set $p = |s(uv)| - d$
  \Comment Any clade $C$ has $|C|\geq p$ iff $|s(uv) \setminus C| \leq d$
  \For{$i \in [t]$}
    \State Set $C = \{l \in s(uv): l$ is descended from $u_i$ but not from $u_{i+1} \}$
    \Comment{$C$ is a single 'dangling clade'}
    \If{$|C|\geq p$}
      \State Set $S = S \cup (s(uv) \setminus C)$
    \EndIf
  \EndFor
\EndFor
\State Return $S$.
\EndProcedure
\end{algorithmic}\label{alg:disagreementKernel2}
\end{algorithm}

The last ingredient needed for Theorem~\ref{thm:fpt-in-d} is Lemma~\ref{lem:dont-check-too-many-trees}, 
which shows that if a leaf $x$ of $T_1$ as described in Lemma~\ref{lem:must-move-x} has to be moved, then there are not too many 
ways to regraft it in order to get closer to $T^*$.
% We first restate the Lemma.

In the course of the following proofs, we will want to take observations about one tree and use them to make statements about another.
For this reason it's useful to have a concept of one node "corresponding" to another node in a different tree. 
In the case of leaf nodes this concept is clear - two leaf nodes are equivalent if they are assigned the same label- but for internal nodes there is not necessarily any such correspondence. 
However, in the case that one tree is the restriction of another to some label set, we can introduce a well-defined notion of correspondence:

Given two trees $T, T'$ such that $T' = T|_X$ for some $X \subseteq \lblset(T)$, and a node $u \in V(T')$, define the node $u_T$ of $T$ by $u_T = \lca_T(\L_{T'}(u))$.
That is, $u_T$ is the least common ancestor, in $T$, of the set of labels belonging to descendants of $u$ in $T'$. 
We call $u_T$ the \emph{node corresponding to $u$ in $T$}.

We note two useful properties of $u_T$ here: 

\begin{lemma}\label{lem:corrAncestor}
For any $T, T', X \subseteq \lblset(T)$ such that $T' = T|_X$ and any $u,v \in V(T')$, $u_T$ is an ancestor of $v_T$ if and only if $u$ is an ancestor of $v$.
\end{lemma}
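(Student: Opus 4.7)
My plan is to reduce both directions of the biconditional to the standard fact that, in a rooted binary tree, a node $p$ is an ancestor of a node $q$ if and only if $\L(q) \subseteq \L(p)$ (with $\L(q)$ non-empty). The bridge between leaf sets in $T$ and in $T'$ will be the identity
\[
\L_T(u_T) \cap X \;=\; \L_{T'}(u) \qquad \text{for every } u \in V(T'),
\]
which is a structural consequence of the definition $u_T = \lca_T(\L_{T'}(u))$ together with the way $T|_X$ is built from $T$: pruning leaves outside $X$ and suppressing the resulting degree-two vertices preserves descendant relations among the surviving internal nodes, so $u_T$ is exactly the node of $T$ whose $X$-labelled descendants are precisely $\L_{T'}(u)$.

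For the $(\Rightarrow)$ direction, I would observe that $u$ being an ancestor of $v$ in $T'$ immediately gives $\L_{T'}(v) \subseteq \L_{T'}(u)$, and then apply monotonicity of $\lca_T$: if $A \subseteq B$, then $\lca_T(B)$ is an ancestor in $T$ of every element of $A$, hence of $\lca_T(A)$. Applying this with $A = \L_{T'}(v)$ and $B = \L_{T'}(u)$ yields directly that $u_T$ is an ancestor of $v_T$.

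For the $(\Leftarrow)$ direction, assuming $u_T$ is an ancestor of $v_T$ in $T$, I would intersect the resulting inclusion $\L_T(v_T) \subseteq \L_T(u_T)$ with $X$ and invoke the key identity to conclude $\L_{T'}(v) \subseteq \L_{T'}(u)$. Since $\L_{T'}(v)$ is non-empty (every node of $T'$ has at least one leaf descendant), and since two non-nested subtrees in any tree have disjoint leaf sets, this inclusion forces $u$ to be an ancestor of $v$ in $T'$.

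The main technical point will be the identity $\L_T(u_T) \cap X = \L_{T'}(u)$. The inclusion $\supseteq$ is immediate because $u_T = \lca_T(\L_{T'}(u))$ ensures every element of $\L_{T'}(u) \subseteq X$ descends from $u_T$. The reverse inclusion is the one that genuinely uses the structure of the restriction operation: any $\ell \in X$ that descends from $u_T$ in $T$ must also descend from $u$ in $T'$, since otherwise, taking $w = \lca_{T'}(u, \ell)$ with distinct children separating $u$ from $\ell$ and mapping back via the $(\Rightarrow)$ direction would place $u_T$ and $\ell$ in disjoint subtrees under $w_T$, contradicting $\ell \in \L_T(u_T)$. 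Once the identity is in hand, the rest of the proof is routine bookkeeping.
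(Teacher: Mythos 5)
Your proposal is correct and follows essentially the same route as the paper's proof: both directions reduce to the leaf-set characterization of ancestry via the identity $\L_T(u_T)\cap X=\L_{T'}(u)$, which the paper establishes through the same chain of inclusions (its stated observation that any label of $X$ below $\lca_T(Z)$ in $T$ is also below $\lca_{T'}(Z)$ in $T'$ plays the role of your contradiction argument for the reverse inclusion). No substantive difference.
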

\begin{proof}
If $u$ is an ancestor of $v$ then
$\L_{T'}(v) \subseteq \L_{T'}(u)$, which implies that $u_T$ is an ancestor of  $v_T$.
For the converse, observe that 
% since $T' = T|_X$, 
for any $Z \subseteq X$,
any label in $X$ descending from $\lca_T(Z)$ in $T$ is also descending from $\lca_{T'}(Z)$ in $T'$.
% where $Y \subseteq X$.
In particular letting $Z = \L_{T'}(u)$, we have $\L_{T}(u_T)\cap X  = \L_{T}(\lca_T(Z)) \cap X \subseteq \L_{T'}(\lca_{T'}(Z)) =  \L_{T'}(\lca_{T'}(\L_{T'}(u))) = \L_{T'}(u) \subseteq \L_{T}(u_T)\cap X$.
Thus $\L_{T'}(u) = \L_T(u_T) \cap X$ and similarly $\L_{T'}(v) = \L_T(v_T) \cap X$.
Then we have that $u_T$ being an ancestor of $v_T$ implies $\L_T(v_T) \subseteq L_T(u_T)$, which implies that $\L_{T'}(v) = \L_T(v_T) \cap X \subseteq  L_T(u_T) \cap X = \L_{T'}(u)$, which implies that $u$ is an ancestor of $v$.
\qed
\end{proof}

\begin{lemma}\label{lem:corrTransitive}
 For any $T'', T', T$ and  $Y \subseteq X \subseteq \lblset(T)$ such that $T' = T|_{X}$ and  $T'' = T'|_{Y}$, $(u_{T'})_{T} = u_{T}$.  
\end{lemma}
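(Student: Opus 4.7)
\medskip
\noindent\textbf{Proof plan for Lemma~\ref{lem:corrTransitive}.}
My plan is to unfold the definitions, reduce the claim to an $\lca$-preservation statement, and then deduce that statement from the fact that restricting a tree keeps the ``two-sided'' structure of the relevant leaf-set intact. First, observe that since $Y \subseteq X \subseteq \lblset(T)$ we have $T'' = (T|_X)|_Y = T|_Y$, so the expression $u_T$ is well-defined and equals $\lca_T(\L_{T''}(u))$. Moreover, unfolding the definitions gives
\[
u_{T'} \;=\; \lca_{T'}(\L_{T''}(u)), \qquad (u_{T'})_T \;=\; \lca_T(\L_{T'}(u_{T'})).
\]
As in the proof of Lemma~\ref{lem:corrAncestor}, we also have the identity $\L_{T'}(u_{T'}) \cap Y = \L_{T''}(u)$. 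Setting $Z := \L_{T''}(u)$, the lemma therefore reduces to proving
\[
\lca_T(Z) \;=\; \lca_T(\L_{T'}(u_{T'})).
\]

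Since $Z \subseteq \L_{T'}(u_{T'})$, one inclusion of the ancestor order is immediate: $\lca_T(Z)$ is a descendant of $\lca_T(\L_{T'}(u_{T'})) = (u_{T'})_T$. For the other direction, I plan to exploit that $u_{T'} = \lca_{T'}(Z)$, which forces $Z$ to have a nonempty intersection with the leaf set of each child subtree of $u_{T'}$ in $T'$. Let $c_1, c_2$ be the two children of $u_{T'}$ in $T'$; then both $Z \cap \L_{T'}(c_1)$ and $Z \cap \L_{T'}(c_2)$ are nonempty. By Lemma~\ref{lem:corrAncestor}, since $c_1$ and $c_2$ are incomparable in $T'$, the corresponding nodes $(c_1)_T$ and $(c_2)_T$ are incomparable in $T$. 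Moreover, $\lca_T(Z \cap \L_{T'}(c_i))$ is a descendant of $(c_i)_T$ for $i \in \{1,2\}$, because $Z \cap \L_{T'}(c_i) \subseteq \L_{T'}(c_i)$ and $\lca_T(\L_{T'}(c_i)) = (c_i)_T$.

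Combining these observations,
\[
\lca_T(Z) \;=\; \lca_T\bigl(\lca_T(Z \cap \L_{T'}(c_1)),\; \lca_T(Z \cap \L_{T'}(c_2))\bigr),
\]
where the two inner arguments lie in incomparable subtrees rooted at $(c_1)_T$ and $(c_2)_T$. The $\lca$ of two nodes on distinct sides of an incomparable pair equals the $\lca$ of the pair itself, so this evaluates to $\lca_T((c_1)_T, (c_2)_T)$, which is exactly $\lca_T(\L_{T'}(c_1) \cup \L_{T'}(c_2)) = \lca_T(\L_{T'}(u_{T'})) = (u_{T'})_T$. This yields $u_T = \lca_T(Z) = (u_{T'})_T$ as required.

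The main obstacle I foresee is the bookkeeping around the identity $\L_{T'}(u_{T'}) \cap Y = \L_{T''}(u)$ and the companion fact $\lca_T(\L_{T'}(c_i)) = (c_i)_T$; both are instances of the ``$\L_T(v_T)\cap X = \L_{T'}(v)$'' identity extracted in the proof of Lemma~\ref{lem:corrAncestor}, but one has to apply them consistently at the two different levels of restriction ($T \to T'$ and $T' \to T''$). Once those identities are in hand, the incomparability argument via Lemma~\ref{lem:corrAncestor} is what does the real work.
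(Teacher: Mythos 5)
Your proof is correct and reduces the lemma to the same core identity as the paper, namely $\lca_T(Z) = \lca_T(\L_{T'}(\lca_{T'}(Z)))$ with $Z = \L_{T''}(u)$, and the easy inclusion is handled identically; but the nontrivial direction is argued by a genuinely different route. The paper shows that the two sets $Z$ and $\L_{T'}(u_{T'})$ have the same common ancestors in $T$, the key step being the observation that every label of $X$ descended from $\lca_{T'}(Z)$ in $T'$ is also descended from $\lca_T(Z)$ in $T$ (so $\L_{T'}(u_{T'}) \subseteq \L_T(\lca_T(Z))$); this is short but rests on an ``observe that'' about restrictions which is itself left unproved. You instead split $Z$ across the two children $c_1,c_2$ of $u_{T'}$ in $T'$ and use Lemma~\ref{lem:corrAncestor} to transfer their incomparability to $(c_1)_T,(c_2)_T$, pinning $\lca_T(Z)$ down to $\lca_T((c_1)_T,(c_2)_T) = (u_{T'})_T$. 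This is more explicit, reuses the already-proved Lemma~\ref{lem:corrAncestor} rather than a fresh unproved fact, and the step ``the $\lca$ of two nodes lying under an incomparable pair equals the $\lca$ of the pair'' is sound. The one thing you should add is the degenerate case where $u_{T'}$ is a leaf of $T'$ (equivalently $|Z|=1$): your children argument does not apply there, but the claim is immediate since then $\L_{T'}(u_{T'}) = Z$. This is a cosmetic omission, not a gap in the idea.
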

\begin{proof}
% Secondly, for any $T'', T', T$ and  $Y \subseteq X \subseteq \lblset(T)$ such that $T' = T|_{X}$ and  $T'' = T'|_{Y}$, for any $u\in V(T'')$ we have that $(u_{T'})_{T} = u_{T}$.  
% (Proof:
It is sufficient to show that any node in $V(T)$ is a common ancestor of $\L_{T'}(\lca_{T'}(Z))$  if and only if it is a  common ancestor of $Z$, where $Z = \L_{T''}(u)$ (as this implies that the least common ancestors of these two sets are the same).
It is clear that if $v \in V(T)$ is a common ancestor of $\L_{T'}(\lca_{T'}(Z))$ then it is also a common ancestor of $Z$, as $Z \subseteq  \L_{T'}(\lca_{T'}(Z))$.
For the converse, observe that 
as $T' = T|_X$  and $Z\subseteq X$, any label in $X$ descended from $\lca_{T'}(Z)$ in $T'$  is also descended from $\lca_T(Z)$ in $T$.
This implies $\L_{T'}(\lca_{T'}(Z)) \subseteq \L_T(\lca_T(Z))$, and so any common ancestor of $Z$ in $T$ is also a common ancestor of $\L_{T'}(\lca_{T'}(Z))$.
\qed
\end{proof}

We are now ready to state and prove Lemma~\ref{lem:dont-check-too-many-trees}

%\noindent
%\textbf{Lemma~\ref{lem:dont-check-too-many-trees}}
\begin{lemma}\label{lem:dont-check-too-many-trees}
%\begin{em}(restated).
Suppose that $d < d_{LR}(T_1, T_2) \leq d' + d$ with $d' \leq d$, and that there are  $X_1, X_2 \subseteq \lblset$, and a tree $T^*$ such that $T_1 - X_1 = T^* - X_1, T_2 - X_2 = T^* - X_2, |X_1| \leq d', |X_2| \leq d$, and let $x \in X_1 \setminus X_2$.
Then, there is a set $P$ of trees on label set $\lblset$ that satisfies the following conditions:

\begin{itemize}
\item
for any tree $T'$ such that $d_{LR}(T', T^*) < d_{LR}(T_1,T^*)$ and $T'$ can be obtained 
  from $T_1$ by pruning a leaf $x$ and regrafting it, $T' \in P$;
  
\item
$|P| \leq 18(d+d')+8$;

\item
$P$ can be found in time $O(n(\log n + 18(d+d')+8))$.

\end{itemize}
\end{lemma}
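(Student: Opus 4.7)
The plan is to use $T_2$ as a guide for regrafting $x$: since $x \in X_1 \setminus X_2$ and $T_2 - X_2 = T^* - X_2$, the leaf $x$ appears in $T_2$ in essentially its correct position relative to $T^*$, up to a perturbation caused by the at most $d$ leaves of $X_2$. A ``good'' regrafting of $x$ in $T_1$ is one placing $x$ in a position agreeing with its position in $T^*$ modulo a disagreement set of size at most $d_{LR}(T_1,T^*)-1$. Our task is to enumerate a short list of edges of $T_1 - \{x\}$ containing every such good regraft edge.

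To construct $P$, walk upward from $x$ in $T_2$, examining each sibling subtree hanging off the root path, until we encounter a label $y$ lying outside $X_1 \cup X_2$; because $|X_1 \cup X_2| \leq d+d'$, such a $y$ is found after inspecting at most $d+d'+1$ leaves, or else $|\lblset| \leq d+d'+1$ and the instance can be solved by brute force. Then locate $y$ in $T_1 - \{x\}$ and let $\pi$ denote the path from the edge immediately above $y$ upward toward the root. Include in $P$ the trees obtained by regrafting $x$ on each of the first $O(d+d')$ edges of $\pi$, together with a bounded number of extra candidates covering symmetric cases (e.g.\ grafting $x$ above the root of $T_1$, or grafting on the other child of the lowest common ancestor between $y$ and $x$'s ``natural'' location in $T_2$). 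A careful case analysis then yields the claimed bound $|P| \leq 18(d+d')+8$, the constant $18$ tallying the enumerated symmetric cases. The running time is dominated by an $O(n)$ walk in $T_2$, an $O(\log n)$ ancestor query to locate $y$ in $T_1$ after $O(n\log n)$ preprocessing, and $O(n)$ cost to construct each candidate tree, giving the claimed bound.

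The main obstacle is proving correctness: that every $T'$ obtainable from $T_1$ by regrafting $x$ and satisfying $d_{LR}(T',T^*) < d_{LR}(T_1,T^*)$ actually lies in $P$. The intended argument proceeds by contradiction. If $T'$ regrafts $x$ onto an edge of $T_1 - \{x\}$ that is far (more than $O(d+d')$ positions) from the candidate path near $y$, then there are many labels $z \notin X_1 \cup X_2$ whose position relative to $x$ differs between $T'$ and $T^*$. A careful selection of witnesses, exploiting that $y,z \notin X_1$ (so they retain their positions from $T_1$ in $T^*$) and $y,z \notin X_2$ (so their positions in $T_2$ reflect $T^*$), shows that any minimal disagreement between $T'$ and $T^*$ must contain many such $z$ beyond $x$ itself, exceeding the budget $d_{LR}(T_1,T^*)-1$. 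Balancing the length of $\pi$ against the total disagreement budget $d+d'$ then yields both the size bound on $P$ and the constant $18$.
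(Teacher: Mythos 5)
Your high-level plan (use $T_2$ as a guide for where to regraft $x$, and argue that regrafting too far from the guided position creates too many witnesses of disagreement) is the same as the paper's, but as written the construction has a genuine effectiveness gap: you select the anchor label $y$ by walking up from $x$ in $T_2$ ``until we encounter a label $y$ lying outside $X_1\cup X_2$.'' The sets $X_1$ and $X_2$ are defined relative to the unknown solution $T^*$, so the algorithm cannot test membership in $X_1\cup X_2$; and if you instead try all of the first $d+d'+1$ leaves as candidate anchors, the size of $P$ blows up to $\Omega((d+d')^2)$, violating the claimed bound $18(d+d')+8$. The paper sidesteps this by computing a concrete set $X_m$ with $x\in X_m$, $|X_m|\le d+d'$ and $T_1-X_m=T_2-X_m$ via a MAST computation on $(T_1-\{x\},\,T_2-\{x\})$ (the unknown $X_1\cup X_2$ only serves to certify that such an $X_m$ exists), and then works entirely inside the computable common restriction $T_m$ and the tree $T_m'=T_2-(X_m\setminus\{x\})$, which records where $T_2$ ``wants'' $x$.

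Two further points are underdeveloped. First, the candidate region is not a single upward path from one anchor: the paper bounds the feasible positions by a node $y$ from above \emph{and} a set $Z$ of up to four blocking nodes from below, chosen so that the leaves of $T_m$ descended from $y$ but from no node of $Z$ number at most $8(d+d')$; the candidate edges then form a small subtree (at most $16(d+d')+8$ edges), plus up to $2(d+d')$ extra positions when lifting back to $T_1-\{x\}$ to account for leaves of $X_m$ hanging off the relevant paths. Your ``first $O(d+d')$ edges of $\pi$ plus symmetric cases'' does not obviously cover this branching structure. Second, the correctness argument needs a specific mechanism: one exhibits $d+d'$ conflict triples between $T'$ and $T_2$ whose \emph{only} common element is $x$; since $X'\cup X_2$ (with $X'$ a minimum disagreement between $T'$ and $T^*$, $|X'|\le d'-1$) is a hitting set for these triples of size at most $d+d'-1$, $x$ must lie in $X'$, whence $T_1-X'=T'-X'=T^*-X'$ and $d_{LR}(T_1,T^*)\le|X'|<d_{LR}(T_1,T^*)$, a contradiction. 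Your sketch that the disagreement between $T'$ and $T^*$ ``must contain many such $z$'' does not by itself reach this contradiction, because the relevant budget is on $d_{LR}(T',T^*)$ relative to $d_{LR}(T_1,T^*)$, and the witnesses must be forced into \emph{every} hitting set in a way that pins down $x$ specifically.
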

%\end{em}

The idea behind the proof is as follows:
by looking at a subtree common to $T_1$ and $T_2$, we can identify the location that $T_2$ ``wants" $x$ to be positioned. 
This may not be the correct position for $x$, but we can show that if $x$ is moved too far from this position,
we will create a large number of conflicting triplets between $T_2$ and the solution $T^*$.
As a result, we can create all trees in $P$ by removing $x$ from $T_1$ and grafting it on one of a limited number of edges.

\begin{proof}
For the purposes of this proof, we will treat each tree $T$ as ``planted'', \textit{i.e.} as having an additional root of degree $1$, denoted $r(T)$, as the parent of what would normally be considered the ``root" of the tree. (That is, $r(T)$ is the parent of $\lca_T(\lblset(T))$. Note that trees are otherwise binary. We introduce $r(T)$ as a notational convenience to avoid tedious repetition of proofs - grafting a label above a tree $T$ can instead be represented as grafting it on the edge between $r(T)$ and its child.
For the purposes of corresponding nodes, if $T' = T - X$ then $(r(T'))_T = r(T)$. 
This  allows us to assume that every node in $T$ is a descendant of $u_T$ for some node $u$ in $T'$.

% (first observe that $x$ exists by Lemma~\ref{lem:must-move-x}.)
% 
% TODO: Intro paragraph:
A naive method for constructing a tree in $P$ is the following:
Apply an LPR move $(x,e)$ on $T_1$, such that $x$ is moved to a position that $T_2$ 
``wants'' $x$ to be in.
There are at least two problems with this method.
% a naive solution is to start with $T_1$, and to do an LPR move on $T_1$ using $x$, such that $x$ ends up where $T_2$ ``wants'' $x$ to be. There are a couple of obstacles to overcome with this idea. 
The first is that, since $T_1$ and $T_2$ have different structures, it is not clear where in $T_1$ it is that $T_2$ ``wants'' $x$ to be. We can partially overcome this obstacle by initially considering a subtree common to both $T_1$ and $T_2$. However, because $T_2$ will want to move leafs that will not be moved in $T_1$, it can still be the case that even though $T_2$ ``agrees'' with $T^*$ on $x$, $T_2$ may want to put $x$ in the ``wrong'' place, when viewed from the perspective of $T_1$. For this reason we have to give a counting argument to show that if $x$ is moved ``too far'' from the position suggested by $T_2$, it will create too many conflicting triplets, which cannot be covered except by moving $x$. We make these ideas precise below.

\medskip

Let $P^*$ be the set of all trees $T'$ such that $d_{LR}(T', T^*) < d_{LR}(T_1, T^*) $ and $T'$ can be obtained 
  from $T_1$ by an LPR move on $x$.
  Thus, it is sufficient to construct a set $P$ such that $|P| \leq 18(d + d') + 8$ and $P^* \subseteq P$.
% \textcolor{orange}{Is it $O(d)$ instead of $O(d^2)$ now?}

We first construct a set $X_m \subseteq \lblset$ such that $|X_m| \leq d+ d', x \in X_m$, and $T_1 - X_m= T_2 - X_m$.
Note that the unknown set $(X_1 \cup X_2)$ satisfies these properties, as $T_1 - (X_1 \cup X_2) = T^* - (X_1 \cup X_2) =  T_2 - (X_1 \cup X_2)$, and so such a set $X_m$ must exist.
We can find $X_m$ in 
time $O(n \log n)$
by applying MAST on $(T_1 - \{x\}, T_2 - \{x\})$~\cite{DBLP:journals/siamcomp/ColeFHPT00}.

Now let $T_m$ be the tree with labelset $\lblset\setminus {X_m}$ such that $T_m = T_1 - X_m = T_2 - X_m$. Note that for any $T'$ in $P^*$, we have that $T' - \{x\} = T_1 - \{x\}$ and therefore $T' - X_m = T_1 - X_m = T_m$.

Informally, we now have a clear notion of where $T_2$ ``wants'' $x$ to go, relative to $T_m$. There is a unique edge $e$ in $T_m$ such that grafting $x$ on $e$ will give the tree $T_2 - (X_m\setminus\{x\})$. If we assume that this is the ``correct'' position to add $x$, then it only remains to add the remaining labels of $X_m$ back in a way that agrees with $T_1$ (we will describe how this can be done at the end of the proof).
Unfortunately, grafting $x$ onto the obvious choice $e$ does not necessarily lead to a graph in $P^*$. This is due to the fact that $T_2$ can be ``mistaken'' about labels outside of $X_m$.

To address this, we have 
try grafting $x$ on other edges of $T_m$.
There are too many edges to try them all. We therefore need the following claim, which allows us to limit the number of edges to try.

\medskip
% {\bf Claim:} In polynomial time, we can find nodes $y,y',z'$ in $V(T_m)$ such that:
{\bf Claim:} \emph{In $O(n)$ time, we can find $y \in V(T_m)$ and $Z\subseteq V(T_m), |Z| \leq 4$, such that:
\begin{itemize}
  \item For any $T'$ in $P^*$, $x \in \L_{T'}(y_{T'}) \setminus \bigcup_{z'\in Z}\L_{T'}(z'_{T'})$
  \item $|\L_{T_m}(y) \setminus \bigcup_{z'\in Z}\L_{T_m}(z')| \leq 8(d+d')$
%  \item For any $T'$ in $P^*$, $x \in \L_{T'}(y_{T'}) \setminus (\L_{T'}(y'_{T'}) \cup  \L_{T'}(z'_{T'}))$.
%  \item $|\L_{T_m}(y) \setminus (\L_{T_m}(y') \cup  \L_{T_m}(z'))| \in O(d^2)$.
\end{itemize}
%\textcolor{blue}{Where $\L_{T'}(y_{T'}) = \lblset(T')$ and  $\L_{T_m}(y) = \lblset(T_m)$ in the case that $y = \none$.}
}

Informally, the claim identifies a node $y$ and set of nodes $Z$ in $T_m$, such that $x$ should be added as a descendant of $y$ but not of any node in $Z$, and the number of such positions is bounded. 
%\textcolor{blue}{($y = \none$ represents the fact that $x$  can be added anywhere, except as a descendant of a node in $Z$.)}
Algorithm~\ref{alg:locationRestrictions} describes the formal procedure to produce $y$ and $Z$.
The proof of the claim takes up most of the remainder of our proof; the reader may wish to skip it on their first readthrough.

\begin{algorithm}
\caption{FPT algorithm to restrict possible locations of $x$ given $(T_m,T_1,T_2,x, d, d')$}
\begin{algorithmic}[1]
\Procedure{location-restriction}{$T_m, T_2, X_m, x, d, d'$}
\Statex $T_1, T_2$ are two trees, $T_m$ is a common subtree of $T_1$ and $T_2$ such that $T_m = T_2 - X_m$, $x$ is a label that cannot be moved in $T_2$ (but must be moved in $T_1$), $d$ is the maximum number of leaves we can remove in a tree, $d'$ is the maximum number of leaves we can move in $T_1$. Output is a pair $(y,Z)$  with $y \in V(T_m)$, $Z\subseteq V(T_m)$, such that we may assume $x$ is a descendant of $y$ but not a descendant of any $z' \in Z$, and the number of labels like this in $T_m$ is $O(d)$.
For this pseudocode, every tree $T$ has a degree-$1$ root $r(T)$.

\State Set $T_m' = T_2 - (X_m \setminus \{x\})$
\State Set $z = $ lowest ancestor of $x$ in $T_m'$ such that $|\L_{T_m'}(z)\setminus{x}| \geq d+d'$, or return $(r(T_m'), \emptyset)$ if no such $z$ exists.
\State Set $y = $ lowest ancestor of $z$ in $T_m'$ such that $|\L_{T_m'}(y) \setminus \L_{T_m'}(z)| \geq d+d'$, or $r(T_m')$ if no such ancestor exists.
\State \Comment{Find sets $Z = Z_1 \cup Z_2$ of nodes that cover all but a bounded number of the descendants of $y$, and such that we can rule out $x$ being descended from any $z'$ in $Z$.}
\State Let $z_1,z_2$ be the children of $z$ such that $x$ is descended from $z_1$ in $T_m'$
\State Set $Z_1 = \{ z'$ descended from $z_2: |\L_{T_m'}(z')| \ge d+d'$ and $|\L_{T_m'}(z_2) \setminus \L_{T_m'}(z')| \ge d+d'$, and this does not hold for any ancestor of $z'\}$
\State Let $y_1,y_2$ be the children of $y$ such that $x$ is descended from $y_1$ in $T_m'$
\State Set $Z_2 = \{ y'$ descended from $y_2: |\L_{T_m'}(y')| \ge d+d'$ and $|\L_{T_m'}(y_2) \setminus \L_{T_m'}(y')| \ge d+d'$, and this does not hold for any ancestor of $y'\}$
\State \Comment{Note that $|Z_1| \leq 2, |Z_2| \leq 2$.}
\State Set $y^* = $ node of $T_m$ for which $y$ is the corresponding node in $T_m'$
\State Set $Z = \{z^*$ in $T_m : z'\in Z_1\cup Z_2$ is the node corresponding to $z^*$ in $T_m\}$
\State Return $(y^*,Z)$ 
\EndProcedure
\end{algorithmic}\label{alg:locationRestrictions}
\end{algorithm}

  \begin{proof}
   Let $T_m' = T_2 - (X_m\setminus  \{x\})$. Note that $T_m' - \{x\} = T_m$.
   We will use the presence of $x$ in $T_m'$ to identify the node $y$ and set $Z$. (Technically, this means the nodes we find are nodes in $T_m'$ rather than $T_m$. However, we note that apart the parent of $x$ and $x$ itself, neither of which will be added to $\{y\}\cup Z$, every node in $T_m'$ is the node $v_{T_m'}$ corresponding to some node $v$ in $T_m$. For the sake of clarity, we ignore the distinction and write $v$ to mean $v_{T_m'}$ throughout this proof. The nodes in $\{y\}\cup Z$ should ultimately be replaced with the nodes in $T_m$ to which they correspond.)
   
   We first identify two nodes $z,y$ of $T_m$ as follows:
    \begin{itemize}
     \item Let $z$ be the least ancestor of $x$ in $T_m'$ such that $|\L_{T_m'}(z) \setminus \{x\}| \ge d+d'$.
     If no such $x$ exists, then $\lblset(T_m') \leq d+d'$ and we may return $y = r(T_m'), Z = \emptyset$.
     \item Let $y$ be the least ancestor of $z$ in $T_m'$ such that $|\L_{T_m'}(y) \setminus \L_{T_m'}(z)| \ge d+d'$.
     If no such $y$ exists, set $y = r(T_m')$.
    \end{itemize}
    
    Using this definition, we will show that 
    $x$ must be a descendant of $y_{T'}$ for any $T' \in P^*$. We first describe a general tactic for restricting the position of $x$ in $T'$, as this tactic will be used a number of times.
    
 %   We will use the following idea to enforce limits of the position of $x$ in $T'$, for any $T'$ in $P^*$.
    Suppose that for some $T' \in P^*$ there is a set of  $d+d'$ triplets in $confset(T',T_2)$ whose only common element is $x$.
    Then let $X' \subseteq \lblset$ be a set of  labels such that $T' - X' = T^* - X'$ and $|X'| = d_{LR}(T', T^*) \leq d_{LR}(T_1, T^*) - 1 \leq d' - 1$.
    Note that $T_2 - (X' \cup X_2) = T^* - (X' \cup X_2) = T' - (X' \cup X_2)$, and therefore $(X' \cup X_2)$ is a hitting set for $confset(T',T_2)$. 
    As $|X' \cup X_2| \leq d + d-1$ and there are $d+d'$ triplets in $confset(T',T_2)$ whose only common element is $x$, it must be the case that $x \in X'\cup X_2$. As $x\notin X_2$, we must have $x\in X'$. But this implies that $T_1 - X' = T' - X' = T^* - X'$ and therefore $d_{LR}(T_1, T^*) \leq |X'| = d_{LR}(T', T^*) \leq  d_{LR}(T_1, T^*) - 1$, a contradiction. Thus we may assume that such a set of triplets does not exist.
    
    We now use this idea to show that $x \in \L_{T'}(y_{T'})$, for any $T'\in P^*$.
    Indeed, suppose $x \notin \L_{T'}(y_{T'})$.
    We may assume $y\neq r(T_m')$ as otherwise $y_{T'}= r(T')$ by definition and so $\L_{T'}(y_{T'}) = \lblset(T')$. 
    Then let $z_1, \ldots, z_{d+d'}$ be $d+d'$ labels in $\L_{T_m'}(z)\setminus \{x\}$.
    Let $y_1, \ldots, y_{d+d'}$ be $d+d'$ labels in $\L_{T_m'}(y)\setminus \L_{T_m'}(z)$.    
    Observe that for each $i \in [d+d']$, $T_m'$ (and therefore $T_2$) contains the triplet $(z_ix|y_i)$, but $T'$ contains the triplet $(z_iy_i|x)$.
    Therefore  $confset(T',T_2)$ contains $d+d'$ sets whose only common element is $x$. 
    As this implies a contradiction, we must have  $x \in \L_{T'}(y_{T'})$.
    
    Note however that $|\L_{T_m'}(y)|$ maybe be very large.
    In order to provide a bounded range of possible positions for $x$, we still need to find a set $Z$ of nodes such that $|\L_{T_m'}(y) \setminus \bigcup_{z' \in Z} \L_{T_m'}(z'))|$ is bounded, and such that we can show $x \notin \L_{T'}(z'_{T'})$ for any $z' \in Z$.

%     , and $|X' \cup X_2| \leq d+d'-1$, 
%     we must have that $x \in X' \cup X_2$.
%     We know that $x \notin X_2$, and so $x \in X'$.
%     But then $T_1 - X' = T' - X = T^* - X'$, and so $d_{LR}(T_1, T^*) = |X'| = d_{LR}(T', T^*) \leq  d_{LR}(T_1, T^*)  - 1$, a contradiction.

%     For any $T'$ in $P^*$, $x \in \L_{T'}(y_{T'})$ as otherwise we will create too many bad triplets.
%   
%     If $|\L_{T_m'}(z)|$ and  $|\L_{T_m'}(z) \setminus \L_{T_m'}(z)|$ are small then we are done (simply return $y$ and let $y',z'$ be arbitrary non-$x$ leaves).
%     However these sets can be very large. However, we can find $y',z'$... (there is some case analysis).
    
    We now construct a set $Z_1$ of descendants of $z$ as follows:
    
    \begin{itemize}
     \item Let $z_1, z_2$ be the children of $z$ in $T_m'$ such that $x$ is descended from $z_1$. 
     \item If $|\L_{T_m'}(z_2)| \leq 3(d+d')$ then set $Z_1 = \emptyset$.
     \item Otherwise, let $Z_1$ be the set of highest descendants $z'$ of $z_2$, such that $|\L_{T_m'}(z')|\ge d+d'$ and $|\L_{T_m'}(z_2) \setminus \L_{T_m'}(z')|\ge d+d'$ (\textit{i.e.} by highest descendant we mean 
     such that $z'$ has no ancestor $z''$ with the same properties).
    \end{itemize}
    
    Note that $|\L_{T_m'}(z_1)| \leq d+d'$ by our choice of $z$.
    It follows that if $|\L_{T_m'}(z_2)| \leq 3(d+d')$ then $|\L_{T_m'}(z)| \le 4(d+d')$.
    If on the other hand $|\L_{T_m'}(z_2)| > 3(d+d')$ then  $Z_1$ is non-empty. Indeed, let $z'$ be a lowest descendant of $z_2$ with $|\L_{T_m'}(z')|\ge d+d'$, and observe that $|\L_{T_m'}(z')|\leq 2(d+d')$. Then either $z' \in Z_1$, or $|\L_{T_m'}(z_2) \setminus \L_{T_m'}(z')|\le d+d'$, in which case $|\L_{T_m'}(z_2)| \leq  |\L_{T_m'}(z_2) \setminus \L_{T_m'}(z')| + |\L_{T_m'}(z')| \leq d+d' + 2(d+d') = 3(d+d')$.
     
     We also have that $|Z_1| \leq 2$. Indeed, let $z_1',z_2',z_3'$ be three distinct nodes in $Z_1$, and suppose without loss of generality that $(z_1' z_2' | z_3') \in tr(T_m')$. Then setting $z' = \lca_{T_m'}(z_1',z_2')$, we have that $z'$ is an ancestor of $z_1'$ such that  $|\L_{T_m'}(z')|\ge d+d'$ and $|\L_{T_m'}(z_2) \setminus \L_{T_m'}(z')|\ge |\L_{T_m'}(z_3')| \ge d+d'$, a contradiction by minimality of $z_1$.
     
     We have that $|\L_{T_m'}(z) \setminus \bigcup_{z' \in Z_1} \L_{T_m'}(z'))| \leq 4(d+d')$. Indeed, if $Z_1 = \emptyset$ then $|\L_{T_m'}(z)| \leq 4(d+d')$ as described above. Otherwise, let $z'$ be an element of $Z_1$ and $z_p$ its parent, $z_s$ its sibling in $T_m'$. Clearly  $|\L_{T_m'}(z_p)| \geq |\L_{T_m'}(z')| \geq  d+d'$, and so as $z_p \notin Z_1$ we have $|\L_{T_m'}(z_2) \setminus \L_{T_m'}(z_p)| < d+d'$. 
     If $|\L_{T_m'}(z_s)| \geq  d+d'$ then $z_s \in Z_1$ (since $|\L_{T_m'}(z_2) \setminus \L_{T_m'}(z_s)| \geq |\L_{T_m'}(z')| \geq  d+d'$), and so  $|\L_{T_m'}(z) \setminus \bigcup_{z' \in Z_1} \L_{T_m'}(z'))| \leq |\L_{T_m'}(z_1) | +  |\L_{T_m'}(z_2) \setminus \L_{T_m'}(z_p)| \leq 2(d+d')$. 
     Otherwise, $|\L_{T_m'}(z) \setminus \bigcup_{z' \in Z_1} \L_{T_m'}(z'))| \leq |\L_{T_m'}(z_1) | +  |\L_{T_m'}(z_2) \setminus \L_{T_m'}(z_p)| +  |\L_{T_m'}(z_s)| \leq 3(d+d')$. 
     
    We have now shown that $|Z_1|\le 2$ and that  $|\L_{T_m'}(z) \setminus \bigcup_{z' \in Z_1} \L_{T_m'}(z'))| \leq 4(d+d')$. The final property of $Z_1$ we wish to show is that for any $z'\in Z_1$ and any $T' \in P$, $x \notin \L_{T'}(z'_{T'})$.

     So suppose  $x \in \L_{T'}(z'_{T'})$.
       Let $\hat{z}_1, \dots,  \hat{z}_{d+d'}$ be $d+d'$ labels in $\L_{T_m'}(z_2)\setminus L_{T_m'}(z')$. Also, $z_1$ and $z_2$ were already taken.
       Let $w_1, \dots, w_{d+d'}$ be $d+d'$ labels in $\L_{T_m'}(z')$.
       Then for each $i \in [d+d']$, $T_m'$ (and therefore $T_2$) contains the triplet $(\hat{z}_iw_i|x)$, but $T'$ contains the triplet $(xw_i|\hat{z}_i)$.
      Therefore  $confset(T',T_2)$ contains $d+d'$ sets whose only common element is $x$. 
      As this implies a contradiction, we must have   $x \notin \L_{T'}(z'_{T'})$.
    
    \medskip
    
    We now define a set $Z_2$ of descendants of $y$:
    
        \begin{itemize}
     \item If $y = r(T_m')$, set $Z_2 = \emptyset$.
     \item Otherwise, let $y_1, y_2$ be the children of $y$ in $T_m'$ such that $z$ is descended from $y_1$. 
     \item If $|\L_{T_m'}(y_2)| \leq 3(d+d')$ then  set $Z_2 = \emptyset$.
     \item Otherwise, let $Z_2$ be the set of highest descendants $y'$ of $y_2$, such that $|\L_{T_m'}(y')|\ge d+d'$ and $|\L_{T_m'}(y_2) \setminus \L_{T_m'}(y')|\ge d+d'$ (\textit{i.e.} such that $y'$ has no ancestor $y''$ with the same properties).
    \end{itemize}
     
     In a similar way to the proofs for $Z_1$, we can show that $|Z_2|\leq 2$, that $|(\L_{T_m'}(y)  \setminus \L_{T_m'}(z)) \setminus \bigcup_{y' \in Z_2} \L_{T_m'}(y'))| \leq 4(d+d')$, and that $x \notin \L_{T'}(y'_{T'})$ for any $y' \in Z_2$ and any $T' \in P^*$.
     
    % NOTE: It may not be worth including these proofs in the final paper, but for interested co-authors here they are:
     
     Note that $|\L_{T_m'}(y_1)\setminus \L_{T_m'}(z) | \leq d+d'$ by our choice of $y$.
     It follows that if $|\L_{T_m'}(y_2)| \leq 3(d+d')$ then $|\L_{T_m'}(y) \setminus \L_{T_m'}(z)| \le 4(d+d')$.
     If on the other hand $|\L_{T_m'}(y_2)| > 3(d+d')$, then  $Z_2$ is non-empty. Indeed, let $y'$ be a lowest descendant of $y_2$ with $|\L_{T_m'}(y')|\ge d+d'$, and observe that $|\L_{T_m'}(y')|\leq 2(d+d')$. Then either $y' \in Z_2$, or $|\L_{T_m'}(y_2) \setminus \L_{T_m'}(y')|\le d+d'$, in which case $|\L_{T_m'}(y_2)| \leq  |\L_{T_m'}(y_2) \setminus \L_{T_m'}(y')| + |\L_{T_m'}(y')| \leq d+d' + 2(d+d') = 3(d+d')$.
     
     We also have that $|Z_2| \leq 2$. 
     Indeed, let $y_1',y_2',y_3'$ be three distinct nodes in $Z_2$, and suppose without loss of generality that $(y_1' y_2' | y_3') \in tr(T_m')$. Then setting $y' = \lca_{T_m'}(y_1',y_2')$, we have that $y'$ is an ancestor of $y_1'$ such that  $|\L_{T_m'}(y')|\ge d+d'$ and $|\L_{T_m'}(y_2) \setminus \L_{T_m'}(y')|\ge |\L_{T_m'}(y_3')| \ge d+d'$, a contradiction by minimality of $y_1$.
     %as if $|Z_2| \geq 3$, then there would be an ancestor $y'$ of two nodes in $Z_2$ such that $|\L_{T_m'}(y')|\ge d+d'$ and $|\L_{T_m'}(y_2) \setminus \L_{T_m'}(y')|\ge d+d'$.
     
     We have that $|(\L_{T_m'}(y)  \setminus \L_{T_m'}(z)) \setminus \bigcup_{y' \in Z_2} \L_{T_m'}(y'))| \leq 4(d+d')$. Indeed, if
      $y = r(T_m')$ then  by construction $|\L_{T_m'}(\hat{y}) \setminus \L_{T_m'}(z)| < d+d'$ for any ancestor $\hat{y}$ of $z$ (noting that otherwise there would be no reason to set $y$ as $r(T_m')$ rather than the child of $r(T_m')$), and so in particular $|\L_{T_m'}(y) \setminus \L_{T_m'}(z)| < d+d'$. 
     If $y\neq r(T_m')$ and
     $Z_2 = \emptyset$ then $|\L_{T_m'}(y) \setminus \L_{T_m'}(z)| \leq 4(d+d')$ as described above. Otherwise, let $y'$ be an element of $Z_2$ and $y_p$ its parent, $y_s$ its sibling in $T_m'$. Clearly  $|\L_{T_m'}(y_p)| \geq |\L_{T_m'}(y')| \geq  d+d'$, and so as $y_p \notin Z_2$ we have $|\L_{T_m'}(y_2) \setminus \L_{T_m'}(y_p)| < d+d'$. 
     If $|\L_{T_m'}(y_s)| \geq  d+d'$ then $y_s \in Z_2$ (since $|\L_{T_m'}(y_2) \setminus \L_{T_m'}(y_s)| \geq |\L_{T_m'}(y')| \geq  d+d'$), and so  $|(\L_{T_m'}(y)  \setminus \L_{T_m'}(z))  \setminus \bigcup_{y' \in Z_2} \L_{T_m'}(y'))| \leq |\L_{T_m'}(y_1)  \setminus \L_{T_m'}(z)| +  |\L_{T_m'}(y_2) \setminus \L_{T_m'}(y_p)| \leq 2(d+d')$. 
     Otherwise, $|(\L_{T_m'}(y)  \setminus \L_{T_m'}(z)) \setminus \bigcup_{y' \in Z_2} \L_{T_m'}(y'))| \leq |\L_{T_m'}(y_1)  \setminus \L_{T_m'}(z)| +  |\L_{T_m'}(y_2) \setminus \L_{T_m'}(y_p)| +  |\L_{T_m'}(y_s)| \leq 3(d+d')$. 
     
     We have now shown that $|Z_2| \leq 2$ and  $|(\L_{T_m'}(y)  \setminus \L_{T_m'}(z)) \setminus \bigcup_{y' \in Z_2} \L_{T_m'}(y'))| \leq 4(d+d')$.
     The final property of $Z_2$ we wish to show is that for any $y' \in Z_2$ and any $T' \in P^*$, we have that $x \notin \L_{T'}(y'_{T'})$.
     
     So suppose  $x \in \L_{T'}(y'_{T'})$.
       Let $\hat{y}_1, \dots, \hat{y}_{d+d'}$ be $d+d'$ labels in $\L_{T_m'}(y_2)\setminus L_{T_m'}(y')$. 
       Let $w_1, \dots, w_{d+d'}$ be $d+d'$ labels in $\L_{T_m'}(y')$.
       Then for each $i \in [d+d']$, $T_m'$ (and therefore $T_2$) contains the triplet $(\hat{y}_iw_i|x)$, but $T'$ contains the triplet $(xw_i|\hat{y}_i)$.
      Therefore  $confset(T',T_2)$ contains $d+d'$ sets whose only common element is $x$. 
      As this implies a contradiction, we must have   $x \notin \L_{T'}(y'_{T'})$.
    
    \medskip
    
    Now that $Z_1$ and $Z_2$ have been constructed, let $Z = Z_1\cup Z_2$. Note that $|Z|\leq 4$.
    Algorithm~\ref{alg:locationRestrictions} describes the construction of $y$ and $Z$ formally (see Figure~\ref{fig:anchors}).

    \begin{figure*}[h]
\centering
\includegraphics[width=0.5\textwidth]
{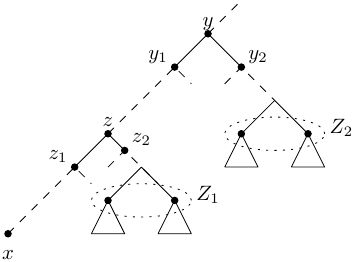}
% \vspace*{-3mm}
\caption{Construction of $y$ and $Z = Z_1  \cup Z_2$, see Algorithm~\ref{alg:locationRestrictions}. Dashed edges represent parts of the tree that were omitted. Triangles represent parts of the tree that may contain more than $d+d'$ leaves.
}
\label{fig:anchors}
\end{figure*}
    
    We have shown above that for any $T' \in P^*$, $x$ is descended from $y_{T'}$ in $T'$ and not from $z'_{T'}$ for any $z' \in Z$, and so $x \in \L_{T'}(y_{T'}) \setminus \bigcup_{z'\in Z}\L_{T'}(z'_{T'})$.    
    As $|\L_{T_m'}(z) \setminus \bigcup_{z' \in Z_1} \L_{T_m'}(z'))| \leq 4(d+d')$ and  $|(\L_{T_m'}(y)  \setminus \L_{T_m'}(z)) \setminus \bigcup_{y' \in Z_2} \L_{T_m'}(y'))| \leq 4(d+d')$, we have 
    $|\L_{T_m'}(y) \setminus \bigcup_{z' \in Z} \L_{T_m'}(z'))| \leq 8(d+d')$.
    
    To analyze the complexity, note that we can calculate the value of $|\L_{T_m'}(u)|$  for all $u$ in $O(n)$ time using a depth-first search approach, together with the fact that $|\L_{T_m'}(u)| = |\L_{T_m'}(u_1)| + |\L_{T_m'}(u_2)|$ for any node $u$ with children $u_1, u_2$.
Then we can find $z$ in $O(n)$ time, and once we have found $z$  we can find $y$, and thence $z_1,z_2,y_1,y_2$, in $O(n)$ time. 
Similarly, once these nodes are found we can find the members of $Z$ in $O(n)$ time.
\qed
\end{proof}
    
 %   We note that the node $y$ and nodes in $Z$ are technically nodes of $T_m'$, rather than $T_m = T_m' - \{x\}$. However, that none of the nodes in $Z$ are ancestors of $x$, so they each correspond to a node in $T_m$. The node $y$ is also corresponds to a node in $T_m$, as each of its children $y_1, y_2$ have descendant labels that are not $x$. [TODO: Make this more precise]

%     
%      A similar technique works for $y$....
%      \begin{itemize}
%       \item Let $y_1, y_2$ be the children of $y$ such that $z$ is descended from $y_1$.
%      \item If $|\L_{T_m'}(y_2)| \leq 3(d+d')$ then $|\L_{T_m'}(y) \setminus \L_{T_m'}(z) | \le 4(d+d')$ and we are happy; set $y'$  to be some non-$x$ leaf.
%      \item Otherwise, let $y''$ be the lowest descendant of $y_2$ such that $|\L_{T_m'}(y_2)\setminus \L_{T_m'}(y'')| \leq d+d'$. Note that $|\L_{T_m'}(y'')| = |\L_{T_m'}(y_2)| - |\L_{T_m'}(y_2)\setminus \L_{T_m'}(y'')| > 2(d+d')$. Thus $y''$ has at least one child with more than $(d+d')$ descendants.
%      \item If both children of $y''$ have more than $(d+d')$ descendants, set $y' = y''$.
%      \item Otherwise, let $y'$ be the one child with more than $(d+d')$ descendants.
%      \item Observe that either $y'$ is  a non-$x$ leaf or $|\L_{T_m'}(y')|\ge d+d'$.
%      \end{itemize}

\medskip

Using the claim, we may now construct a set $P'$ of $O\leq 16(d+d')+8$ trees on $\lblset \setminus (X_m\setminus \{x\})$, such for any $T' \in P^*$, $P'$ contains the tree $T'' = T' - (X_m\setminus \{x\})$.
Indeed, let $F$ be the set of arcs $uv$ in $T_m$ that exist on a path from $y$ to a node in $(\L_{T_m}(y) \setminus \bigcup_{z \in Z}\L_{T_m}(z')) \cup Z$.  
%\textcolor{blue}{(If $y = \none$, instead let $F$ be the set of arcs on a path from the root to a node in $(\lblset(T_m) \setminus \bigcup_{z \in Z}\L_{T_m}(z')) \cup Z$.)}
As 
% $|\L_{T_m}(y) \setminus (\L_{T_m}(y') \cup  \L_{T_m}(z'))| \in O(d^2)$ 
$|\L_{T_m}(y) \setminus \bigcup_{z'\in Z}\L_{T_m}(z')| \leq 8(d+d')$, 
$|Z| \leq 4$
and $T_m$ is a binary tree, we have $|F| \leq 16(d+d')+8$.
For each $e \in F$, let $T_e$ be the tree obtained from $T_m$ by grafting $x$ onto the arc $e$ .
Let $P' = \{T_e: e \in F\}$.
%\textcolor{blue}{(If $y = \none$, also let $P'$ contain the tree $T_{\none})$ obtained from $T_m$ by grafting $x$ above $T_m$.)}

% We now show that for any $T' \in P^*$, there exists $T'' \in P'$ such that $T'' = T' - (X_m\setminus \{x\})$.
Let $T'$ be a tree in $P^*$ and consider $T'' = T' - (X_m\setminus \{x\})$. 
Note that $T'' - \{x\} = T' - X_m = T_m$.
Therefore $u_{T''}$ is well-defined for every node $u \in V(T_m)$, and every node in $T''$ is equal to $u_{T''}$ for some $u\in V(T_m)$, except for $x$ and its parent in $T''$.
So let $w$ be the parent of $x$ in $T''$, $u_{T''}$ the parent of $w$ in $T''$, and $v_{T''}$ the child of $w$ in $T''$ that is not $x$.
Observe that $T''$ can be obtained from $T_m$ by grafting $x$ onto the arc $uv$.
Then it is enough to show that $uv \in F$.

To see that $uv \in F$,
% First observe that as $x \in \L_{T'}(y_{T'}) \setminus (\L_{T'}(y'_{T'}) \cup  \L_{T'}(z'_{T'}))$ and a label is descen
first note that for each $z' \in \{y\} \cup Z$, $z'_{T''}$ is well-defined and $ (z'_{T''})_{T'} = z'_{T'}$ (see Lemma~\ref{lem:corrTransitive}).
Then as $x$ is descended from $(y_{T''})_{T'} = y_{T'}$ in $T'$, $x$ is descended from $y_{T''}$ in $T''$ (Lemma~\ref{lem:corrAncestor}).
Similarly, as $x$ is not descended from $(z'_{T''})_{T'} = z'_{T'}$ in $T'$ for any $z' \in  Z$, $x$ is not descended from $z_{T''}$ in $T''$.
Thus $x \in \L_{T''}(y_{T''}) \setminus \bigcup_{z' \in Z} \L_{T''}(z'_{T''})$.
It follows that $u_{T''}$ is a descendant of $y_{T''}$ in $T''$ (note that $y_{T''} \neq w$, as $w$ is  not the least common ancestor of any set of labels in $\lblset(T_m)$).
Also, $v_{T''}$ is not a descendant of  $z_{T''}$ for any $z' \in Z$, unless $v_{T''} \in \bigcup_{z\in Z} z_{T''}$, as otherwise $x$ would be a descendant of  such a $z_{T''}$.
Thus, $v_{T''}$ is either a member or an ancestor of  $(\L_{T''}(y_{T''}) \setminus \bigcup_{z' \in Z}\L_{T''}(z'_{T''}))) \cup \bigcup_{z' \in Z}z'_{T''}$.
It follows using Lemma~\ref{lem:corrAncestor} that $u$ is a descendant in $T_m$ of $y$, and $v$ is an ancestor of   $(\L_{T_m}(y) \setminus \bigcup_{z' \in Z}\L_{T_m}(z'))) \cup \bigcup_{z' \in Z}z'$.
Then $uv \in F$, as required.

\medskip

Now  that we have constructed our set $P'$, it remains to  find, for each $T_e \in P'$, every tree $T'$ on $\lblset$ such that $T' - (X_m\setminus \{x\}) = T_e$ and $T' - \{x\} = T_1 - \{x\}$. This will give us our set $P$, as for every $T' \in P^*$, $T' - (X_m\setminus \{x\})$ is a tree $T_e$ in $P'$, and  $T' - \{x\} = T_1 - \{x\}$.

Let $e = uv$, where $u,v \in V(T_m)$, and let $T_{1e}$ be the subtree of  $T_1 - \{x\}$  whose root is $v$, and has as its label set $v$ together with all labels in $X_m\setminus \{x\}$ descended from $u$ but not $v$. Then we have to try every way of adding $x$ into this tree. If $T_{1e}$ contains $t$ labels from $X_m$, then there are $2t-1$ places to try adding $x$.
Therefore $P$ will have at most $2|X_m| \leq 2(d+d')$ additional trees compared to $P'$, and so $|P| \leq 18(d+d')+8$.
Algorithm~\ref{alg:candidateTrees} gives the full procedure to construct $P$.

\begin{algorithm}
\caption{FPT algorithm to find candidate trees for $(T_1, T_i, x)$}
\begin{algorithmic}[1]
\Procedure{candidate-trees}{$T_1, T_2, x, d, d'$}
\Statex $T_1, T_2$ are two trees, $x$ is a label that cannot be moved in $T_2$ (but must be moved in $T_1$), $d$ is the maximum number of leaves we can remove in a tree, $d'$ is the maximum number of leaves we can move in $T_1$. 
For this pseudocode, every tree $T$ has a degree-$1$ root $r(T)$.
\State Find $X_m'$  such that $|X_m'| \leq d' + d - 1$ and $(T_1 - \{x\}) - X_m'  = (T_2 - \{x\}) - X_m'$  % \# we may assume $d_{LR}(T_1, T_i) \leq d' + d$
\State Set $X_m = X_m' \cup \{x\}$
\State Set $T_m = T_1 - X_m$
\State Set $(y,Z) = \textsc{location-restriction}(T_m, T_2, X_m, x, d, d')$ 
 \Comment{$y,Z$ are nodes in $T_m$ such that roughly speaking, we may assume $x$ must become a descendant of $y$ but not of any $z' \in Z$.}
\State Set $U = \{u\in V(T_m): u \in Z$ or $u$ is a leaf descended from $y$ but not from any $z'\in Z\}$
\State Set $F = \{uv \in E(T_m): uv$ is on a path from $y$ to $U$\} 
\Comment{$F$ is the set of  edges we could graft $x$ onto.}
% \State Set $P' = \emptyset$
% \For{$e \in F$}
%    \State Set $T_e$ = tree obtained  from $T_m$ by grafting $x$ on $e$
%    \State $P' = P' \cup \{T_e\}$
% \EndFor
% \State \# Our set $P$ will contain every tree $T'$ such that $T' - \{x\} = T_1 - \{x\}$ and $T' - (X_m\setminus \{x\})$ is in $P'$.
\State Set $P = \emptyset$ \Comment{Given $F$ we now begin constructing $P$.}
\State Set $T_1' = T_1 - \{x\}$
\For{$e = uv \in F$ with $u$ the parent of $v$} \Comment{Try grafting $x$ on $e$}
   \State Set $u_{T_1'} = $ the node in $T_1'$ corresponding to $u$
   \State Set $v_{T_1'} = $ the node in $T_1'$ corresponding to $v$
   \State Set $X_e= $ set of labels $l$ in $X_m\setminus \{x\}$ for which $l$ has an ancestor $v'$ in $T_1'$ with  $v'$ descended from $u_{T_1'}$, $v_{T_1'}$ descended from $v'$  %\textcolor{orange}{Is it $v_{T_1}'$ instead of $v_{T_1}$?  And, do we need to assume that $u$ is a parent of $v$ and not the other way around?}
   \State \Comment{$X_e$ is the set of leaves of $T_1$ for which we have to subdivide $e$.}
   \State Set $U = v_{T_1'} \cup X_e$ 
   \State Set $E_e = \{u'v' \in E(T_1'): u'v'$ is on a path from $u_{T_1'}$ to $U$\} 
   \For{$u'v' \in E_e$}
      \State Constuct $T'$ from $T_1'$ by grafting $x$ on $u'v'$ 
      \State Set $P = P \cup  \{T'\}$
   \EndFor
\EndFor
\State Return $P$
\EndProcedure
\end{algorithmic}\label{alg:candidateTrees}
\end{algorithm}

To analyze the complexity, recall that we find $X_m$, and therefore construct $T_m$ and $T_m'$, in $O(n \log n)$ time. 
As shown above, we can find the node $y$ and set $Z$ in $O(n)$ time.
Given $y$ and $Z$, the set of arcs $F$ can be found in $O(n)$ time using a depth-first search approach. For each $e \in F$ it takes $O(n)$ time to construct $T_e$, and so the construction of $P'$ takes $O(|F|n) = O((16(d+d')+8)n)$ time.
Finally, the construction of of $P$ from $P'$ takes $O(|P|n) = O((18(d+d')+8)n)$ time.
Putting it all together, we have that the construction of $P$ takes $O(n(\log n + 18(d+d')+8))$ time.
\qed
\end{proof}

We will call the set of trees $P$ described in Lemma~\ref{lem:dont-check-too-many-trees}
the set of \emph{candidate trees} for $(T_1, T_2, x)$.
%The proof \blue{of} Lemma~\ref{lem:dont-check-too-many-trees} is a bit involved, 
%and before presenting it we show that it does imply the correctness of the algorithm.

We are finally ready to give the proof of Theorem~\ref{thm:fpt-in-d}

\medskip
\noindent
 \textbf{Theorem~\ref{thm:fpt-in-d}}
 \emph{
 (restated).
\mastrld{} can be solved in time $O(c^d d^{3d}(n^3 + tn \log n))$, where  $c$ is a constant not depending on $d$ or $n$.
}

\begin{proof}

The outline for our algorithm is as follows. 
We employ a branch-and-bound algorithm, in which at each step we attempt to modify the input tree $T_1$ to become close to a solution.
We keep track of an integer $d'$, representing the maximum length of an LPR sequence between $T_1$ and a solution. Initially set $d' = d$.
At each step, if $d_{LR}(T_1,T_i) \leq d$ for each $T_i \in \T$ then $T_1$ is a solutioon, and we are done.
Otherwise, there must exist sime $T_i$ for which $d_{LR}(T_1,T_i)\geq d+d'$.
In this case, we calculate the $(d+d')$ disagreement kernel $S$  between $T_1$ and $T_i$ (using the procedure of Lemma~\ref{lem:disagreementKernel2}), and for each $x \in S$, attempt to construct a set $P$ of trees as in Lemma~\ref{lem:dont-check-too-many-trees}. For each $T' \in P$, we try replacing $T_1$ with $T'$, reducing $d'$ by $1$, and repeating the procedure. 
Algorithm~\ref{alg:fpt-d} describes the full procedure formally.

We claim that Algorithm~\ref{alg:fpt-d} is a correct algorithm for \mastrld, and runs in time $O(c^d d^{3d}(n^2 + tn \log n))$, for some constant $c$ not depending on $n$ or $d$.

First notice that if, in a leaf node of the branch tree created by Algorithm~\ref{alg:fpt-d}, a tree $T^*$ is returned, this occurs at line~\ref{line:returnsol} in which case it has been verified 
that $T^*$ is indeed a solution.
As an internal node of the branch tree returns a tree 
if and only if a child recursive call also returns a tree (the for loop on line~\ref{line:for-loop-P}), this shows that when the algorithm outputs a tree $T^*$, it is indeed a solution.

We next show that if a solution exists, then Algorithm~\ref{alg:fpt-d} will return one.
Suppose that $\T$ admits a solution, and let $T^*$ be a solution that minimizes $d_1 := d_{LR}(T_1, T^*)$,
with $d_1 \leq d'$.
We show that one leaf of the branch tree created by the algorithm returns $T^*$ (and thus the root of the branch tree also returns a solution, albeit not necessarily $T^*$).
This is done by proving that in one of the recursive calls made to 
{\sc mastrl-distance} on line~\ref{line:reccall}, 
the tree $T'$ obtained from $T_1$ satisfies $d_{LR}(T', T^*) = d_1 - 1$.
By applying this argument inductively, this shows that the algorithm will find $T^*$ 
at some node of depth $d_1$ in the branch tree of the algorithm.

%\textcolor{orange}{Is the above sentence formal enough?  Also, the algo has the potential to return many $T^*$, not just the one closest to $T_1$.  The set of $T^*$ explored by the algo is something like all the trees obtainable by minimal LPR moves.  Maybe we should make it formal sometime.}

First notice that since $d_{LR}$ is a metric, for each $T_i \in \T$,
$d_{LR}(T_1, T_i) \leq d_{LR}(T_1, T^*) + d_{LR}(T^*, T_i) \leq d' + d$, and so 
the algorithm will not return $FALSE$ on line~\ref{line:too-high-dist}.

If $T_1$ isn't a solution, then there is a tree of $\T$, say $T_2$ w.l.o.g., such that
$d_{LR}(T_1, T_2) > d$.
Notice that in this case, all the conditions of Lemma~\ref{lem:must-move-x} are satisfied, \textit{i.e.} 
$d_{LR}(T_1, T_2) > d$, and 
there are sets $X_1, X_2 \subseteq \lblset$ both of size at most $d$ such that 
$T_1 - X_1 = T^* - X_1$, $T_2 - X_2 = T^* - X_2$.  
Thus there is a minimal disagreement $X$ between $T_1$ and $T_2$, $|X| \leq d' + d$, 
and $x \in X$ such that $x \in X_1 \setminus X_2$.
By Lemma~\ref{lem:lpr-seq-equiv}, there is an LPR sequence $L = (x_1, \ldots, x_k)$ 
turning $T_1$ into $T^*$, where $\{x_1, \ldots, x_k\} = X_1$.
As $x \in X_1$, by Lemma~\ref{lem:lpr-order}, the leaves appearing in $L$ can be reordered, 
and we may assume that $x = x_1$.
Finally by Lemma~\ref{lem:dont-check-too-many-trees}, if $T'$ satisfies $d_{LR}(T', T^*) \leq d_1 - 1$ and $T'$ can be obtained from 
$T_1$ by an LPR move on $x$, then $T' \in P$.
As we are making one recursive call to {\sc mastrl-distance}for each tree in $P$, 
this proves that one such call replaces $T_1$ by $T'$ such that $d_{LR}(T', T^*) = d_1 - 1$.

%\textcolor{orange}{I changed the analysis on this paragraph here, but still have to check the 
%analysis of the paragraph afterwards.  }
As for the complexity,
recall from Lemma~\ref{lem:disagreementKernel2} that the  $(d + d')$-disagreement kernel $S$ computed in line 8 contains at most $8d^2$ labels.%$(d+d')^3 \leq 8d^3$ labels.
Therefore
when Algorithm~\ref{alg:fpt-d} enters the 'for' loop of line 9, it branches into at most $8d^2$ cases, one for each $x \in S$.
Within each of these cases, the algorithm enters at most $|P|$ recursive calls, each of which decrements $d'$.
As $|P| \leq 18(d+d')+8 \leq 36d + 8$ by Lemma~\ref{lem:dont-check-too-many-trees},
a single call of the algorithm splits into at most  $8d^2(36d + 8) = O(d^3)$,
%288d^3 + 64d^2$ calls, 
each of which decrements $d'$.
Therefore, the branching tree created by the algorithm has
degree at most $c d^3$ (for some constant $c$) and depth at most $d$, and so $O(c^d d^{3d})$ cases are considered.
 
%Previous version
% As for the complexity,
% recall from Lemma~\ref{lem:disagreementKernel} that the  $(d+d')$-disagreement kernel $S$ computed in line 8 contains at most $(d+d')^3 \leq 8d^3$ labels.
% Therefore
% when the algorithm enters the 'for' loop of line 9, it branches into at most $8d^3$ cases, one for each $x \in S$.
% Within each of these cases, the algorithm enters at most $|P|$ recursive calls, each of which decrements $d'$.
% As $|P| \leq 18(d+d')+8 \leq 36d + 8$ by Lemma~\ref{lem:dont-check-too-many-trees},
% a single call of the algorithm splits into at most  $8d^3(36d + 8) = O(d^4)$ calls, each of which decrements $d'$.
% Therefore, the branching tree created by the algorithm has
% degree $O(d^4)$ and depth at most $d$, and so $O(d^{4d})$ cases are considered.

As $d_{LR}(T_1, T_i)$ can be calculated in $O(n \log n)$ time for each $T_i$, a single call of lines 2-5 of the algorithm takes $O(tn\log n)$ time. A single call of lines 6-8 takes  $O(n^2)$ time by Lemma~\ref{lem:dont-check-too-many-trees}.
Thus the total time for all calls of lines 2-8 is $O(c^d d^{3d}n (n^2 + t \log n)$.
Each call of line 10 occurs just before a recursive call to the algorithm, as so line 10 is called at most 
$O(c^d d^{3d})$ times.
A single call of line 10 takes 
$O(n(\log n + 18(d + d') + 8)) = O(n(\log n + 36d)) $ time by Lemma~\ref{lem:dont-check-too-many-trees}, and so the total time for all calls of line 10 is
$O(c^d d^{3d}n(\log n + 36d))$.
Thus in total, we have that the running time of the algorithm is $O(c^d d^{3d} (n^2 + n (t \log n + 36d))$.
As we may assume $d\leq n$, this simplifies to $O(c^d d^{3d}(n^2 + tn \log n))$.
\qed
\end{proof}

\begin{algorithm}
\caption{FPT algorithm for parameter $d$.}\label{alg:fpt-d}
\begin{algorithmic}[1]
\Procedure{mastrl$-$distance}{$\T = (T_1, T_2, \ldots, T_t), d, d'$}
\Statex $\T$ is the set of input trees (represented as a sequence to distinguish $T_1$ from the other trees), $d$ is the maximum number of leaves we can remove in a tree, $d'$ is the maximum number of leaves we can move in $T_1$, which should be initially set to $d$.
\If{$d_{LR}(T_1, T_i) \leq d$ for each $T_i \in \T$} 
   \State Return $T_1$ \label{line:returnsol}
\ElsIf{there is some $T_i \in \T$ such that $d_{LR}(T_1, T_i) > d' + d$}
   \State Return FALSE   \#handles the $d' \leq 0$ case \label{line:too-high-dist}
\Else  \Comment{here we `guess' a leaf prune-and-regraft move on $T_1$}
   \State Choose $T_i \in \T$ such that $d_{LR}(T_1, T_i) > d$
  % \State Compute the $(d + d')$-disagreement kernel $S$ between $T_1$ and $T_i$
  \State Set $S = \textsc{disagreement-kernel}(d+d',T_1, T_i)$  \label{line:dis-kernel}
   \For{$x \in S$}  \Comment{we are `guessing' that $x$ should go where $T_i$ wants it.} \label{line:for-loop-P}
   	  %\State Compute the set $P$ of candidate trees for $(T_1, T_i, x)$ 
      \State Set $P = \textsc{candidate-trees}(T_1,T_i,x,d,d')$      \label{line:compute-P}
      \State $T^* = FALSE$
      \For{$T \in P$}
           \State $T' = \textsc{mastrl$-$distance}((T, T_2, \ldots, T_t), d, d' - 1)$ \label{line:reccall}
           \State \algorithmicif~$T'$ is not $FALSE$, let $T^* := T'$
      \EndFor
      \State Return $T^*$
    \EndFor
\EndIf
\EndProcedure
\end{algorithmic}
\end{algorithm}

\end{document}